		\newcommand*\patchAmsMathEnvironmentForLineno[1]{%
			\expandafter\let\csname old#1\expandafter\endcsname\csname #1\endcsname
			\expandafter\let\csname oldend#1\expandafter\endcsname\csname end#1\endcsname
			\renewenvironment{#1}%
			{\linenomath\csname old#1\endcsname}%
			{\csname oldend#1\endcsname\endlinenomath}%
		}%
		\newcommand*\patchBothAmsMathEnvironmentsForLineno[1]{%
			\patchAmsMathEnvironmentForLineno{#1}%
			\patchAmsMathEnvironmentForLineno{#1*}%
		}%
\newtheorem{example}{Example}
\newtheorem{lemma}{Lemma}
\newtheorem{remark}{Remark}
\newtheorem{corollary}{Corollary}
\newtheorem{definition}{Definition}
\newcommand{\sset}{\mathcal{S}}
\newcommand{\pr}{\textnormal{Pr}}
\newcommand*\circled[1]{\tikz[baseline=(char.base)]{
		\node[shape=circle,draw,inner sep=1pt] (char) {\scriptsize #1};}}
\definecolor{pi}{RGB}{255,221,244}
\DeclareMathOperator*{\argmax}{arg\,max}
\newcommand{\N}{\mathcal{N}}
\newcommand{\V}{\mathcal{V}}
\newcommand{\Rev}{\textsc{Rev}}
\title{Signaling in Posted Price Auctions}
\title{Signaling in Posted Price Auctions}
\author{
	Matteo Castiglioni, Giulia Romano, Alberto Marchesi, Nicola Gatti \\
}	
\begin{document}


\maketitle

\begin{abstract}
	 We study single-item single-unit \emph{Bayesian posted price auctions}, where buyers arrive sequentially and their valuations for the item being sold depend on a random, unknown state of nature.
	 The seller has complete knowledge of the actual state and can send signals to the buyers so as to disclose information about it.
	 For instance, the state of nature may reflect the condition and/or some particular features of the item, which are known to the seller only.
	 The problem faced by the seller is about \emph{how to partially disclose information about the state so as to maximize revenue}.
	 Unlike classical signaling problems, in this setting, the seller must also correlate the signals being sent to the buyers with some price proposals for them.
	 %
	 This introduces additional challenges compared to standard settings.
	 %
	 %
	 %
	 We consider two cases: the one where the seller can only send signals \emph{publicly} visible to all buyers, and the case in which the seller can \emph{privately} send a different signal to each buyer.
	 As a first step, we prove that, in both settings, the problem of maximizing the seller's revenue does \emph{not} admit an FPTAS unless $\mathsf{P}=\mathsf{NP}$, even for basic instances with a single buyer.
	 As a result, in the rest of the paper, we focus on designing PTASs.
	 In order to do so, we first introduce a unifying framework encompassing both public and private signaling, whose core result is a decomposition lemma that allows focusing on a finite set of possible buyers' posteriors.
	 This forms the basis on which our PTASs are developed.
	 In particular, in the public signaling setting, our PTAS employs some \emph{ad hoc} techniques based on linear programming,
	 %
	 while our PTAS for the private setting relies on the ellipsoid method to solve an exponentially-sized LP in polynomial time.
	 %
	 %
	 %
	 In the latter case, we need a custom approximate separation oracle, which we implement with a dynamic programming approach.
	 %
	 %
\end{abstract}

\section{Introduction}


In \emph{posted price auctions}, the seller tries to sell an item by proposing \emph{take-it-or-leave-it} prices to buyers arriving sequentially.
Each buyer has to choose between declining the offer---without having the possibility of coming back---or accepting it, thus ending the auction.
Nowadays, posted pricing is the most used selling format in e-commerce~\citep{einav2018auctions}, whose sales reach over \$4 trillion in 2020~\citep{eMarketer}.
Posted price auctions are ubiquitous in settings such as, for example, online travel agencies (\emph{e.g.}, \emph{Expedia}), accommodation websites (\emph{e.g.}, \emph{Booking.com}), and retail platforms (\emph{e.g.}, \emph{Amazon} and \emph{eBay}).
As a result, growing attention has been devoted to their analysis, both in economics~\citep{seifert2006posted} and in computer science~\citep{chawla2010multi,babaioff2015dynamic,babaioff2017posting,adamczyk2017sequential,correa2017posted}, within AI and machine learning in particular~\citep{kleinberg2003value,Shah2019Semi,romano2021online}.

We study \emph{Bayesian} posted price auctions, where the buyers' valuations for the item depend on a random state of nature, which is known to the seller only.
By applying the \emph{Bayesian persuasion} framework~\citep{kamenica2011bayesian}, we consider the case in which the seller (sender) can send signals to the buyers (receivers) so as to disclose information about the state. 
Thus, in a Bayesian auction, the seller does \emph{not} only have to decide price proposals for the buyers, but also \emph{how to partially disclose information about the state so as to maximize revenue}.
%
%
Our model finds application in several real-world scenarios.
For instance, in an e-commerce platform, the state of nature may reflect the condition (or quality) of the item being sold and/or some of its features.
These are known to the seller only since the buyers cannot see the item given that the auction is carried out on the web.

\paragraph{Original Contributions.}
We study the problem of maximizing seller's revenue in single-item single-unit Bayesian posted price auctions, focusing on two different settings: \emph{public signaling}, where the signals are publicly visible to all buyers, and \emph{private signaling}, in which the seller can send a different signal to each buyer through private communication channels.
As a first negative result, we prove that, in both settings, the problem does \emph{not} admit an FPTAS unless $\mathsf{P}=\mathsf{NP}$, even for basic instances with a single buyer.
Then, we provide tight positive results by designing a PTAS for each setting.
In order to do so, we first introduce a unifying framework encompassing both public and private signaling.
Its core result is a \emph{decomposition lemma} that allows us to focus on a finite set of buyers' posterior beliefs over states of nature---called $q$-uniform posteriors---, rather than reasoning about signaling schemes with a (potentially) infinite number of signals.
Compared to previous works on signaling, our framework has to deal with some additional challenges.
The main one is that, in our model, the seller (sender) is \emph{not} only required to choose how to send signals, but they also have to take some actions in the form of price proposals.
This requires significant extensions to standard approaches based on decomposition lemmas~\citep{ChengMixture2015,XuTractability2020,CastiglioniPersuading2021}.
%
%
The framework forms the basis on which we design our PTASs.
In the public setting, it establishes a connection between signaling schemes and probability distributions over $q$-uniform posteriors.
This allows us to formulate the seller's revenue-maximizing problem as an LP of polynomial size, whose objective coefficients are \emph{not} readily available. However, they can be approximately computed in polynomial time by an algorithm for finding approximately-optimal prices in (non-Bayesian) posted price auctions, which may also be of independent interest.
Solving the LP with approximate coefficients then gives the desired PTAS.
As for the private setting, our framework provides a connection between marginal signaling schemes of each buyer and probability distributions over $q$-uniform posteriors, which, to the best of our knowledge, is the first of its kind, since previous works are limited to public settings~\citep{ChengMixture2015,castiglioni2020Public}.\footnote{A notable exception is~\citep{CastiglioniPersuading2021}, which studies a specific case in between private and public signaling schemes.}
Such connection allows us to formulate an LP correlating marginal signaling schemes together and with price proposals.
Although the LP has an exponential number of variables, we show that it can still be approximately solved in polynomial time by means of the ellipsoid method.
This requires the implementation of a problem-specific \emph{approximate separation oracle} that can be implemented in polynomial time by means of a dynamic programming algorithm.

\paragraph{Related Works.}
The computational study of Bayesian persuasion has received terrific attention \citep{vasserman2015implementing,castiglioni2019persuading,rabinovich2015information,candogan2019persuasion,castiglioni2022bayesian, castiglioni2020signaling}.
The works most related to ours are those addressing second-price auctions.
\citet{emek2014signaling} provide an LP to compute an optimal public signaling scheme in the known-valuation setting, and they show that the problem is $\mathsf{NP}$-hard in the Bayesian setting.
\citet{ChengMixture2015} provide a PTAS for this latter case.
\citet{bacchio2022} extend the framework to study ad auctions with Vickrey–Clarke–Groves payments.
Finally, \citet{badanidiyuru2018targeting} focus on the design of algorithms whose running time is independent from the number of states of nature.	
They initiate the study of private signaling, showing that, 
in second-price auctions, it may introduce non-trivial equilibrium selection issues.

\section{Preliminaries}\label{sec:prelim}

%

\subsection{Bayesian Posted Price Auctions and Signaling}\label{subsec:signaling_auction}

In a \emph{posted price auction}, the seller tries to sell an item to a finite set $\N \coloneqq \{1,\dots,n\}$ of buyers arriving sequentially according to a fixed ordering.
W.l.o.g., we let buyer $i \in \N$ be the $i$-th buyer according to such ordering.
%
%
%
%
The seller chooses a price proposal $p_i \in [0,1]$ for each buyer $i \in \N$.
%
%
%
%
Then, each buyer in turn has to decide whether to buy the item for the proposed price or not.
Buyer $i \in \N$ buys only if their item valuation is at least the proposed price $p_i$.\footnote{As customary in the literature, we assume that buyers always buy when they are offered a price that is equal to their valuation.}
In that case, the auction ends and the seller gets revenue $p_i$ for selling the item, otherwise the auction continues with the next buyer.
%
%
%

We study \emph{Bayesian} posted price auctions, characterized by a finite set of $d$ states of nature, namely $\Theta \coloneqq \{ \theta_1, \ldots,\theta_d \}$.
%
Each buyer $i \in \N$ has a valuation vector $v_i \in [0,1]^d$, with $v_i(\theta)$ representing buyer $i$'s valuation when the state is $\theta \in \Theta$.
Each valuation $v_i$ is independently drawn from a probability distribution $\V_i$ supported on $[0,1]^d$.
For the ease of presentation, we let ${V} \in [0,1]^{n \times d}$ be the matrix of buyers' valuations, whose entries are ${V}(i,\theta) \coloneqq {v}_i(\theta)$ for all $i \in \mathcal{N}$ and $\theta \in \Theta$.\footnote{Sometimes, we also write $V_i \coloneqq v_i^\top$ to denote the $i$-th row of matrix $V$, which is the valuation of buyer $i \in \N$.}
Moreover, by letting $\V \coloneqq \{ \V_i \}_{i \in \N}$ be the collection of all distributions of buyers' valuations, we write $V \sim \V$ to denote that $V$ is built by drawing each $v_i$ independently from $\V_i$. 
%
%

We model signaling with the \emph{Bayesian persuasion} framework by~\citet{kamenica2011bayesian}.
We consider the case in which the seller---having knowledge of the state of nature---acts as a \emph{sender} by issuing signals to the buyers (the \emph{receivers}), so as to partially disclose information about the state and increase revenue. 
As customary in the literature, we assume that the state is drawn from a common prior distribution $\mu \in \Delta_{\Theta}$, explicitly known to both the seller and the buyers.\footnote{In this work, given a finite set $X$, we denote with $\Delta_X$ the ($|X|-1$)-dimensional simplex defined over the elements of $X$.}
We denote by $\mu_\theta$ the probability of state $\theta \in \Theta$.
The seller \emph{commits to a signaling scheme} $\phi$, which is a randomized mapping from states of nature to signals for the receivers.
Letting $\sset_i$ be the set of signals for buyer $i \in \N$, a signaling scheme is a function $\phi:\Theta\to\Delta_{\sset}$, where $\sset \coloneqq \bigtimes_{i \in \N} \sset_i$.
An element $s \in \sset$---called \emph{signal profile}---is a tuple specifying a signal for each buyer.
We use $s_i$ to refer to the $i$-th component of any $s \in \sset$ (\emph{i.e.}, the signal for buyer $i$), so that $s = (s_1, \ldots, s_n)$.
%
%
We let $\phi_\theta(s)$ be the probability of drawing signal profile $s\in \sset$ when the state is $\theta \in \Theta$.
%
%
Furthermore, we let $\phi_i : \Theta \to \Delta_{\sset_i}$ be the marginal signaling scheme of buyer $i \in \N$, with $\phi_i(\theta)$ being the marginalization of $\phi(\theta)$ with respect to buyer $i$'s signals.
%
As for general signaling schemes,
$\phi_{i,\theta}(s_i) \coloneqq \sum_{s' \in \sset: s'_i=s_i}\phi_{\theta}(s')$ denotes the probability of drawing signal $s_i \in \sset_i$ when the state is $\theta \in \Theta$.

%
Price proposals may depend on the signals being sent to the buyers.
Formally, the seller \emph{commits to a price function} $f: \sset \to [0,1]^n$, with $f(s) \in [0,1]^n$ being the price vector when the signal profile is $s \in \sset$.
%
%
We assume that prices proposed to buyer $i$ only depend on the signals sent to them, and \emph{not} on the signals sent to other buyers.
Thus, w.l.o.g., we can work with functions $f_i : \sset_i \to [0,1]$ defining prices for each buyer $i\in \N$ independently, with $f_i(s_i)$ denoting the $i$-th component of $f(s)$ for all $s \in \sset$ and $i \in \N$.\footnote{Let us remark that our assumption on the seller's price function ensures that a buyer does \emph{not} get additional information about the state of nature by observing the proposed price, since the latter only depends on the signal which is revealed to them anyway.}
%
%
%
%

\begin{figure}[!htp]
	\centering
	\includegraphics[width=0.47\textwidth]{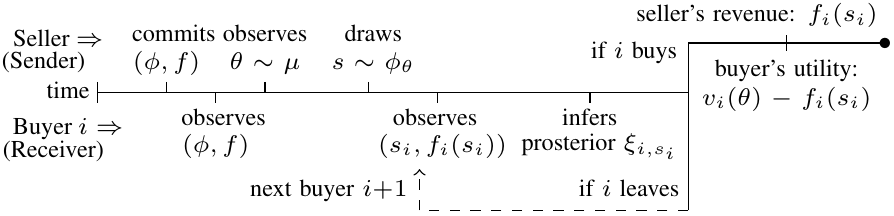}
	\caption{Interaction between the seller and the buyers.}
	\label{fig:inter}
\end{figure}

%
The interaction involving the seller and the buyers goes on as follows (Figure~\ref{fig:inter}): (i) the seller commits to a signaling scheme $\phi: \Theta \to\Delta_{\sset}$ and a price function $f: \sset \to [0,1]^n$, and the buyers observe such commitments; (ii) the seller observes the state of nature $\theta \sim \mu$; (iii) the seller draws a signal profile $s \sim \phi(\theta)$; and (iv) the buyers arrive sequentially, with each buyer $i \in \N$ observing their signal $s_i$ and being proposed price $f_i(s_i)$.
Then, each buyer rationally updates their prior belief over states according to Bayes rule, and buys the item only if their expected valuation for the item is greater than or equal to the offered price.
The interaction terminates whenever a buyer decides to buy the item or there are no more buyers arriving.
The following paragraph formally defines the elements involved in step (iv).

\paragraph{Buyers' Posteriors.}
In step (iv),
%
%
%
%
a buyer $i \in \N$ receiving a signal $s_i \in \sset_i$ infers a posterior belief over states (also called \emph{posterior}), which we denote by $\xi_{i,s_i} \in \Delta_{\Theta}$, with $\xi_{i,s_i}(\theta)$ being the posterior probability of state $\theta \in \Theta$.
%
%
Formally,
\begin{equation}\label{eq:posterior}
	\xi_{i,s_i}(\theta) \coloneqq \frac{{\mu}_\theta {\phi}_{i, \theta}(s_i) }{\sum_{\theta'\in\Theta}{\mu}_{\theta'} {\phi}_{i, \theta'}(s_i) } .
\end{equation}
Thus, after receiving signal $s_i \in \sset_i$, buyer $i$'s expected valuation for the item is $\sum_{\theta \in \Theta} v_i(\theta) \, \xi_{i,s_i}(\theta)$, and the buyer buys it only if such value is at least as large as the price $f_i(s_i)$. 
In the following, given a signal profile $s \in \sset$, we denote by $\xi_s$ a tuple defining all buyers' posteriors resulting from observing signals in $s$; formally, $\xi_s \coloneqq ({\xi}_{1, s_1}, \ldots, {\xi}_{n, s_n})$.
%
%
%
%
%
%

\paragraph{Distributions on Posteriors.}
In single-receiver Bayesian persuasion models, it is oftentimes useful to represent signaling schemes as convex combinations of the posteriors they can induce.
In our setting, a marginal signaling scheme $\phi_i: \Theta \to \Delta_{\sset_i}$ of buyer $i \in \N$ \emph{induces} a probability distribution $\gamma_i$ over posteriors in $\Delta_{\Theta}$, with $\gamma_i(\xi_i)$ denoting the probability of posterior $\xi_i \in \Delta_{\Theta}$.
Formally, it holds that 
\[
{\gamma}_i({\xi}_i)  \coloneqq \sum_{s_i \in \mathcal{S}_i:  {\xi}_{i,s_i} = {\xi}_{i} } \sum_{\theta \in \Theta} {\mu}_\theta {\phi}_{i, \theta}(s_i) .
\]
Intuitively, $\gamma_i({\xi}_i)$ denotes the probability that buyer $i$ has posterior $\xi_i$. 
Indeed, it is possible to directly reason about distributions $\gamma_i$ rather than marginal signaling schemes, provided that such distributions are \emph{consistent} with the prior.
Formally, by letting $\text{supp}(\gamma_i) \coloneqq \{\xi_i \in \Delta_{\Theta} \mid \gamma_i(\xi_i) > 0 \}$ be the support of $\gamma_i$, it must be required that
\begin{equation}\label{eq:w_consistent}
	\sum_{\xi_i \in \text{supp}(\gamma_i)} \gamma_i(\xi_i) \, \xi_i(\theta)=\mu_\theta \quad \forall \theta \in \Theta.
\end{equation}

\subsection{Computational Problems}\label{subsec:problem_definition}
We focus on the problem of computing a signaling scheme $\phi: \Theta \to \Delta_{\sset}$ and a price function $f: \sset \to [0,1]^n$ that maximize the seller's expected revenue, considering both public and private signaling settings.\footnote{Formally, a signaling scheme $\phi : \Theta \to \Delta_{\sset}$ is \emph{public} if: (i) $\mathcal{S}_i = \mathcal{S}_j$ for all $i, j \in \mathcal{N}$; and (ii) for every $\theta \in \Theta$, ${\phi}_\theta(s) > 0$ only for signal profiles $s \in \mathcal{S}$ such that $s_i = s_j$ for $i, j \in \mathcal{N}$. Since, given a signal profile $s \in \sset$, under a public signaling scheme all the buyers always share the same posterior (\emph{i.e.}, $\xi_{i,s_i} = \xi_{j,s_j}$ for all $i,j \in \N$), we overload notation and sometimes use $\xi_s \in \Delta_{\Theta}$ to denote the unique posterior appearing in $\xi_s = (\xi_{1,s_1,},\ldots,\xi_{n,s_n})$. Similarly, in the public setting, given a posterior $\xi \in \Delta_{\Theta}$ we sometimes write $\xi$ in place of a tuple of $n$ copies of $\xi$.}

We denote by $\textsc{Rev}(\mathcal{V},{p}, \xi)$ the expected revenue of the seller when the distributions of buyers' valuations are given by $\V = \{ \V_i \}_{i \in \N}$, the proposed prices are defined by the vector $p \in  [0,1]^n$, and the buyers' posteriors are those specified by the tuple $\xi = (\xi_1,\ldots,\xi_n)$ containing a posterior $\xi_i \in \Delta_{\Theta}$ for each buyer $i \in \N$.
Then, the seller's expected revenue is:
\[
\sum_{\theta \in \Theta} {\mu}_\theta \sum_{s \in \mathcal{S}} {\phi}_\theta(s) \textsc{Rev} \left( \mathcal{V},f(s), {\xi}_s \right).
\]
In the following, we denote by $OPT$ the value of the seller's expected revenue for a revenue-maximizing $(\phi,f)$ pair. 

In this work, we assume that algorithms have access to a black-box oracle to sample buyers' valuations according to the probability distributions specified by $\V$ (rather than actually knowing such distributions).
Thus, we look for algorithms that output pairs $(\phi,f)$ such that
\[
	\mathbb{E} \left[ \sum_{\theta \in \Theta} {\mu}_\theta \sum_{s \in \mathcal{S}} {\phi}_\theta(s) \textsc{Rev}(\mathcal{V},f(s), {\xi}_s) \right] \geq OPT - \lambda,
\]
where $\lambda \geq 0$ is an additive error.
Notice that the expectation above is with respect to the randomness of the algorithm, which originates from using the black-box sampling oracle.

\section{Hardness of Signaling with a Single Buyer}\label{sec:hardness}

We start with a negative result: there is no FPTAS for the problem of computing a revenue-maximizing $(\phi,f)$ pair unless $\mathsf{P}=\mathsf{NP}$, in both public and private signaling settings.
Our result holds even in the basic case with only one buyer, where public and private signaling are equivalent.
Notice that, in the reduction that we use to prove our result, we assume that the support of the distribution of valuations of the (single) buyer is finite and that such distribution is perfectly known to the seller.
This represents an even simpler setting than that in which the seller has only access to a black-box oracle returning samples drawn from the buyer's distribution of valuations.
%
%
%
The result formally reads as follows:

\begin{restatable}{theorem}{hardness}\label{thm:hardness}
	There is no additive FPTAS for the problem of computing a revenue-maximizing $(\phi,f)$ pair unless $\mathsf{P}=\mathsf{NP}$, even when there is a single buyer.
	%
\end{restatable}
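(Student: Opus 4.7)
The plan is to rule out an additive FPTAS via a gap-preserving reduction from an $\mathsf{NP}$-hard decision problem, engineered so that the optimal seller's revenue differs by at least an inverse-polynomial additive quantity between YES and NO instances. Once such a gap is in place, running a hypothetical additive FPTAS with error $\lambda$ equal to, say, one third of the gap would distinguish the two cases in time polynomial in the input length and in $1/\lambda$, collapsing $\mathsf{P}=\mathsf{NP}$.

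For the reduction itself, I would start from a standard $\mathsf{NP}$-complete problem such as $\textsc{3-Sat}$, $\textsc{Independent-Set}$, or $\textsc{Vertex-Cover}$. The states of nature $\Theta$ would encode the ground elements of the combinatorial instance (e.g., variables, vertices, or literal-clause incidences), with a prior $\mu$ whose entries are polynomially-sized rationals. Exploiting the freedom granted by the remark preceding the statement, namely that the buyer's valuation distribution may be finite and explicitly known, the support of $\V_1$ would consist of carefully chosen $\{0,1\}$-valued vectors $v \in \{0,1\}^d$ encoding the combinatorial constraints, each appearing with a polynomially-sized probability. Under this construction, after receiving signal $s$ and forming posterior $\xi_s$, a buyer of type $v$ accepts price $p$ if and only if $\langle v,\xi_s\rangle \geq p$, so $\Rev(\V,p,\xi_s)$ represents a weighted fractional version of the combinatorial objective. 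Choosing a signaling scheme together with prices then corresponds to selecting a distribution over posteriors (consistent with $\mu$) together with an induced ``subset selection'' at each posterior, and the overall expected revenue should equal, up to an explicit additive normalization, the optimum of the combinatorial problem.

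The delicate part, and what I expect to be the main obstacle, is ensuring that the gap in $OPT$ between YES and NO instances is at least inverse-polynomial rather than exponentially small. This requires keeping all numbers involved (prior entries, valuations, and valuation-type probabilities) bounded below by $1/\mathrm{poly}(n)$, so that the achievable revenues lie on a grid of rationals with polynomially-bounded common denominators and minimum nonzero spacing $1/\mathrm{poly}(n)$. Completeness (YES instances attain the high threshold) is typically shown by exhibiting an explicit signaling scheme constructed from a satisfying assignment or a maximum independent set and computing its revenue. Soundness (NO instances cannot exceed the low threshold) requires arguing that any signaling scheme improving upon the bound would implicitly certify a solution to the underlying combinatorial problem; this is the step that tends to require the most care, because the seller's simultaneous optimization over signals and prices allows for strategies with no direct combinatorial counterpart. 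A Bayes-plausibility argument (along the lines of the decomposition mentioned in the introduction) is usually needed to reduce to a finite set of candidate posteriors, after which the combinatorial structure can be read off and the gap verified.
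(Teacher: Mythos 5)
Your high-level template---a gap-preserving reduction to an inverse-polynomially separated optimization problem, ruling out an additive FPTAS---is the right shape, and you correctly flag that the hard part is engineering an inverse-polynomial gap in $OPT$. But several concrete ideas that make the paper's proof work are absent, and one of your proposed steps is pointed in the wrong direction.

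First, the source problem matters a great deal. The paper does not reduce from a plain decision problem like $\textsc{3-Sat}$ or $\textsc{Independent-Set}$; it reduces from a specific \emph{promise} problem of~\citet{KhotHarness2013} (distinguish a graph having a balanced $l$-colorable induced subgraph covering a $1-\epsilon$ fraction of vertices from one whose largest independent set is tiny). This promise structure is what lets both directions go through: completeness follows directly from the coloring, and soundness reduces to exhibiting a large independent set, precisely what Case~2 forbids. Starting from a plain $\mathsf{NP}$-complete decision problem, you would still need to manufacture a two-sided gap, and your plan gives no mechanism for that; simply keeping all numbers polynomially bounded does not by itself separate the YES and NO optima by $1/\mathrm{poly}$.

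Second, your valuation construction is underspecified in a way that misses the paper's key encoding gadget. The paper does not use $\{0,1\}$-valued valuation vectors: for each vertex $u$ it sets $v_u(\theta_u)=1$, $v_u(\theta_{u'})=\tfrac12$ for non-neighbors $u'$, and $v_u(\theta_{u'})=0$ for neighbors $u'$, each occurring with probability $1/m^2$, and---crucially---adds a ``background'' valuation $v_o$ equal to the constant $p_o = \tfrac12 + \tfrac{l}{2(1-\epsilon)m}$ occurring with probability $1-1/m$. The three-level $\{0,\tfrac12,1\}$ structure is exactly what makes mixing two adjacent vertices in a posterior push the expected valuation to at most $\tfrac12$, while mixing only non-adjacent vertices keeps it above $\tfrac12$. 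The high-probability constant valuation $v_o$ then pins the optimal price to essentially $p_o$: any price above $p_o$ loses the bulk of the revenue, so the seller's simultaneous optimization over prices is neutralized. Your plan has no analogue of either device, so the soundness direction would be stuck.

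Third, your proposed soundness route---invoke a Bayes-plausibility/decomposition argument to reduce to finitely many candidate posteriors and then ``read off'' the combinatorial structure---is not how the paper proceeds, and it would not give what you need. The decomposition lemma is an \emph{algorithmic} tool used for the PTASs (Theorem~\ref{thm:decomposition} and its corollary): it trades a small revenue loss for a restricted posterior set, which is useless when you are trying to \emph{lower bound} the loss from any signaling scheme. The paper's soundness argument is much more direct: if total revenue exceeds $\eta-\delta$ then by averaging some single signal contributes more than $\eta-\delta$; the associated price must lie in $(\tfrac12, p_o]$ (lower bound from the revenue, upper bound from $v_o$); the set of vertex-types that buy under that signal is large; and any two such vertices must be non-adjacent, because otherwise the posterior would drive the buyer's expected value below $\tfrac12 < p^*$. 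That last contradiction is the combinatorial heart of the proof and is where your plan is genuinely missing an idea.
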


\section{Unifying Public and Private Signaling}\label{sec:general}

In this section, we introduce a general mathematical framework related to buyers' posteriors and distributions over them, proving some results that will be crucial in the rest of this work, both in public and private signaling scenarios.

One of the main difficulties in computing sender-optimal signaling schemes is that they might need a (potentially) infinite number of signals, resulting in infinitely-many receiver's posteriors.
The trick commonly used to circumvent this issue in settings with a finite number of valuations is to use direct signals, which explicitly specify action recommendations for each receiver's valuation~\cite{castiglioni2020online,castiglioni2021multi}.
However, in our auction setting, this solution is \emph{not} viable, since a direct signal for a buyer $i \in \N$ should represent a recommendation for every possible $v_i \in [0,1]^d$, and these are infinitely many.
An alternative technique, which can be employed in our setting, is to restrict the number of possible posteriors.
%

Our core idea is to focus on a small set of posteriors, which are those encoded as particular $q$-uniform probability distributions, as formally stated in the following definition.\footnote{In all the definitions and results of this section (Section~\ref{sec:general}), we denote by $\xi\in \Delta_{\Theta}$ a generic posterior common to all the buyers
	 and with $\gamma$ a probability distribution over $\Delta_{\Theta}$ (\emph{i.e}, over posteriors).}
\begin{definition}[$q$-uniform posterior]\label{def:s_distribution}
	A posterior $\xi \in \Delta_{\Theta}$ is {\em $q$-uniform} if it can be obtained by averaging the elements of a multiset defined by $q\in \mathbb{N}_{>0}$ canonical basis vectors of $\mathbb{R}^d$.
\end{definition}

We denote the set of all $q$-uniform posteriors as $\Xi^q \subset \Delta_{\Theta}$.
%
Notice that the set $\Xi^q$ has size $|\Xi^q| = O \left( d^q \right)$.

The existence of an approximately-optimal signaling scheme that only uses $q$-uniform posteriors is usually proved by means of so-called \emph{decomposition lemmas} (see~\citep{ChengMixture2015,XuTractability2020,CastiglioniPersuading2021}).
The goal of these lemmas is to show that, given some signaling scheme encoded as a distribution over posteriors, it is possible to obtain a new signaling scheme whose corresponding distribution is supported only on $q$-uniform posteriors, and such that the sender's utility only decreases by a small amount.
At the same time, these lemmas must also ensure that the distribution over posteriors corresponding to the new signaling scheme is still consistent (according to Equation~\eqref{eq:w_consistent}).

The main result of our framework (Theorem~\ref{thm:decomposition}) is a decomposition lemma that is suitable for our setting.
Before stating the result, we need to introduce some preliminary definitions.
\begin{definition}[$(\alpha,\epsilon)$-decreasing distribution]\label{def:perturbation}
	Let $\alpha, \epsilon > 0$. A probability distribution $\gamma$ over $\Delta_{\Theta}$ is {\em $(\alpha,\epsilon)$-decreasing around a given posterior $\xi \in \Delta_{\Theta}$} if the following condition holds for every matrix $V \in [0,1]^{n \times d}$ of buyers' valuations:
	\[
		\textnormal{Pr}_{\tilde\xi \sim \gamma} \left\{ V_i  \tilde\xi \ge  V_i \xi -\epsilon \right\} \ge 1-\alpha \quad \forall i \in \N.
	\]
\end{definition}
Intuitively, a probability distribution $\gamma$ as in Definition~\ref{def:perturbation} can be interpreted as a perturbation of the given posterior $\xi$ such that, with high probability, buyers' expected valuations in $\gamma$ are at most $\epsilon$ less than those in posterior $\xi$.\footnote{Definition~\ref{def:perturbation} is similar to analogous ones in the literature~\citep{XuTractability2020,CastiglioniPersuading2021}, where the distance is usually measured in both directions, as $|   V_i \tilde\xi -  V_i \xi | \leq\epsilon$. We look only at the direction of decreasing values, since in a our setting, if a buyer's valuation increases, then the seller's revenue also increases.}

The second definition we need is about functions mapping vectors in $[0,1]^n$---defining a valuation for each buyer---to seller's revenues.
For instance, one such function could be the seller's revenue given price vector $p \in [0,1]^n$.
In particular, we define the stability of a function $g$ compared to another function $h$.
Intuitively, $g$ is stable compared to $h$ if the value of $g$, in expectation over buyers' valuations and posteriors drawn from a probability distribution $\gamma$ that is $(\alpha,\epsilon)$-decreasing around $\xi$, is ``close'' to the the value of $h$ given $\xi$, in expectation over buyers' valuations.\footnote{The notion of compared stability has been already used~\citep{ChengMixture2015,CastiglioniPersuading2021}. However, previous works consider the case in which $g$ is a relaxation of $h$. Instead, our definition is conceptually different, as $g$ and $h$ represent two different functions corresponding to different price vectors of the seller.}
Formally:
\begin{definition}[$(\delta,\alpha,\epsilon)$-stability]\label{def:stability}
	Let $\alpha, \epsilon, \delta  > 0$. Given a posterior $\xi \in \Delta_{\Theta}$, some distributions $\V = \{ \V_i \}_{i \in \N}$, and two functions $g, h: [0,1]^n \to [0,1]$, $g$ is {\em $(\delta,\alpha,\epsilon)$-stable compared to $h$ for $(\xi,\V)$} if, for every probability distribution $\gamma$ over $\Delta_{\Theta}$ that is $(\alpha,\epsilon)$-decreasing  around $\xi$, it holds:
	\[
		\mathbb{E}_{\tilde \xi \sim \gamma,  V \sim \V} \Big[ g( V \tilde\xi) \Big] \geq (1-\alpha)\mathbb{E}_{ V \sim \V}   \Big[ h(  V  \xi) \Big]-\delta\epsilon.
	\]
\end{definition}

Now, we are ready to state our main result.
We show that, for any buyer's posterior $\xi \in \Delta_{\Theta}$, if a function $g$ is stable compare to $h$, then there exists a suitable probability distribution over $q$-uniform posteriors such that the expected value of $g$ given such distribution is ``close'' to that of $h$ given $\xi$.

\begin{restatable}{theorem}{decomposition}\label{thm:decomposition}
	Let $\alpha, \epsilon, \delta > 0$, and set $q \coloneqq \frac{32}{\epsilon^2} \log \frac{4}{\alpha}  $.
	Given a posterior $\xi \in \Delta_{\Theta}$, some distributions $\V = \{ \V_i \}_{i \in \N}$, and two functions $g, h: [0,1]^n \to [0,1]$, if $g$ is $(\delta,\alpha,\epsilon)$-stable compared to $h$ for $(\xi,\V)$, then there exists $\gamma \in \Delta_{\Xi^q}$ such that, for every $\theta \in \Theta$, $\sum_{\tilde\xi \in \textnormal{supp}(\gamma)} \gamma(\tilde\xi) \tilde\xi(\theta) = \xi(\theta)$ and
	%
	\begin{equation}\label{eq:general_bound}
	\mathbb{E}_{\substack{\tilde\xi \sim \gamma \\ V \sim \V}} \Big[ \tilde\xi(\theta) g( V \tilde\xi ) \Big]\hspace{-1mm} \geq \hspace{-1mm} \xi(\theta)\hspace{-1mm} \left[ \hspace{-0.5mm} (1\hspace{-1mm} -\hspace{-0.5mm} \alpha)  \mathbb{E}_{V \sim \V}  \Big[ h(  V \xi ) \Big] \hspace{-1mm} - \hspace{-0.5mm} \delta \epsilon \right] \hspace{-1mm}.\hspace{-0.5mm}
    \end{equation}
	%
\end{restatable}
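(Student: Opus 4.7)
The plan is to build $\gamma \in \Delta_{\Xi^q}$ by empirical averaging: sample $\theta_1,\ldots,\theta_q$ i.i.d.\ from $\xi$ and set $\tilde\xi := \tfrac{1}{q}\sum_{j=1}^q e_{\theta_j}$, where $e_\theta \in \mathbb{R}^d$ is the canonical basis vector indexed by $\theta \in \Theta$. Then $\tilde\xi \in \Xi^q$ by construction, and linearity of expectation yields the consistency identity $\mathbb{E}_{\tilde\xi \sim \gamma}[\tilde\xi(\theta)] = \xi(\theta)$ for every $\theta \in \Theta$.

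The first step toward~\eqref{eq:general_bound} is a symmetry rewrite of the left-hand side. Writing $\tilde\xi(\theta) = \tfrac{1}{q}\sum_j \mathbf{1}[\theta_j = \theta]$ and using the exchangeability of $\theta_1,\ldots,\theta_q$, one obtains
\[
    \mathbb{E}_{\tilde\xi \sim \gamma,\, V \sim \V}\bigl[\tilde\xi(\theta)\, g(V\tilde\xi)\bigr] \,=\, \xi(\theta)\cdot\mathbb{E}_{V \sim \V,\, \tilde\xi \sim \gamma^\theta}\bigl[g(V\tilde\xi)\bigr],
\]
where $\gamma^\theta$ denotes the conditional law of $\tilde\xi$ given $\theta_1 = \theta$. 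Under $\gamma^\theta$ we have the decomposition $\tilde\xi = \tfrac{1}{q}e_\theta + \tfrac{q-1}{q}\tilde\xi'$, with $\tilde\xi'$ a $(q-1)$-uniform posterior formed from $q-1$ fresh i.i.d.\ draws from $\xi$.

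The second and principal step is to verify that $\gamma^\theta$ is $(\alpha,\epsilon)$-decreasing around $\xi$ in the sense of Definition~\ref{def:perturbation}, which is exactly the hypothesis required to invoke $(\delta,\alpha,\epsilon)$-stability of $g$ compared to $h$. Fix any $V \in [0,1]^{n\times d}$ and $i \in \N$. From the decomposition,
\[
    V_i\tilde\xi - V_i\xi \,=\, \tfrac{1}{q}\bigl(V_i(\theta) - V_i\xi\bigr) + \tfrac{q-1}{q}\bigl(V_i\tilde\xi' - V_i\xi\bigr) \,\geq\, -\tfrac{1}{q} + \tfrac{q-1}{q}\bigl(V_i\tilde\xi' - V_i\xi\bigr),
\]
since $V_i(\theta),V_i\xi \in [0,1]$. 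Hoeffding's inequality applied to the average of $q-1$ bounded i.i.d.\ terms defining $V_i\tilde\xi'$ (each in $[0,1]$ with mean $V_i\xi$) gives
\[
    \textnormal{Pr}\bigl\{V_i\tilde\xi' < V_i\xi - \tfrac{\epsilon}{2}\bigr\} \,\leq\, 2\exp\bigl(-\tfrac{(q-1)\epsilon^2}{2}\bigr) \,\leq\, \alpha,
\]
for the stated choice $q = \tfrac{32}{\epsilon^2}\log\tfrac{4}{\alpha}$; this same $q$ also guarantees $\tfrac{1}{q} \leq \tfrac{\epsilon}{2}$, so on the complementary event (of probability at least $1-\alpha$) we conclude $V_i\tilde\xi \geq V_i\xi - \epsilon$, as required.

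The proof concludes by chaining these ingredients: applying the stability hypothesis to $\gamma^\theta$ yields $\mathbb{E}_{V,\,\tilde\xi \sim \gamma^\theta}[g(V\tilde\xi)] \geq (1-\alpha)\mathbb{E}_{V \sim \V}[h(V\xi)] - \delta\epsilon$, and multiplying through by $\xi(\theta)$ reproduces~\eqref{eq:general_bound}. The main obstacle is the second step: conditioning on $\theta_1 = \theta$ biases $\tilde\xi$ toward state $\theta$, so a naive Hoeffding bound applied directly to $\tilde\xi$ would not go through; the key observation is that this bias is a deterministic shift of size $1/q$ that can be absorbed into the $\epsilon$ tolerance once $q$ is large enough, after which the genuinely i.i.d.\ remainder $\tilde\xi'$ concentrates uniformly over all matrices $V \in [0,1]^{n\times d}$ via a single Hoeffding estimate per buyer.
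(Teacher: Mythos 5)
Your proof is correct, and it takes a genuinely different and cleaner route than the paper's.

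Both proofs define the same $\gamma$ as the law of the empirical average $\tilde\xi = \tfrac{1}{q}\sum_{j}e_{\theta_j}$ of $q$ i.i.d.\ samples from $\xi$, and both reduce \eqref{eq:general_bound} to showing that a certain conditional distribution is $(\alpha,\epsilon)$-decreasing so that the stability hypothesis can be invoked. The difference is \emph{what you condition on}. The paper conditions on the \emph{count} of state $\theta$ among the $q$ samples (i.e., on the event $\tilde\xi(\theta)=j/q$), forming a family of conditional distributions $\gamma^{\theta,j}$; because this conditioning biases the empirical frequencies in a non-i.i.d.\ way, the paper needs two auxiliary concentration lemmas (essentially Lemmas~5 and~6 from \citet{CastiglioniPersuading2021}) to bound the tail mass over far-away $j$'s and to establish the $(\alpha/2,\epsilon)$-decreasing property for near-typical $j$'s, and then pays a factor $(1-\alpha/2)^2 \geq 1-\alpha$ at the end. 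Your proof instead conditions on the \emph{identity of the first sample}, $\theta_1 = \theta$, using exchangeability to rewrite $\mathbb{E}[\tilde\xi(\theta)\,g(V\tilde\xi)] = \xi(\theta)\,\mathbb{E}_{\gamma^\theta}[g(V\tilde\xi)]$. This conditioning leaves the remaining $q-1$ samples fully i.i.d., so a single Hoeffding application per buyer suffices, with the $O(1/q)$ deterministic shift coming from the fixed coordinate absorbed into half the $\epsilon$-budget. This avoids the paper's intermediate lemmas entirely and gets the full $(1-\alpha)$ factor directly. The trade-off is essentially aesthetic: the paper's proof reuses machinery from prior work on signaling decompositions, which keeps their result visibly in the same lineage as \citet{CastiglioniPersuading2021} and \citet{XuTractability2020}, whereas your argument is more self-contained and slightly sharper in the constants it would support. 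One small omission to note: you should say a word about the degenerate case $\xi(\theta)=0$, where $\gamma^\theta$ is ill-defined but \eqref{eq:general_bound} holds trivially since both sides vanish.
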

The crucial feature of Theorem~\ref{thm:decomposition} is that Equation~\eqref{eq:general_bound} holds for every state.
This is fundamental for proving our results in the private signaling scenario.
On the other hand, with public signaling, we will make use of the following (weaker) corollary, obtained by summing Equation~\eqref{eq:general_bound} over all $\theta \in \Theta$.

\begin{corollary}\label{lm:decompositionweak}
	Let $\alpha, \epsilon, \delta > 0$, and set $q \coloneqq \frac{32}{\epsilon^2}  \log \frac{4}{\alpha} $.
	Given a posterior $\xi \in \Delta_{\Theta}$, some distributions $\V = \{ \V_i \}_{i \in \N}$, and two functions $g, h: [0,1]^n \to [0,1]$, if $g$ is $(\delta,\alpha,\epsilon)$-stable compared to $h$ for $(\xi,\V)$, then there exists $\gamma \in \Delta_{\Xi^q}$ such that, for every $\theta \in \Theta$, $\sum_{\tilde\xi \in \textnormal{supp}(\gamma)} \gamma(\tilde\xi) \tilde\xi(\theta) = \xi(\theta)$ and
	%
	\begin{equation}\label{eq:general_boundweak}
	\mathbb{E}_{\tilde\xi \sim \gamma, V \sim \V} \Big[ g( V \tilde\xi ) \Big] \geq (1-\alpha)  \mathbb{E}_{V \sim \V}  \Big[ h(  V \xi) \Big]-\delta \epsilon .
	\end{equation}
	%
	%
\end{corollary}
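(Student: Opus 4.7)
The plan is to derive the corollary directly from Theorem~\ref{thm:decomposition} by summing the per-state inequality over $\theta \in \Theta$, as hinted in the text. First I would invoke Theorem~\ref{thm:decomposition} with the same parameters $\alpha,\epsilon,\delta$ and the same value of $q \coloneqq \tfrac{32}{\epsilon^2}\log \tfrac{4}{\alpha}$ to obtain a distribution $\gamma \in \Delta_{\Xi^q}$ satisfying consistency ($\sum_{\tilde\xi \in \textnormal{supp}(\gamma)} \gamma(\tilde\xi) \tilde\xi(\theta) = \xi(\theta)$ for every $\theta \in \Theta$) and the state-wise revenue bound in Equation~\eqref{eq:general_bound}. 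The consistency condition carries over verbatim, so only the second claim requires work.

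Next I would sum the state-wise inequality over all $\theta \in \Theta$. On the left-hand side, linearity of expectation together with the fact that $\tilde\xi$ is a probability distribution over $\Theta$ gives
\begin{equation*}
\sum_{\theta \in \Theta} \mathbb{E}_{\tilde\xi \sim \gamma, V \sim \V}\!\Big[\tilde\xi(\theta)\, g(V\tilde\xi)\Big] = \mathbb{E}_{\tilde\xi \sim \gamma, V \sim \V}\!\left[\Big(\sum_{\theta \in \Theta}\tilde\xi(\theta)\Big) g(V\tilde\xi)\right] = \mathbb{E}_{\tilde\xi \sim \gamma, V \sim \V}\!\Big[g(V\tilde\xi)\Big].
\end{equation*}
On the right-hand side, pulling out the factor $\xi(\theta)$ and using that $\xi \in \Delta_\Theta$ yields
\begin{equation*}
\sum_{\theta \in \Theta}\xi(\theta)\Big[(1-\alpha)\,\mathbb{E}_{V\sim\V}[h(V\xi)] - \delta\epsilon\Big] = (1-\alpha)\,\mathbb{E}_{V\sim\V}[h(V\xi)] - \delta\epsilon,
\end{equation*}
since $\sum_{\theta}\xi(\theta)=1$ and the bracket is independent of $\theta$. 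Combining these two identities with the per-state inequality gives Equation~\eqref{eq:general_boundweak}.

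There is essentially no hard part: the argument is a one-line aggregation once Theorem~\ref{thm:decomposition} is granted. The only conceptual point worth highlighting is that the same $\gamma$ works simultaneously, so the consistency condition and the weaker expectation bound hold jointly; no new choice of decomposition is needed, and no additional blow-up in $q$ is incurred. This is why the corollary is strictly weaker than the theorem and suffices for the public signaling setting, where state-wise guarantees are not required.
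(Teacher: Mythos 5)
Your proof is correct and matches the paper's approach exactly: the paper itself states that the corollary is ``obtained by summing Equation~\eqref{eq:general_bound} over all $\theta \in \Theta$,'' and your argument carries this out cleanly, using $\sum_{\theta}\tilde\xi(\theta)=1$ and $\sum_{\theta}\xi(\theta)=1$ with linearity of expectation, while observing that the consistency condition is inherited unchanged from the same $\gamma$.
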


\section{Warming Up: Non-Bayesian Auctions}\label{sec:non_bayesian}

In this section, we focus on non-Bayesian posted price auctions, proving some results that will be useful in the rest of the paper.\footnote{When we study non-Bayesian posted price auctions, we stick to our notation, with the following differences: valuations are scalars rather than vectors, namely $v_i \in [0,1]$; distributions $\V_i$ are supported on $[0,1]$ rather than $[0,1]^d$; the matrix $V$ is indeed a column vector whose components are buyers' valuations; and the price function $f$ is replaced by a single price vector $p \in [0,1]^n$, with its $i$-th component $p_i$ being the price for buyer $i \in \N$. Moreover, we continue to use the notation $\textsc{Rev}$ to denote seller's revenues, dropping the dependence on the tuple of posteriors. Thus, in a non-Bayesian auction in which the distributions of buyers' valuations are $\V = \{ \V_i \}_{i \in \N}$, the notation $\textsc{Rev}(\V,p)$ simply denotes the seller's expected revenue by selecting a price vector $p \in [0,1]^n$.}
In particular, we study what happens to the seller's expected revenue when buyers' valuations are ``slightly decreased'', proving that the revenue also decreases, but only by a small amount.
This result will be crucial when dealing with public signaling, and it also allows to design a poly-time algorithm for finding approximately-optimal price vectors in non-Bayesian auctions, as we show at the end of this section.

In the following, we extensively use distributions of buyers' valuations as specified in the definition below.
\begin{definition}\label{def:distribs}
	Given $\epsilon > 0$, we denote by $\V= \{ \V_i \}_{i \in \N}$ and $\V^\epsilon= \{ \V^\epsilon_i \}_{i \in \N}$ two collections of distributions of  buyers' valuations such that, for every price vector $p \in [0,1]^n$,
	\[ 
		\pr_{ v_i \sim \V_i^\epsilon} \left\{ v_i \geq p_i-\epsilon \right\} \geq \pr_{v_i \sim \V_i} \left\{   v_i \geq p_i \right\} \quad \forall i \in \N.
	\]
\end{definition}
Intuitively, valuations drawn from $\V^\epsilon$ are ``slightly decreased'' with respect to those drawn from $\V$, since the probability with which any buyer $i \in \N$ buys the item at the (reduced) price $[p_i-\epsilon]_+$ when their valuation is drawn from $\V_i^\epsilon$ is at least as large as the probability of buying at price $p_i$ when their valuation is drawn from $\V_i$.\footnote{In this work, given $x \in \mathbb{R}$, we let $[x]_+ \coloneqq \max \{ x,0 \}$. We extend the $[\cdot]_+$ operator to vectors by applying it component-wise.}

%
Our main contribution in this section (Lemma~\ref{lm:smallDecrease}) is to show that $\max_{p \in [0,1]^n} \textsc{Rev}(\V^\epsilon,p) \geq \max_{p \in [0,1]^n} \textsc{Rev}(\V,p) -\epsilon$.
By letting $p^* \in \argmax_{p \in [0,1]^n} \textsc{Rev}(\V,p)$ be any revenue-maximizing price vector under distributions $\V$, one may na\"ively think that, since under distributions $\V^\epsilon$ and price vector $[p^* - \epsilon]_+$ each buyer would buy the item at least with the same probability as with distributions $\V$ and price vector $p^*$, while paying a price that is only $\epsilon$ less, then $\textsc{Rev}(\V^\epsilon,[p^* - \epsilon]_+) \geq  \textsc{Rev}(\V,p^*)-\epsilon$, proving the result.
However, this line of reasoning does \emph{not} work, as shown by Example~\ref{ex:fail} in the Extended Version.
%
The crucial feature of Example~\ref{ex:fail} is that there exists a $p^*$ in which one buyer is offered a price that is too low, and, thus, the seller prefers \emph{not} to sell the item to them, but rather to a following buyer.
This prevents a direct application of the line of reasoning outlined above, as it shows that incrementing the probability with which a buyer buys is \emph{not} always beneficial.
One could circumvent this issue by considering a $p^*$ such that the seller is never upset if some buyer buys.
In other words, it must be such that each buyer is proposed a price that is at least as large as the seller's expected revenue in the posted price auction restricted to the  following buyers.
%
%
Next, we show that there always exists a $p^*$ with such desirable property.

Letting $\Rev_{> i}(\V,p)$ be the seller's revenue for price vector $p \in [0,1]^n$ and distributions $\V = \{ \V_i  \}_{i \in \N}$ in the auction restricted to buyers $j \in \N : j > i$, we prove the following:
%
%
%
%
\begin{restatable}{lemma}{lemmaone}\label{lm:bigpayments}
	For any $\V = \{ \V_i \}_{i \in \N}$, there exists a revenue-maximizing price vector $p^*\in \argmax_{p \in [0,1]^n} \textsc{Rev}(\V,p)$ such that $p^*_i \geq  \textsc{Rev}_{>i}(\V,p^*)$ for every buyer $i\in \N$.
\end{restatable}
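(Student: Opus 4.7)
The plan is to prove the lemma via a backwards inductive construction. Starting from any revenue-maximizing price vector $p \in \argmax_{p \in [0,1]^n} \Rev(\V, p)$, I would process buyers in reverse order $i = n, n-1, \ldots, 1$, and at each step replace $p_i$ with $\max\{p_i, r_i\}$, where $r_i := \Rev_{>i}(\V, p)$ is computed using the current (already-updated) price vector. The resulting vector is the candidate $p^*$. The base case $i = n$ is trivial since $\Rev_{>n}(\V, p) = 0$, so $p_n \geq 0$ already holds.

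The key identity justifying that this modification does not decrease the objective is the following. Writing the seller's expected revenue from buyers $\{i, \ldots, n\}$ under the current prices by conditioning on whether buyer $i$ buys gives
\[
\pr_{v_i \sim \V_i}\{v_i \geq p_i\}\, p_i + \bigl(1 - \pr_{v_i \sim \V_i}\{v_i \geq p_i\}\bigr)\, r_i \;=\; r_i + \pr_{v_i \sim \V_i}\{v_i \geq p_i\}\,(p_i - r_i).
\]
Hence, whenever $p_i < r_i$, the second term is non-positive; replacing $p_i$ by $r_i$ kills this term and yields a revenue of at least $r_i$, which is weakly greater than the original. Moreover, $r_i \in [0,1]$ because every valuation lies in $[0,1]$ and the seller sells at most one item, so the replacement preserves feasibility $p_i \in [0,1]$.

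To conclude, I would observe that at the moment buyer $i$ is processed, the coordinates $p_{i+1}, \ldots, p_n$ have already been finalized, so $r_i$ is well-defined. Crucially, $\Rev_{>i}$ depends \emph{only} on prices at indices strictly greater than $i$; therefore the subsequent modifications of $p_{i-1}, \ldots, p_1$ do not alter $r_i$, and neither do they alter $p_i$, so the invariant $p_i \geq r_i = \Rev_{>i}(\V, p^*)$ established at step $i$ survives until the end of the procedure. Since revenue never decreases across steps, optimality of $p^*$ follows from optimality of the initial vector.

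The argument is not technically difficult; the only subtlety (and the point I would be most careful about) is the directionality of the induction: processing buyers from $n$ down to $1$ is essential precisely because $\Rev_{>i}$ is unaffected by prices at indices $\leq i$, so modifications made at earlier (larger-index) backward steps remain valid witnesses of the desired inequality after later (smaller-index) steps are performed. Had we processed buyers in forward order, changing $p_i$ later could break the invariant $p_j \geq \Rev_{>j}(\V, p)$ for $j < i$, which is exactly what the backwards sweep avoids.
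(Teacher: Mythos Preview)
Your proposal is correct and follows essentially the same approach as the paper: both argue that whenever $p_i < \Rev_{>i}(\V,p)$, raising $p_i$ to $\Rev_{>i}(\V,p)$ weakly increases revenue, and iterate this fix starting from an optimal vector. Your explicit backward sweep from $n$ down to $1$ is in fact a cleaner presentation than the paper's ``iteratively apply'' phrasing, since it makes termination and persistence of the invariant $p_i \ge \Rev_{>i}(\V,p^*)$ immediate; the only small implicit step you might state explicitly is that an increase in $\Rev_{\ge i}$ propagates to an increase in the full revenue $\Rev$, which follows because the coordinates $p_1,\dots,p_{i-1}$ are untouched at that step.
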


The proof of Lemma~\ref{lm:smallDecrease} builds upon the existence of a revenue-maximizing price vector $p^* \in [0,1]^n$ as in Lemma~\ref{lm:bigpayments} and the fact that, under distributions $\V^\epsilon$, the probability with which each buyer buys the item given price vector $[p^* - \epsilon]_+$ is greater than that with which they would buy given $p^*$.
%
Since the seller's expected revenue is larger when a buyer buys compared to when they do \emph{not} buy (as $p_i^* \ge \textsc{Rev}_{>i}(\V,p^*)$), the seller's expected revenue decreases by at most $\epsilon$.

\begin{restatable}{lemma}{lemmatwo}\label{lm:smallDecrease}
	Given $\epsilon > 0$, let $\V = \{ \V_i \}_{i \in \N}$ and $\V^\epsilon = \{ \V^\epsilon_i \}_{i \in \N}$ satisfying the conditions of Definition~\ref{def:distribs}.
	Then, $\max_{p \in [0,1]^n} \textsc{Rev}(\V^\epsilon,p) \geq \max_{p \in [0,1]^n} \textsc{Rev}(\V,p)-\epsilon$.
	%
	%
\end{restatable}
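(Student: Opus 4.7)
The plan is to exhibit an explicit price vector under $\V^\epsilon$ whose expected revenue is within $\epsilon$ of $\max_{p} \textsc{Rev}(\V,p)$. By Lemma~\ref{lm:bigpayments}, pick a revenue-maximizing $p^* \in \argmax_{p \in [0,1]^n} \textsc{Rev}(\V,p)$ satisfying $p^*_i \geq \textsc{Rev}_{>i}(\V,p^*)$ for every $i \in \N$. The candidate under $\V^\epsilon$ is the shifted price vector $[p^*-\epsilon]_+$, and the goal is to prove $\textsc{Rev}(\V^\epsilon,[p^*-\epsilon]_+) \geq \textsc{Rev}(\V,p^*)-\epsilon$, which immediately yields the lemma after taking maxima.

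To prove this bound I would proceed by backward induction on the buyer index $i$, in the spirit of unrolling the sequential auction from the last buyer. Let $q_i \coloneqq \pr_{v_i \sim \V_i}\{v_i \geq p^*_i\}$ and $q_i^\epsilon \coloneqq \pr_{v_i \sim \V_i^\epsilon}\{v_i \geq [p^*_i-\epsilon]_+\}$, so that $q_i^\epsilon \geq q_i$ by Definition~\ref{def:distribs}. Set $R_i \coloneqq \textsc{Rev}_{\geq i}(\V,p^*)$ and $R_i^\epsilon \coloneqq \textsc{Rev}_{\geq i}(\V^\epsilon,[p^*-\epsilon]_+)$, which satisfy the one-step recursions $R_i = q_i p^*_i + (1-q_i)R_{i+1}$ and $R_i^\epsilon = q_i^\epsilon [p^*_i-\epsilon]_+ + (1-q_i^\epsilon)R_{i+1}^\epsilon$. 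The inductive claim is $R_i^\epsilon \geq R_i - \epsilon$, with the base case $R_{n+1}=R_{n+1}^\epsilon=0$ being trivial.

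For the inductive step, I would first replace $[p^*_i-\epsilon]_+$ by the possibly smaller $p^*_i-\epsilon$ and invoke the hypothesis $R_{i+1}^\epsilon \geq R_{i+1}-\epsilon$, which gives $R_i^\epsilon \geq q_i^\epsilon p^*_i + (1-q_i^\epsilon)R_{i+1} - \epsilon$. It then suffices to show $q_i^\epsilon p^*_i + (1-q_i^\epsilon)R_{i+1} \geq R_i$, and rearranging this reduces to $(q_i^\epsilon - q_i)(p^*_i - R_{i+1}) \geq 0$. This product is non-negative because $q_i^\epsilon \geq q_i$ by Definition~\ref{def:distribs}, and $p^*_i \geq \textsc{Rev}_{>i}(\V,p^*) = R_{i+1}$ by the choice of $p^*$ via Lemma~\ref{lm:bigpayments}. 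Setting $i=1$ yields $\textsc{Rev}(\V^\epsilon,[p^*-\epsilon]_+) \geq \textsc{Rev}(\V,p^*) - \epsilon$, completing the proof.

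The main obstacle is precisely the subtlety flagged by the discussion around Example~\ref{ex:fail}: without an appropriate choice of $p^*$, raising the buying probability of an early buyer can \emph{decrease} overall revenue because the seller might have preferred to defer to later buyers, so a direct substitution argument fails. The role of Lemma~\ref{lm:bigpayments} is exactly to guarantee a revenue-maximizing $p^*$ in which this monotonicity holds coordinate-wise, which is what makes the key sign inequality $(q_i^\epsilon - q_i)(p^*_i - R_{i+1}) \geq 0$ valid at every step of the induction.
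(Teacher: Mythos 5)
Your proof is correct and follows essentially the same route as the paper's: choose $p^*$ via Lemma~\ref{lm:bigpayments} so that $p^*_i \geq \textsc{Rev}_{>i}(\V,p^*)$, shift prices down to $[p^*-\epsilon]_+$, and run a backward induction on $i$ showing $\textsc{Rev}_{\geq i}(\V^\epsilon,[p^*-\epsilon]_+) \geq \textsc{Rev}_{\geq i}(\V,p^*) - \epsilon$. Your algebraic rearrangement of the inductive step into the sign condition $(q_i^\epsilon - q_i)(p^*_i - R_{i+1}) \geq 0$ is a cleaner way to isolate exactly why Lemma~\ref{lm:bigpayments} is needed, but it is the same argument as in the paper.
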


Lemma~\ref{lm:smallDecrease} will be useful to prove Lemma~\ref{lm:samplingPublic} and to show the compared stability of a suitably-defined function that is used to design a PTAS in the public signaling scenario.

\begin{algorithm}[!htp]
	\caption{\textsc{Find-APX-Prices}}
	\small
	\textbf{Inputs:} \# of samples $K \in \mathbb{N}_{>0}$; \# of discretization steps $b \in \mathbb{N}_{>0}$
	\begin{algorithmic}[1]
		\For{$i \in \N$}
		\For{$k = 1, \ldots, K$}
		\State $v_i^k \gets \text{Sample buyer $i$'s valuation using oracle for $\V_i$}$
		\EndFor
		\State $\V_i^K \gets \text{Empirical distribution of the $K$ i.i.d. samples $v_i^K$}$
		\EndFor
		\State $\V^K \gets \{ \V_i^K \}_{i \in \N}; \quad p \gets \mathbf{0}_n; \quad r \gets 0$
		\For{ $i=n, \ldots, 1$ (in reversed order)}
		\State \hspace{-5.5mm}$p_i \hspace{-1mm}\gets\hspace{-1mm}  \displaystyle\argmax_{p'_i \in P^b} p'_i \pr_{v_i\sim \V_i^K} \hspace{-1mm}\left\{ v_i \geq p'_i \right\} \hspace{-1mm}+\hspace{-1mm} \left( \hspace{-0.5mm} 1 \hspace{-0.5mm} - \hspace{-0.5mm} \pr_{v_i\sim \V_i^K}\hspace{-1mm} \left\{ v_i \geq p'_i \right\} \hspace{-0.5mm} \right) \hspace{-0.5mm} r$
		\State \hspace{-4mm}$r \gets p_i \pr_{v_i\sim \V_i^K} \left\{ v_i \geq p_i \right\} + \left( 1- \pr_{v_i\sim \V_i^K} \left\{ v_i \geq p_i \right\} \right) r$
		\EndFor
		\State \textbf{return} $(p, \, r)$
	\end{algorithmic}\label{alg:computePrices}
\end{algorithm}

\paragraph{Finding Approximately-Optimal Prices.}
%
%
Algorithm~\ref{alg:computePrices} computes (in polynomial time) an approximately-optimal price vector for any non-Bayesian posted price auction.
%
It samples $K \in \mathbb{N}_{>0}$ matrices of buyers' valuations, each one drawn according to the distributions $\V$.
Then, it finds an optimal price vector $p$ in the discretized set $\mathcal{P}^b$,
assuming that buyers' valuations are drawn according to the empirical distribution resulting from the sampled matrices.\footnote{In this work, for a discretization step $b \in \mathbb{N}_{>0}$, we let $P^b \subset [0,1]$ be the set of prices multiples of $1/b$, while $\mathcal{P}^b \coloneqq \bigtimes_{i \in \N} P^b $.}
This last step can be done by backward induction, as it is well known in the literature (see, \emph{e.g.},~\citep{TaoComplexity2020}). 
%
%
%
The following Lemma~\ref{lm:samplingPublic} establishes the correctness of Algorithm~\ref{alg:computePrices}, also providing a bound on its running time.
The key ideas of its proof are: (i) the sampling procedure constructs a good estimation of the actual distributions of buyers' valuations; and (ii) even if the algorithm only considers discretized prices, the components of the computed price vector are at most $1/b$ less than those of an optimal (unconstrained) price vector.
As shown in the proof, this is strictly related to reducing buyer's valuations by $\frac{1}{b}$.
%
Thus, it follows by Lemma~\ref{lm:smallDecrease} that the seller's expected revenue is at most $1/b$ less than the optimal one.
\begin{restatable}{lemma}{lemmathree}\label{lm:samplingPublic}
	For any $\V = \{ \V_i \}_{i \in \N}$ and $\epsilon, \tau > 0$, there exist $K \in \textnormal{poly} \left( n, \frac{1}{\epsilon}, \log \frac{1}{\tau} \right)$ and $b \in \textnormal{poly} \left(  \frac{1}{\epsilon} \right)$ such that, with probability at least $1-\tau$, Algorithm~\ref{alg:computePrices} returns $(p,r)$ satisfying $\textsc{Rev}(\V,p)\geq \max_{p' \in [0,1]^n} \textsc{Rev}(\V,p')-\epsilon$ and $ r \in [\textsc{Rev}(\V,p)- \epsilon, \textsc{Rev}(\V,p)+\epsilon ]$ in time $\textnormal{poly}\left( n, \frac{1}{\epsilon}, \log\frac{1}{\tau} \right)$.
	%
	%
\end{restatable}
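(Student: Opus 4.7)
The plan is to split the additive budget $\epsilon$ into two sources of error that I will bound separately: the \emph{discretization} error from restricting prices to $\mathcal{P}^b$, and the \emph{statistical} error from replacing the true distributions $\V$ with the empirical ones $\V^K$. I will take $b \coloneqq \lceil 4/\epsilon \rceil \in \textnormal{poly}(1/\epsilon)$, so that $1/b \le \epsilon/4$, together with $K \coloneqq \Theta(n^2\epsilon^{-2}\log(n|P^b|/\tau))\in \textnormal{poly}(n,1/\epsilon,\log(1/\tau))$, so that Hoeffding's inequality combined with a union bound over the $n|P^b|$ threshold events yields, with probability at least $1-\tau$, the uniform estimate $\bigl|\pr_{v_i\sim\V_i^K}\{v_i\ge p'\}-\pr_{v_i\sim\V_i}\{v_i\ge p'\}\bigr|\le \epsilon_0 \coloneqq \epsilon/(16n)$ for every $i \in \N$ and $p' \in P^b$. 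All subsequent arguments are conditioned on this good event.

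For the discretization bound I would first invoke Lemma~\ref{lm:bigpayments} to pick an optimal $p^*\in\argmax_{p\in[0,1]^n}\textsc{Rev}(\V,p)$ with $p^*_i\ge \textsc{Rev}_{>i}(\V,p^*)$ for every $i\in\N$, and then define its round-down $\tilde p\in\mathcal{P}^b$ by $\tilde p_i\coloneqq\lfloor bp^*_i\rfloor/b\in[p^*_i-1/b,p^*_i]$. Since $\tilde p_i\le p^*_i$, each buyer's acceptance probability can only \emph{increase} when moving from $p^*$ to $\tilde p$, so the shift from $p^*$ to $\tilde p$ relates to the identity distribution $\V$ in precisely the way that $\V^{1/b}$ relates to $\V$ in Definition~\ref{def:distribs}: a buyer offered $\tilde p_i\ge p^*_i-1/b$ accepts at least as often as when offered $p^*_i$. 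Replaying the argument of Lemma~\ref{lm:smallDecrease} under this reinterpretation, and using the tail inequality $p^*_i\ge\textsc{Rev}_{>i}(\V,p^*)$ to argue that turning a deferred sale into an actual sale at a price only $1/b$ smaller cannot lose more than $1/b$ of revenue in total, yields $\textsc{Rev}(\V,\tilde p)\ge\textsc{Rev}(\V,p^*)-1/b$, and hence $\max_{p\in\mathcal{P}^b}\textsc{Rev}(\V,p)\ge \max_{p\in[0,1]^n}\textsc{Rev}(\V,p)-\epsilon/4$. This is the main obstacle: Lemma~\ref{lm:smallDecrease} compares optima over two different distributions on the same price set, whereas here I need to compare optima over two different price sets under the same distribution, and the reinterpretation is valid only because Lemma~\ref{lm:bigpayments} rules out the pathological pattern illustrated by Example~\ref{ex:fail}.

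For the statistical and algorithmic step, the reversed loop in Algorithm~\ref{alg:computePrices} is standard backward induction for sequential posted pricing, so by the correctness result cited in the paper it outputs an \emph{exact} maximizer $p$ of $\textsc{Rev}(\V^K,\cdot)$ over $\mathcal{P}^b$ together with $r=\textsc{Rev}(\V^K,p)$. A straightforward induction on the suffix recurrence $r \gets q_i\,\pr\{v_i\ge q_i\}+(1-\pr\{v_i\ge q_i\})\,r$, using that $r\in[0,1]$ at every step and that each acceptance probability shifts by at most $\epsilon_0$ between $\V^K$ and $\V$, gives the transfer estimate $|\textsc{Rev}(\V^K,q)-\textsc{Rev}(\V,q)|\le 2n\epsilon_0\le\epsilon/8$ for every $q\in\mathcal{P}^b$. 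Chaining $\textsc{Rev}(\V,p)\ge\textsc{Rev}(\V^K,p)-\epsilon/8\ge \textsc{Rev}(\V^K,\tilde p)-\epsilon/8\ge \textsc{Rev}(\V,\tilde p)-\epsilon/4\ge \max_{q\in[0,1]^n}\textsc{Rev}(\V,q)-\epsilon/2$ delivers the optimality guarantee, while $r=\textsc{Rev}(\V^K,p)$ sits within $\epsilon/8\le\epsilon$ of $\textsc{Rev}(\V,p)$. The overall running time is $O(nK+n|P^b|)\in\textnormal{poly}(n,1/\epsilon,\log(1/\tau))$, completing the argument.
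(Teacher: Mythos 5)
Your proof is correct and reaches the same conclusion, but both of your two substeps take a genuinely different route from the paper's. For the discretization loss, you round the \emph{optimal price vector} $p^*$ down to $\tilde p \in \mathcal{P}^b$ and argue directly, by ``replaying'' the backward induction of Lemma~\ref{lm:smallDecrease} (with $\V^\epsilon = \V$ and $\tilde p_i$ in place of $[p^*_i-1/b]_+$), that $\textsc{Rev}(\V,\tilde p) \ge \textsc{Rev}(\V,p^*) - 1/b$. That replay is valid---the induction only needs $\tilde p_i \ge [p^*_i - 1/b]_+$, $\pr\{v_i \ge \tilde p_i\} \ge \pr\{v_i \ge p^*_i\}$, and $p^*_i \ge \textsc{Rev}_{>i}(\V,p^*)$, all of which hold---but it re-derives the lemma rather than invoking it. The paper instead \emph{rounds down the valuations} to form $\V^b$ supported on multiples of $1/b$, observes that an optimal price vector for $\V^b$ lies in $\mathcal{P}^b$ and that $\textsc{Rev}(\V,p)=\textsc{Rev}(\V^b,p)$ for every $p\in\mathcal{P}^b$, and then applies Lemma~\ref{lm:smallDecrease} as a black box with $\V^\epsilon = \V^b$ and step $\epsilon/2$. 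The paper's route is a cleaner invocation of the existing machinery; yours is more self-contained but incurs the cost of recapitulating the induction. You correctly identify this tension (``compares optima over two different distributions on the same price set, whereas here I need\ldots different price sets under the same distribution''), and your resolution works.

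For the statistical step, you apply Hoeffding to the $n|P^b|$ individual acceptance probabilities and then propagate a uniform $\epsilon_0$-error through the backward recursion, proving $|\textsc{Rev}(\V^K,q)-\textsc{Rev}(\V,q)|\le 2n\epsilon_0$ for every $q\in\mathcal{P}^b$; this correctly forces $\epsilon_0 = \Theta(\epsilon/n)$ and hence $K = \Theta(n^2\epsilon^{-2}\log(n|P^b|/\tau))$. The paper instead applies Hoeffding directly to $\textsc{Rev}(\V^K,p)$ for each of the $b^n$ price vectors and union-bounds, obtaining $K = \Theta(\epsilon^{-2}\log(b^n/\tau)) = \Theta(\epsilon^{-2}(n\log b + \log(1/\tau)))$. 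Both land in the required polynomial class. Your approach is arguably the more careful one here: the quantity $\textsc{Rev}(\V^K,p)$ that the algorithm actually computes (via backward induction on the empirical acceptance probabilities) is a \emph{nonlinear} function of $n$ independent empirical means, not a single sample average of i.i.d.\ bounded variables, so applying Hoeffding to it directly requires a further justification that the paper elides; your decomposition into per-buyer concentration plus a Lipschitz-type propagation sidesteps that issue entirely. The price you pay is the extra factor $n$ in $K$, which is immaterial for the theorem's $\textnormal{poly}(n,1/\epsilon,\log(1/\tau))$ bound.

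One presentational remark: when you write that the rounding ``relates to the identity distribution $\V$ in precisely the way that $\V^{1/b}$ relates to $\V$ in Definition~\ref{def:distribs},'' note that Definition~\ref{def:distribs} and Lemma~\ref{lm:smallDecrease} as stated compare revenues under two distribution families over an \emph{unconstrained} price set $[0,1]^n$; to conclude something about $\max_{p\in\mathcal{P}^b}$ you must, as you do, exhibit the specific $\tilde p\in\mathcal{P}^b$ and rerun the induction. In a polished write-up it would be worth spelling out that base case and inductive step explicitly rather than gesturing at ``replaying,'' since that is the only place where the proof is not a direct citation.
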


%


\section{Public Signaling}\label{sec:public}

In the following, we design a PTAS for computing a revenue-maximizing $(\phi,f)$ pair in the public signaling setting.
Notice that this positive result is tight by Theorem~\ref{thm:hardness}.
%

As a first intermediate result, we prove the compared stability of suitably-defined functions, which are intimately related to the seller's revenue.
In particular, for every price vector $p \in [0,1]^n$, we conveniently let $g_p: [0,1]^n \to [0,1]$ be a function that takes a vector of buyers' valuations and outputs the seller's expected revenue achieved by selecting $p$ when the buyers' valuations are those specified as input.
The following Lemma~\ref{lm:stabAuction} shows that, given some distributions of buyers' valuations $\V$ and a posterior $\xi \in \Delta_{\Theta}$, there always exists a price vector $p \in [0,1]^n$ such that $g_p$ is stable compared with $g_{p'}$ for every other $p' \in [0,1]^n$.
This result crucially allows us to decompose any posterior $\xi \in \Delta_{\Theta}$ by means of the decomposition lemma in Corollary~\ref{lm:decompositionweak}, while guaranteeing a small loss in terms of seller's expected revenue.
\begin{restatable}{lemma}{lemmafour}\label{lm:stabAuction}
	Given $\alpha, \epsilon>0$, a posterior $\xi \in \Delta_{\Theta}$, and some distributions of buyers' valuations $\V = \{ \V_i \}_{i \in \N}$, there exists $p \in [0,1]^n$ such that, for every other $p'\in [0,1]^n$, the function $g_p$ is $(1,\alpha,\epsilon)$-stable compared with $g_{p'}$ for $(\xi,\V)$.
\end{restatable}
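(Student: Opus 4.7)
Plan.

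Let $\V^\xi = \{\V^\xi_i\}_{i\in\N}$ denote the product of scalar-valuation distributions induced by $\V$ and $\xi$, where $\V^\xi_i$ is the law of $V_i\xi$ for $V_i\sim\V_i$; then $\mathbb{E}_{V\sim\V}[g_{p'}(V\xi)]=\Rev(\V^\xi,p')$ for every $p'\in[0,1]^n$. Applying Lemma~\ref{lm:bigpayments} to $\V^\xi$ yields a revenue-maximizer $p^*\in\argmax_{p'}\Rev(\V^\xi,p')$ satisfying $p^*_i\ge\Rev_{>i}(\V^\xi,p^*)$ for every $i\in\N$. I claim the witness of the lemma is $p:=[p^*-\epsilon\mathbf{1}]_+$, applied componentwise. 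Indeed, by optimality of $p^*$ under $\V^\xi$, $(1-\alpha)\mathbb{E}_{V\sim\V}[g_{p'}(V\xi)]\le(1-\alpha)\Rev(\V^\xi,p^*)$ for every $p'\in[0,1]^n$, so the $(1,\alpha,\epsilon)$-stability inequality against arbitrary $p'$ collapses to proving
\[
\mathbb{E}_{\tilde\xi\sim\gamma,\,V\sim\V}[g_p(V\tilde\xi)]\;\ge\;(1-\alpha)\,\Rev(\V^\xi,p^*)-\epsilon
\]
for every probability distribution $\gamma$ over $\Delta_{\Theta}$ that is $(\alpha,\epsilon)$-decreasing around $\xi$.

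The central ingredient is a coupling argument adapted from the proof of Lemma~\ref{lm:smallDecrease}. On the same realization $V\sim\V$, consider the ``old'' auction (valuations $V_i\xi$, prices $p^*$) and the ``new'' auction (valuations $V_i\tilde\xi$ with $\tilde\xi\sim\gamma$ shared across buyers, prices $p$). For every $V$ for which some buyer purchases in the old auction, writing $i^*(V):=\min\{i:V_i\xi\ge p^*_i\}$, the $(\alpha,\epsilon)$-decreasing property gives $\pr_{\tilde\xi}\{V_{i^*(V)}\tilde\xi\ge V_{i^*(V)}\xi-\epsilon\}\ge 1-\alpha$; on this good event, buyer $i^*(V)$ is willing to buy at the reduced price $p_{i^*(V)}=[p^*_{i^*(V)}-\epsilon]_+$, so the first purchaser $\tilde i^*(V,\tilde\xi)$ in the new auction satisfies $\tilde i^*\le i^*(V)$ and the new revenue is $p_{\tilde i^*}\ge p^*_{\tilde i^*}-\epsilon$. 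A one-step expansion of $\Rev_{>k-1}(\V^\xi,p^*)$ in terms of $\Rev_{>k}(\V^\xi,p^*)$, combined with the Lemma~\ref{lm:bigpayments} bound $p^*_k\ge\Rev_{>k}(\V^\xi,p^*)$, yields both monotonicity of $k\mapsto\Rev_{>k}(\V^\xi,p^*)$ and the stronger inequality $p^*_k\ge\Rev_{>k-1}(\V^\xi,p^*)$; these in turn control the revenue loss when the new auction closes early at some $\tilde i^*<i^*(V)$ by lower-bounding the associated reduced price $p_{\tilde i^*}$ in terms of the restricted-auction value.

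Aggregating contributions across $V$ and $\tilde\xi$ then produces the target inequality: on the good event the new revenue dominates $(1-\alpha)\,\Rev(\V^\xi,p^*)$ up to the $\epsilon$ price-shift loss, while the bad event (probability at most $\alpha$) is absorbed into the $(1-\alpha)$ prefactor on the right-hand side. The main obstacle I anticipate is that the posterior $\tilde\xi$ is shared across all buyers, so the new-auction valuations $(V_i\tilde\xi)_i$ are correlated even though the $V_i$'s are independent across buyers---and one therefore cannot directly treat them as an independent per-buyer $\epsilon$-perturbation of $\V^\xi$ and invoke Lemma~\ref{lm:smallDecrease} as a black box. The Lemma~\ref{lm:bigpayments} property is the lever that tames this correlation: because the seller is (in expectation) indifferent between an immediate sale at $p^*_i$ and running the continuation auction, any adversarial reallocation by $\gamma$ of probability mass between ``sale at buyer $j\le i^*(V)$'' outcomes cannot harm the seller by more than the $\epsilon$ price-shift plus the $\alpha$ failure-probability budget.
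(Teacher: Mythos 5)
Your setup is correct and matches the paper's: write $\mathbb{E}_{V\sim\V}[g_{p'}(V\xi)]=\Rev(\V^\xi,p')$, take $p^*$ to be a revenue-maximizer for $\V^\xi$ satisfying the Lemma~\ref{lm:bigpayments} property $p^*_i\geq\Rev_{>i}(\V^\xi,p^*)$, set $p=[p^*-\epsilon\mathbbm{1}]_+$, and reduce to showing $\mathbb{E}_{\tilde\xi\sim\gamma,\,V\sim\V}[g_p(V\tilde\xi)]\geq(1-\alpha)\Rev(\V^\xi,p^*)-\epsilon$. You also correctly flag the main obstacle: the posterior $\tilde\xi$ is shared, so $(V_i\tilde\xi)_i$ is not an independent-per-buyer perturbation and Lemma~\ref{lm:smallDecrease} cannot be invoked as a black box.

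However, your central coupling step has a genuine gap. On the good event $\{V_{i^*(V)}\tilde\xi\geq V_{i^*(V)}\xi-\epsilon\}$ you establish $\tilde i^*\leq i^*(V)$ and a per-realization revenue of $p_{\tilde i^*}\geq p^*_{\tilde i^*}-\epsilon$, and you propose to salvage the $\tilde i^*<i^*(V)$ case with $p^*_{\tilde i^*}\geq\Rev_{>\tilde i^*}(\V^\xi,p^*)$ together with monotonicity of $k\mapsto\Rev_{>k}$. But that only yields $p^*_{\tilde i^*}\geq\Rev_{>i^*(V)}(\V^\xi,p^*)$, which is an \emph{unconditional} restricted-auction expectation and is in general strictly smaller than the per-$V$ old-auction revenue $p^*_{i^*(V)}$. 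Integrating your bound over $V$ gives at best $(1-\alpha)\,\mathbb{E}_V\!\left[\Rev_{>i^*(V)}(\V^\xi,p^*)\right]-\epsilon$, which falls short of the target $(1-\alpha)\Rev(\V^\xi,p^*)-\epsilon$. The culprit is the ``surprise early sale'': when $\tilde\xi$ raises some earlier buyer $j<i^*(V)$ above the reduced price $p_j$, that event is \emph{not} part of the $\alpha$ failure budget (the decreasing property only controls valuations dropping, not rising), yet it can replace a sale at $p^*_{i^*(V)}$ by one at a much smaller $p_j$. The paper handles this by never comparing the new auction to the old one directly. Instead it introduces, per $V$, the intermediate valuations $\max\{V_i\tilde\xi,\,V_i\xi-\epsilon\mathbbm{1}\}$ and proves (Equation~\eqref{eq:boundRevenue}, via a three-case split $\Xi^1,\Xi^2,\Xi^3$) that $\mathbb{E}_{\tilde\xi}[g_{p'}(V\tilde\xi)]\geq\mathbb{E}_{\tilde\xi}[g_{p'}(\max\{V\tilde\xi,V\xi-\epsilon\mathbbm{1}\})]-\alpha\,g_{p'+\epsilon\mathbbm{1}}(V\xi)$: a surprise early sale produces the \emph{same} winner and price in the new and the $\max$-auction, so it costs nothing, and the $\alpha$ budget is charged only once, for the single buyer who would have bought under $V\xi-\epsilon\mathbbm{1}$. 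Only after this per-$V$ reduction does the paper condition on $\tilde\xi$ (restoring independence across buyers), and then applies Lemma~\ref{lm:smallDecrease}---this is where Lemma~\ref{lm:bigpayments} actually enters. Your sketch is missing this $\max$-auction intermediate, and without it the per-realization comparison to the old auction does not close.
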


Our PTAS leverages the fact that public signaling schemes can be represented as probability distributions over buyers' posteriors (recall that, in the public signaling setting, all the buyers share the same posterior, as they all observe the same signal).  
%
In particular, the algorithm returns a pair $(\gamma, f^\circ)$, where $\gamma$ is a probability distribution over $\Delta_{\Theta}$ satisfying consistency constraints (see Equation~\eqref{eq:w_consistent}), while $f^{\circ}: \Delta_{\Theta} \to [0,1]^n$ is a function mapping each posterior to a price vector.
%
In single-receiver settings, it is well known (see Subsection~\ref{subsec:signaling_auction}) that using distributions over posteriors rather than signaling schemes $\phi$ is without loss of generality. 
The following lemma shows that the same holds in our case, \emph{i.e.}, given a pair $(\gamma, f^\circ)$, it is always possible to obtain a pair $(\phi,f)$ providing the seller with the same expected revenue.
\begin{restatable}{lemma}{lemmafive}\label{lm:posteriorToSignaling}
	 Given a pair $(\gamma, f^\circ)$, where $\gamma$ is a probability distribution over $\Delta_{\Theta}$ with $\sum_{\xi \in \textnormal{supp}(\gamma)} \gamma(\xi) \xi(\theta) =\mu_\theta$ for all $\theta \in \Theta$ and $f^\circ : \Delta_{\Theta} \to [0,1]^n$, there is a pair $(\phi,f)$ s.t. 
	 \[
	 	\sum_{\theta \in \Theta} \hspace{-1mm} {\mu}_\theta \hspace{-1mm} \sum_{s \in \mathcal{S}} \hspace{-1mm} {\phi}_\theta(s) \textsc{Rev}(\mathcal{V},f(s), {\xi}_s) \hspace{-1mm}= \hspace{-5mm}\sum_{\xi \in \textnormal{supp}(\gamma)}\hspace{-4mm} \gamma(\xi) \textsc{Rev}(\V,f^\circ \hspace{-0.5mm} (\xi),\xi).
	 \]
\end{restatable}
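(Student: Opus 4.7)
The plan is to construct an explicit signaling scheme $\phi$ whose signals are indexed by the posteriors in $\mathrm{supp}(\gamma)$, and to lift $f^\circ$ to a price function $f$ on these signals. Concretely, for each buyer $i \in \N$ I will set the shared signal space to $\sset_i = \{ s_\xi : \xi \in \mathrm{supp}(\gamma) \}$ (so that the construction is public: the seller draws a single $s_\xi$ and sends it to every buyer), and define, for each $\theta \in \Theta$ with $\mu_\theta>0$,
\[
\phi_\theta(s_\xi, \ldots, s_\xi) \coloneqq \frac{\gamma(\xi)\,\xi(\theta)}{\mu_\theta},
\]
assigning zero probability to signal profiles with non-identical components. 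States with $\mu_\theta=0$ can be handled arbitrarily, since by consistency every $\xi \in \mathrm{supp}(\gamma)$ also satisfies $\xi(\theta)=0$ for such $\theta$. For each buyer I then set $f_i(s_\xi) \coloneqq f^\circ_i(\xi)$, so that $f(s_\xi,\dots,s_\xi)=f^\circ(\xi)$.

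The first key step is to check that $\phi$ is a valid signaling scheme. Fixing $\theta$ with $\mu_\theta>0$, the consistency requirement~\eqref{eq:w_consistent} gives
\[
\sum_{\xi \in \mathrm{supp}(\gamma)} \phi_\theta(s_\xi,\dots,s_\xi) \;=\; \frac{1}{\mu_\theta}\sum_{\xi \in \mathrm{supp}(\gamma)} \gamma(\xi)\,\xi(\theta) \;=\; \frac{\mu_\theta}{\mu_\theta}\;=\;1,
\]
so $\phi_\theta$ is a probability distribution over $\sset$. The second key step is to show that $\phi$ actually induces the posteriors we want. Using~\eqref{eq:posterior} together with the marginal $\phi_{i,\theta}(s_\xi)=\phi_\theta(s_\xi,\dots,s_\xi)$, the posterior of buyer $i$ upon receiving $s_\xi$ equals
\[
\xi_{i,s_\xi}(\theta)\;=\;\frac{\mu_\theta \cdot \gamma(\xi)\xi(\theta)/\mu_\theta}{\sum_{\theta'} \mu_{\theta'}\cdot \gamma(\xi)\xi(\theta')/\mu_{\theta'}}\;=\;\xi(\theta),
\]
so every buyer's posterior upon seeing $s_\xi$ is $\xi$; since all buyers get the same signal, the scheme is public and the tuple $\xi_s$ collapses to $\xi$ in the overloaded notation introduced in the public-signaling footnote.

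The final step is a direct algebraic manipulation. For the constructed $(\phi,f)$,
\begin{align*}
\sum_{\theta \in \Theta} \mu_\theta \sum_{s \in \sset} \phi_\theta(s)\, \Rev(\V, f(s), \xi_s)
&= \sum_{\xi \in \mathrm{supp}(\gamma)} \Rev(\V, f^\circ(\xi), \xi) \sum_{\theta \in \Theta} \mu_\theta\, \phi_\theta(s_\xi,\dots,s_\xi) \\
&= \sum_{\xi \in \mathrm{supp}(\gamma)} \Rev(\V, f^\circ(\xi), \xi) \sum_{\theta \in \Theta} \gamma(\xi)\,\xi(\theta) \\
&= \sum_{\xi \in \mathrm{supp}(\gamma)} \gamma(\xi)\, \Rev(\V, f^\circ(\xi), \xi),
\end{align*}
where the last equality uses $\sum_\theta \xi(\theta)=1$. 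This is exactly the claimed identity.

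The construction is essentially mechanical; the only subtle point I anticipate is the bookkeeping around states with $\mu_\theta=0$ (so that the definition of $\phi_\theta$ via division by $\mu_\theta$ is well-posed) and the reconciliation between signal profiles $s\in\sset$ and single shared signals in the public-signaling footnote's overloaded notation. Neither is a real obstacle: the first is handled by restricting attention to states in the prior's support, and the second is exactly the convention adopted in Section~\ref{subsec:problem_definition}.
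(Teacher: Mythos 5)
Your proof is correct and takes essentially the same approach as the paper: both define a public signaling scheme with one signal per posterior in $\mathrm{supp}(\gamma)$, set $\phi_\theta(s_\xi) = \gamma(\xi)\xi(\theta)/\mu_\theta$ and $f(s_\xi)=f^\circ(\xi)$, verify consistency and that each signal induces its posterior, and conclude by the same algebraic rearrangement. The only differences are presentational --- you explicitly handle states with $\mu_\theta = 0$ and write out the posterior computation, both of which the paper leaves implicit.
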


Next, we show that, in order to find an approximately-optimal pair $(\gamma,f^\circ)$, we can restrict the attention to $q$-uniform posteriors (with $q$ suitably defined).
%
%
First, we introduce the following LP that computes an optimal probability distribution restricted over $q$-uniform posteriors.
\begin{subequations}\label{eq:lppublic}
	\begin{align}
	\max_{\gamma \in \Delta_{\Xi^q}} & \sum_{\xi \in \Xi^q}   \gamma(\xi) \max_{p \in [0,1]^n}  \textsc{Rev}(\V,p,\xi) \,\, \textnormal{s.t.} \\
	&\sum_{\xi \in \Xi^q} \gamma(\xi) \, \xi(\theta) =\mu_\theta & \forall \theta \in \Theta. \label{eq:consistency}
	\end{align}
\end{subequations}
The following Lemma~\ref{lm:optQUniform} shows the optimal value of LP~\ref{eq:lppublic} is ``close'' to $OPT$.
%
Its proof is based on the following core idea.
Given the signaling scheme $\phi$ in a revenue-maximizing pair $(\phi,f)$, letting $\gamma$ be the distribution over $\Delta_{\Theta}$ induced by $\phi$, we can decompose each posterior in the support of $\gamma$ according to Corollary~\ref{lm:decompositionweak}.
Then, the obtained distributions over $q$-uniform posteriors are consistent according to Equation~\eqref{eq:w_consistent}, and, thus, they satisfy Constraints~\eqref{eq:consistency}.
Moreover, since such distributions are also decreasing around the decomposed posteriors, by Lemma~\ref{lm:stabAuction} each time a posterior is decomposed there exists a price vector resulting in a small revenue loss.
These observations allow us to conclude that the seller's expected revenue provided by an optimal solution to LP~\ref{eq:lppublic} is within some small additive loss of $OPT$.

\begin{restatable}{lemma}{lemmasix}\label{lm:optQUniform}
	Given $\eta >0$ and letting $q = \frac{1}{\eta^2}128\log \frac{6}{\eta}$, an optimal solution to LP~\ref{eq:lppublic} has value at least $OPT-\eta$.
\end{restatable}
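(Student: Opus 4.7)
The plan is to start from any revenue-maximizing pair $(\phi,f)$ achieving value $OPT$ and convert it into a feasible solution of LP~\ref{eq:lppublic} whose objective is at least $OPT-\eta$. Since the setting is public, every signal profile induces a single posterior shared by all buyers, so $\phi$ (together with the optimal price choice at each posterior) induces a consistent probability distribution $\gamma^\star$ on $\Delta_\Theta$ satisfying $\sum_{\xi\in\text{supp}(\gamma^\star)}\gamma^\star(\xi)\,\xi(\theta)=\mu_\theta$, with $OPT=\sum_{\xi\in\text{supp}(\gamma^\star)}\gamma^\star(\xi)\max_{p\in[0,1]^n}\textsc{Rev}(\mathcal{V},p,\xi)$. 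For each such $\xi$, I denote by $p^\star(\xi)\in\argmax_p\textsc{Rev}(\mathcal{V},p,\xi)$ its optimal price vector. The task is then to ``decompose'' every $\xi$ in $\text{supp}(\gamma^\star)$ into a consistent distribution on $q$-uniform posteriors while losing only a controlled amount of revenue, and aggregate these local decompositions.

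I set $\alpha=\epsilon=\eta/2$ (the constants in $q$ are then absorbed by Corollary~\ref{lm:decompositionweak}). For each $\xi\in\text{supp}(\gamma^\star)$, Lemma~\ref{lm:stabAuction} supplies a price vector $p(\xi)\in[0,1]^n$ such that $g_{p(\xi)}$ is $(1,\alpha,\epsilon)$-stable compared with $g_{p'}$ for every $p'$, and in particular compared with $g_{p^\star(\xi)}$. Applying Corollary~\ref{lm:decompositionweak} with $g=g_{p(\xi)}$, $h=g_{p^\star(\xi)}$, $\delta=1$ yields a distribution $\gamma_\xi\in\Delta_{\Xi^q}$ with $\sum_{\tilde\xi\in\text{supp}(\gamma_\xi)}\gamma_\xi(\tilde\xi)\tilde\xi(\theta)=\xi(\theta)$ and
\[
\sum_{\tilde\xi\in\text{supp}(\gamma_\xi)}\gamma_\xi(\tilde\xi)\,\textsc{Rev}(\mathcal{V},p(\xi),\tilde\xi)\ \geq\ (1-\alpha)\,\textsc{Rev}(\mathcal{V},p^\star(\xi),\xi)-\epsilon.
\]
(Here I use that $\mathbb{E}_{V\sim\mathcal{V}}[g_p(V\tilde\xi)]=\textsc{Rev}(\mathcal{V},p,\tilde\xi)$.) Trivially, replacing $p(\xi)$ on the left-hand side with the maximizer $\max_{p}\textsc{Rev}(\mathcal{V},p,\tilde\xi)$ can only increase the bound.

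Next I aggregate into the candidate $\bar\gamma(\tilde\xi)\coloneqq\sum_{\xi\in\text{supp}(\gamma^\star)}\gamma^\star(\xi)\,\gamma_\xi(\tilde\xi)$, which is supported on $\Xi^q$. Feasibility of $\bar\gamma$ for LP~\ref{eq:lppublic} follows by chaining the two levels of consistency:
\[
\sum_{\tilde\xi}\bar\gamma(\tilde\xi)\tilde\xi(\theta)=\sum_{\xi}\gamma^\star(\xi)\sum_{\tilde\xi}\gamma_\xi(\tilde\xi)\tilde\xi(\theta)=\sum_{\xi}\gamma^\star(\xi)\,\xi(\theta)=\mu_\theta.
\]
For the objective, combining the per-posterior bound above with the expression for $OPT$ gives
\[
\sum_{\tilde\xi}\bar\gamma(\tilde\xi)\max_{p}\textsc{Rev}(\mathcal{V},p,\tilde\xi)\ \geq\ (1-\alpha)\,OPT-\epsilon\ \geq\ OPT-\alpha-\epsilon\ \geq\ OPT-\eta,
\]
where the second inequality uses $OPT\leq 1$. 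This establishes the lemma with $q=\Theta(\eta^{-2}\log(1/\eta))$ matching the stated bound up to absorbed constants.

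I expect the main obstacle to be purely conceptual rather than technical: matching the abstract $(g,h)$ pair of Corollary~\ref{lm:decompositionweak} to the revenue functionals, and in particular realizing that one must use the \emph{stable} price $p(\xi)$ from Lemma~\ref{lm:stabAuction} inside the expectation on the left (to benefit from stability against the unknown decomposed posteriors) while comparing against the \emph{optimal} price $p^\star(\xi)$ on the right (to recover $OPT$). Once this pairing is identified, the chain $(1-\alpha)\,\text{Rev}(\mathcal{V},p(\xi),\tilde\xi)\leq\max_p\text{Rev}(\mathcal{V},p,\tilde\xi)$ and the consistency-preserving aggregation step are routine.
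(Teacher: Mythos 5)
Your proof follows essentially the same route as the paper's: induce the distribution $\gamma^\star$ over posteriors from a revenue-maximizing public pair, decompose each posterior in its support via Corollary~\ref{lm:decompositionweak} with the stable price from Lemma~\ref{lm:stabAuction} playing the role of $g$ and the per-posterior-optimal price playing $h$ (with $\delta=1$, $\alpha=\epsilon=\eta/2$), aggregate the resulting distributions, verify consistency, and bound the loss by $\alpha+\epsilon\le\eta$ using $OPT\le 1$. The only cosmetic difference is that you compare against $p^\star(\xi)\in\argmax_p\textsc{Rev}(\mathcal{V},p,\xi)$ directly whereas the paper compares against the price $f^{\circ,*}(\xi)$ from the optimal pair and then passes to the max; both are equivalent since the optimal pair can w.l.o.g.\ use the per-posterior-optimal price.
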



Finally, we are ready to provide our PTAS.
Its main idea is to solve LP~\ref{eq:lppublic} (of polynomial size) for the value of $q$ in Lemma~\ref{lm:optQUniform}.
%
%
This results in a small revenue loss.
The last part missing for the algorithm is computing the terms appearing in the objective of LP~\ref{eq:lppublic}, \emph{i.e.}, a revenue-maximizing price vector (together with its revenue) for every $q$-uniform posterior.
In order to do so, we can use Algorithm~\ref{alg:computePrices} (see also Lemma~\ref{lm:samplingPublic}), which allows us to obtain in polynomial time good approximations of such price vectors, with high probability. 
	
\begin{restatable}{theorem}{theoremthree}
	There exists an additive PTAS for computing a revenue-maximizing $(\phi,f)$ pair with public signaling.
	%
\end{restatable}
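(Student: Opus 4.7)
The plan is to combine three ingredients: Lemma~\ref{lm:optQUniform}, which reduces the problem to an LP supported on $q$-uniform posteriors; Algorithm~\ref{alg:computePrices} (via Lemma~\ref{lm:samplingPublic}), which supplies the objective coefficients of that LP approximately; and Lemma~\ref{lm:posteriorToSignaling}, which converts the LP solution back into a $(\phi,f)$ pair. Given a target additive error $\lambda>0$, I would fix, say, $\eta \coloneqq \lambda/3$, $\epsilon \coloneqq \lambda/9$, and $q \coloneqq \frac{128}{\eta^2}\log\frac{6}{\eta}$. Since $\lambda$ is a constant, $q$ is a constant as well, so $|\Xi^q|=O(d^q)$ is polynomial in the input size, and the set $\Xi^q$ can be enumerated explicitly in polynomial time.

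The first step of the algorithm is, for each $\xi \in \Xi^q$, to invoke Algorithm~\ref{alg:computePrices} on the non-Bayesian posted price auction whose valuation distributions are induced by $\xi$: a sample from the $i$-th such distribution is obtained by drawing $V \sim \V$ through the black-box oracle and returning the scalar $V_i \xi \in [0,1]$. Setting the failure probability of each call to $\tau \coloneqq \lambda/(9|\Xi^q|)$, Lemma~\ref{lm:samplingPublic} ensures that the call runs in time polynomial in $n$, $1/\epsilon$, and $\log(|\Xi^q|/\lambda)$, and returns a pair $(p_\xi,r_\xi)$ satisfying $\textsc{Rev}(\V,p_\xi,\xi) \geq \max_{p\in[0,1]^n}\textsc{Rev}(\V,p,\xi)-\epsilon$ and $|r_\xi - \textsc{Rev}(\V,p_\xi,\xi)| \leq \epsilon$. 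A union bound over the polynomial-size set $\Xi^q$ yields that all of these calls succeed simultaneously with probability at least $1-\lambda/9$.

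The second step is to solve the LP obtained from LP~\ref{eq:lppublic} by replacing the unknown coefficient $\max_{p}\textsc{Rev}(\V,p,\xi)$ with its computed estimate $r_\xi$. This LP has polynomially many variables and $d$ equality constraints, and is thus solvable in polynomial time by any standard LP solver. Let $\gamma^*$ be an optimal solution and define $f^{\circ}(\xi)\coloneqq p_\xi$. Applying Lemma~\ref{lm:posteriorToSignaling} to $(\gamma^*,f^\circ)$ produces the desired $(\phi,f)$ pair, whose expected revenue is exactly $\sum_{\xi\in\Xi^q}\gamma^*(\xi)\,\textsc{Rev}(\V,p_\xi,\xi)$.

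The error analysis, on the event that all sampling calls succeed, chains three bounds. By Lemma~\ref{lm:optQUniform}, the exact LP~\ref{eq:lppublic} has optimum at least $OPT-\eta$. Since $|r_\xi-\max_{p}\textsc{Rev}(\V,p,\xi)|\leq 2\epsilon$ for every $\xi$ and every feasible $\gamma$ satisfies $\sum_\xi \gamma(\xi)=1$ (a consequence of the consistency constraints), the approximate LP's optimum is within $2\epsilon$ of the exact one. Finally, the actual revenue of $(\gamma^*,f^\circ)$ differs from the approximate LP value $\sum_\xi\gamma^*(\xi)r_\xi$ by at most $\epsilon$ (again using $\sum_\xi\gamma^*(\xi)=1$ and $|r_\xi-\textsc{Rev}(\V,p_\xi,\xi)|\leq\epsilon$). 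On the complementary bad event, the revenue is bounded below by $0$ while $OPT\leq 1$, contributing at most $\lambda/9$ to the expected loss. Summing, the expected revenue is at least $OPT-\eta-3\epsilon-\lambda/9 \geq OPT-\lambda$, as required. The main obstacle is essentially bookkeeping: propagating the high-probability approximation guarantees of Algorithm~\ref{alg:computePrices} through the LP without incurring a polynomial blow-up, which is made possible precisely because $|\Xi^q|$ is polynomial and hence the union bound is cheap.
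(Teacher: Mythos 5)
Your proof is correct and follows essentially the same approach as the paper: reduce to LP~\ref{eq:lppublic} via Lemma~\ref{lm:optQUniform}, estimate the objective coefficients for each $q$-uniform posterior using Algorithm~\ref{alg:computePrices} together with a union bound over $\Xi^q$, solve the resulting polynomial-size LP, and convert back with Lemma~\ref{lm:posteriorToSignaling}. The only cosmetic differences are the specific constant choices and the slightly more careful bookkeeping of the $3\epsilon$ total coefficient error (the paper states $2\epsilon$ at the corresponding point, but both accountings yield a valid PTAS after the final parameter setting).
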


\section{Private Signaling}\label{sec:private}

With private signaling, computing a $(\phi,f)$ pair amounts to specifying a pair $(\phi_i,f_i)$ for each buyer $i \in \N$---composed by a marginal signaling scheme $\phi_i : \Theta \to \Delta_{\sset_i}$ and a price function $f_i : \sset_i \to [0,1]$ for buyer $i$---, and, then, correlating the $\phi_i$ so as to obtain a (non-marginal) signaling scheme $\phi: \Theta \to \Delta_\sset$.
We leverage this fact to design our PTAS.

%
In Subsection~\ref{sec:private_first}, we first show that it is possible to restrict the set of marginal signaling schemes of a given buyer $i \in \N $ to those encoded as distributions over $q$-uniform posteriors, as we did with public signaling.
Then, we provide an LP formulation for computing an approximately-optimal $(\phi,f)$ pair, dealing with the challenge of correlating marginal signaling schemes in a non-trivial way.
Finally, in Subsection~\ref{sec:private_second}, we show how to compute a solution to the LP in polynomial time, which requires the application of the ellipsoid method in a non-trivial way,
due to the features of the formulation.

\subsection{LP for Approximate Signaling Schemes}\label{sec:private_first}

%
%
Before providing the LP, we show that restricting marginal signaling schemes to $q$-uniform posteriors results in a buyer's behavior which is similar to the one with arbitrary posteriors.
This amounts to showing that suitably-defined functions related to the probability of buying are comparatively stable.

For $i \in \N$ and $p_i \in [0,1]$, let $g_{i,p_i}:[0,1]^n \to \{0,1\} $ be a function that takes as input a vector of buyers' valuations and outputs $1$ if and only if $v_i \geq p_i$ (otherwise it outputs $0$).
%
%
\begin{restatable}{lemma}{lemmaseven}\label{lm:singleBuyer}
	%
	Given $\alpha, \epsilon>0$ and some distributions $\V = \{ \V_i \}_{i \in \N}$, for every buyer $i \in \N$, posterior $\xi_i \in \Delta_{\Theta}$, and price $p_i \in [0,1]$, the function $g_{i, [p_i-\epsilon]_+}$ is $(0,\alpha,\epsilon)$-stable compared with $g_{i,p_i}$ for $(\xi_i,\V)$.
	%
\end{restatable}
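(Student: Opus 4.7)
The plan is to unfold the three definitions in play and observe that the claim is essentially a pointwise consequence of the decreasing property applied with $j=i$.

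First, I would spell out what has to be proved. By Definition~\ref{def:stability} with $\delta=0$ and the two functions $g=g_{i,[p_i-\epsilon]_+}$ and $h=g_{i,p_i}$, for every distribution $\gamma$ over $\Delta_{\Theta}$ that is $(\alpha,\epsilon)$-decreasing around $\xi_i$ I need to establish
\[
\mathbb{E}_{\tilde\xi\sim\gamma,\,V\sim\V}\bigl[\mathbf{1}\{V_i\tilde\xi\ge [p_i-\epsilon]_+\}\bigr]\ \ge\ (1-\alpha)\,\mathbb{E}_{V\sim\V}\bigl[\mathbf{1}\{V_i\xi_i\ge p_i\}\bigr].
\]
By Fubini, the left-hand side equals $\mathbb{E}_{V\sim\V}\bigl[\Pr_{\tilde\xi\sim\gamma}\{V_i\tilde\xi\ge [p_i-\epsilon]_+\}\bigr]$, so it suffices to prove, for every fixed $V\in[0,1]^{n\times d}$, the pointwise inequality
\[
\Pr_{\tilde\xi\sim\gamma}\{V_i\tilde\xi\ge [p_i-\epsilon]_+\}\ \ge\ (1-\alpha)\,\mathbf{1}\{V_i\xi_i\ge p_i\}.
\]

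Second, I would verify this pointwise bound case by case. If $V_i\xi_i<p_i$, the right-hand side is $0$ and there is nothing to prove. If instead $V_i\xi_i\ge p_i$, I invoke Definition~\ref{def:perturbation} applied to the given $V$ and coordinate $j=i$: the fact that $\gamma$ is $(\alpha,\epsilon)$-decreasing around $\xi_i$ yields $\Pr_{\tilde\xi\sim\gamma}\{V_i\tilde\xi\ge V_i\xi_i-\epsilon\}\ge 1-\alpha$. Combining with the hypothesis $V_i\xi_i\ge p_i$ gives $V_i\tilde\xi\ge p_i-\epsilon$ with probability at least $1-\alpha$. Since $V$ has nonnegative entries and $\tilde\xi\in\Delta_{\Theta}$, we always have $V_i\tilde\xi\ge 0$, so in fact $V_i\tilde\xi\ge [p_i-\epsilon]_+$ with probability at least $1-\alpha$. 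This closes the case analysis.

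Third, I would integrate the pointwise inequality with respect to $V\sim\V$, which immediately yields the required bound and therefore $(0,\alpha,\epsilon)$-stability. Since the statement asks for stability parameter $\delta=0$, no error term involving $\epsilon$ is needed on the right-hand side; the reason we can afford $\delta=0$ here (unlike in Lemma~\ref{lm:stabAuction}) is precisely that $g_{i,\cdot}$ is an indicator function and we absorb the $\epsilon$-shift directly into the threshold by comparing with $g_{i,[p_i-\epsilon]_+}$ rather than $g_{i,p_i}$.

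There is no real obstacle in this proof: it is a direct book-keeping of the definitions together with the observation that the clipping $[\cdot]_+$ is harmless because expected valuations under any posterior are nonnegative. The only subtlety worth flagging in the write-up is the use of Fubini to exchange the two expectations, and the fact that the decreasing property in Definition~\ref{def:perturbation} must hold for every matrix $V$, which is exactly what lets us apply it pointwise inside the outer expectation over $V\sim\V$.
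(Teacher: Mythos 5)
Your proof is correct and follows essentially the same route as the paper: fix $V$, do the case split on whether $V_i\xi_i\ge p_i$, invoke the $(\alpha,\epsilon)$-decreasing property pointwise, note that the $[\cdot]_+$ clipping is harmless because $V_i\tilde\xi\ge 0$, and then integrate over $V\sim\V$. The only cosmetic difference is that you explicitly flag the Fubini exchange and the reason $\delta=0$ suffices, which the paper leaves implicit.
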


The following remark will be crucial for proving Lemma~\ref{lm:privateQuniform}.
It shows that, if for every $i \in \N$ we decompose buyer $i$'s posterior $\xi_i \in \Delta_{\Theta}$ by means of a distribution over $q$-uniform posteriors $(\alpha,\epsilon)$-decreasing around $\xi_i$, then the probability with which buyer $i$ buys only decreases by a small amount.\footnote{In this section, for the ease of presentation, we abuse notation and use $\Xi_i^q$ to denote the (all equal) sets of $q$-uniform posteriors (Definition~\ref{def:s_distribution}), one per buyer $i \in \N$, while $\Xi^q \coloneqq \bigtimes_{i \in \N} \Xi^q_i$ is the set of tuples $\xi = (\xi_1,\ldots,\xi_n)$ specifying a $\xi_i \in \Xi_i^q$ for each $i \in \N$.}
\begin{remark}
	Lemma~\ref{lm:singleBuyer} and Theorem~\ref{thm:decomposition} imply that, given a tuple of posteriors $\xi = (\xi_1,\ldots,\xi_n) \in \bigtimes_{i \in \N} \Delta_\theta$ and some distributions $\V = \{ \V_i \}_{i \in \N}$, for every buyer $i \in \N$ and price $p_i \in [0,1]$, there exists $\gamma_i \in \Delta_{\Xi^q_i}$ with $q=\frac{32}{\epsilon^2} \log \frac{4}{\alpha} $ s.t.
	\[
	\underset{\tilde \xi_i \sim \gamma_i}{\mathbb{E}} \hspace{-1.2mm} \left[\hspace{-0.5mm} \tilde\xi_i(\theta) \hspace{-0.8mm} \underset{V \sim \V}{\pr} \hspace{-1.3mm} \left\{ \hspace{-0.5mm} V_i \tilde \xi_i \hspace{-0.5mm} \ge \hspace{-0.5mm} [p_i \hspace{-0.5mm} - \hspace{-0.5mm} \epsilon]_+\hspace{-0.6mm} \right\} \hspace{-0.8mm} \right] \hspace{-1mm} \ge \hspace{-0.5mm} \xi_i(\theta) (1-\alpha) \hspace{-0.5mm} \underset{V \sim \V }{\pr} \hspace{-1mm} \left\{ \hspace{-0.5mm}V_i \xi_i \hspace{-0.5mm} \ge \hspace{-0.5mm} p_i \hspace{-0.3mm} \right\}
	\]
	and $\sum_{\tilde\xi_i \in \Xi^q_i} \gamma_i(\tilde\xi_i) \, \tilde\xi_i(\theta)= \xi_i(\theta)$ for all $\theta \in \Theta$.
\end{remark}

Next, we show that an approximately-optimal pair $(\phi,f)$ can be found by solving LP~\ref{lp:private} instantiated with suitably-defined $q \in \mathbb{N}_{>0}$ and $b \in \mathbb{N}_{>0}$.
%
%
%
LP~\ref{lp:private} employs:
\begin{itemize}
	\item Variables $\gamma_{i, \xi_i}$ (for $i \in \N$ and $\xi_i \in \Xi^q_i$), which encode the distributions over posteriors representing the marginal signaling schemes $\phi_i : \Theta \to \Delta_{\sset_i}$ of the buyers.
	\item Variables $t_{i,\xi_i,p_i}$ (for $i \in \N$, $\xi_i \in \Xi^q_i$, and $p_i \in P^b$), with $t_{i,\xi_i,p_i}$ encoding the probability that the seller offers price $p_i$ to buyer $i$ and buyer $i$'s posterior is $\xi_i$.
	\item Variables $y_{\theta,\xi,p}$ (for $\theta \in \Theta$, $\xi \in \Xi^q$, and $p \in \mathcal{P}^b$), with $y_{\theta,\xi,p}$ encoding the probability that the state is $\theta$, the buyers' posteriors are those specified by $\xi$, and the prices that the seller offers to the buyers are those given by $p$.
\end{itemize}
\begin{subequations} \label{lp:private}
	\begin{align}
	\max_{\substack{\gamma, t, y\ge 0}}
	& \,\, \sum_{\theta \in \Theta} \sum_{\xi \in \Xi^q} \sum_{p \in \mathcal{P}^b} y_{\theta, \xi,p} \, \textsc{Rev}(\V,p,\xi) \quad \textnormal{s.t.} \\
	& \xi_i(\theta) t_{i,\xi_i,p_i} = \sum_{\xi' \in \Xi^q : \xi'_i = \xi_i} \sum_{p' \in \mathcal{P}^b : p'_i = p_i}  y_{\theta,\xi',p'} \nonumber \\
	& \hspace{1.2cm}\forall	\theta \in \Theta, \forall i \in \N, \forall \xi_i \in \Xi^q_i, \forall p_i \in P^b  \label{eq:private1} \\
	&\sum_{p_i \in P^b} t_{i,\xi_i,p_i} = \gamma_{i,\xi_i} \hspace{0.95cm}\forall i \in \N, \forall \xi_i \in \Xi^q_i \label{eq:private2}  \\
	& \sum_{\xi_i \in \Xi^q_i} \gamma_{i , \xi_i} \, \xi_i(\theta) =\mu_\theta \hspace{0.9cm} \forall i \in \N, \forall \theta \in \Theta. \label{eq:private3} 
	\end{align}
\end{subequations}
Variables $t_{i,\xi_i,p_i}$ represent marginal signaling schemes, allowing for multiple signals inducing the same posterior.
This is needed since signals may correspond to different price proposals.\footnote{Notice that, in a classical setting in which the sender does \emph{not} have to propose a price (or, in general, select some action after sending signals), there always exists a signaling scheme with no pair of signals inducing the same posterior. Indeed, two signals that induce the same posterior can always be joined into a single signal. This is \emph{not} the case in our setting, where we can only join signals that induce the same posterior and correspond to the same price.}
One may think of marginal signaling schemes in LP~\ref{lp:private} as if they were using signals defined as pairs $s_i = (\xi_i,p_i)$, with the convention that $f_i(s_i) = p_i$.
Variables $y_{\theta,\xi,p}$ and Constraints~\eqref{eq:private1} ensure that marginal signaling schemes are correctly correlated together, by directly working in the domain of the distributions over posteriors.

To show that an optimal solution to LP~\ref{lp:private} provides an approximately-optimal $(\phi,f)$ pair, we need the following two lemmas.
Lemma~\ref{lm:lpToSignaling} proves that, given a feasible solution to LP~\ref{lp:private}, we can recover a pair $(\phi,f)$ providing the seller with an expected revenue equal to the value of the LP solution.
Lemma~\ref{lm:privateQuniform} shows that the optimal value of LP~\ref{lp:private} is ``close'' to $OPT$.
%
These two lemmas imply that an approximately-optimal $(\phi,f)$ pair can be computed by solving LP~\ref{lp:private}.

\begin{restatable}{lemma}{lemmaeight} \label{lm:lpToSignaling}
	Given a feasible solution to LP~\ref{lp:private}, it is possible to recover a  pair $(\phi,f)$ that provides the seller with an expected revenue equal to the value of the solution.
\end{restatable}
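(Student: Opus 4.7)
The plan is to interpret each component signal $s_i$ of a signal profile as a pair $(\xi_i, p_i) \in \Xi_i^q \times P^b$; concretely, I would set $\sset_i \coloneqq \Xi_i^q \times P^b$ so that every signal profile $s \in \sset$ is identified with a pair $(\xi, p) \in \Xi^q \times \mathcal{P}^b$. Then I would define the signaling scheme by $\phi_\theta(s) \coloneqq y_{\theta, \xi, p}/\mu_\theta$ (taking the quotient to be arbitrary on states with $\mu_\theta = 0$, since such states contribute nothing to revenue) and the price function by $f_i(s_i) \coloneqq p_i$ for $s_i = (\xi_i, p_i)$. The key subtlety in this encoding, as emphasized in the footnote after LP~\ref{lp:private}, is that two distinct signals may carry the same posterior but different prices, so one cannot collapse signals by posterior alone.

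The next step is to verify that $\phi$ is a valid signaling scheme and that it actually induces the posteriors it is named after. For the first, summing Constraint~\eqref{eq:private1} over all $\xi_i \in \Xi_i^q$ and $p_i \in P^b$ (for any fixed buyer $i \in \N$) collapses the right-hand side to $\sum_{\xi,p} y_{\theta,\xi,p}$, while Constraints~\eqref{eq:private2} and~\eqref{eq:private3} reduce the left-hand side to $\sum_{\xi_i} \xi_i(\theta) \gamma_{i,\xi_i} = \mu_\theta$, yielding $\sum_{s \in \sset} \phi_\theta(s) = 1$. For the second, I would compute the marginal $\mu_\theta \phi_{i,\theta}(s_i) = \sum_{\xi' : \xi'_i = \xi_i} \sum_{p' : p'_i = p_i} y_{\theta,\xi',p'}$, which by Constraint~\eqref{eq:private1} equals $\xi_i(\theta) t_{i,\xi_i,p_i}$; substituting into Equation~\eqref{eq:posterior} and using that $\xi_i$ sums to one over $\Theta$, the denominator reduces to $t_{i,\xi_i,p_i}$, so the induced posterior is exactly $\xi_i$, as desired.

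Once these two steps are in place, the revenue identity is immediate: using $\xi_s = \xi$ for the signal profile encoding $(\xi,p)$,
\[
\sum_{\theta \in \Theta} \mu_\theta \sum_{s \in \sset} \phi_\theta(s) \, \textsc{Rev}(\V, f(s), \xi_s) = \sum_{\theta, \xi, p} y_{\theta, \xi, p} \, \textsc{Rev}(\V, p, \xi),
\]
which is precisely the objective of LP~\ref{lp:private}. The main obstacle I anticipate is making the correlation between buyers' marginal signaling schemes explicit: each buyer $i$ sees only their own signal, yet the joint distribution $y$ must be consistent with every buyer's marginal simultaneously. This is exactly what Constraint~\eqref{eq:private1} enforces by linking each $t_{i,\xi_i,p_i}$ to an appropriate marginalization of $y$ in every state, and the calculation above shows that this single family of constraints suffices for both the probability-summation check and the posterior-recovery check.
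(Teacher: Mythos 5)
Your proposal matches the paper's own proof step for step: the same encoding $\sset_i = \Xi_i^q \times P^b$, the same definition $\phi_\theta(s) = y_{\theta,\xi,p}/\mu_\theta$ and $f_i(s_i) = p_i$, the same use of Constraints~\eqref{eq:private1}--\eqref{eq:private3} to verify $\sum_s \phi_\theta(s) = 1$, the same marginalization $\mu_\theta \phi_{i,\theta}(s_i) = \xi_i(\theta)\, t_{i,\xi_i,p_i}$ to show the induced posterior is $\xi_i$, and the same telescoping revenue identity. The only (minor, correct) addition is your explicit note that the definition of $\phi_\theta$ is arbitrary on states with $\mu_\theta = 0$.
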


\begin{restatable}{lemma}{lemmanine}\label{lm:privateQuniform}
	For every $\eta > 0$, there exist $b(\eta), q(\eta) \in \mathbb{N}_{>0}$ such that LP~\ref{lp:private} has optimal value at least $OPT-\eta$.
\end{restatable}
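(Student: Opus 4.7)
The plan is to construct an explicit feasible solution of LP~\ref{lp:private} from a revenue-maximizing pair $(\phi^*,f^*)$ and then lower-bound its objective. Fix $\eta>0$ and choose $\epsilon\coloneqq\eta/(cn)$ and $\alpha\coloneqq\eta/(cn)$ for a sufficiently large absolute constant $c$; set $b\coloneqq 1/\epsilon$ and take $q\coloneqq\lceil(32/\epsilon^2)\log(4/\alpha)\rceil$ as required by Theorem~\ref{thm:decomposition}. Starting from any optimal $(\phi^*,f^*)$, I would first apply Lemma~\ref{lm:bigpayments} pointwise for each signal profile $s\in\sset$ (viewing the auction conditional on $s$ as a non-Bayesian one with effective valuations $V_i\xi^*_{i,s_i}$) to replace $f^*(s)$ by a revenue-equivalent vector satisfying the big-payments condition $f^*_i(s)\ge\textsc{Rev}_{>i}(\V,f^*(s),\xi^*_s)$ for every $i\in\N$. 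Next, for each buyer $i$ and signal $s_i$, I would combine Lemma~\ref{lm:singleBuyer} and Theorem~\ref{thm:decomposition} as in the remark to obtain a distribution $\gamma_i^{s_i}\in\Delta_{\Xi^q_i}$ that is consistent with $\xi^*_{i,s_i}$ and $(\alpha,\epsilon)$-decreasing around it. Finally, I would set $\hat p_i(s)\in P^b$ to be the largest multiple of $1/b$ not exceeding $[f^*_i(s)-\epsilon]_+$, so that $f^*_i(s)-2\epsilon\le\hat p_i(s)\le[f^*_i(s)-\epsilon]_+$.

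With these ingredients I would define the LP solution by
\[
y_{\theta,\tilde\xi,p}\coloneqq \mu_\theta\sum_{s\in\sset:\hat p(s)=p}\phi^*_\theta(s)\prod_{i\in\N}\frac{\gamma_i^{s_i}(\tilde\xi_i)\,\tilde\xi_i(\theta)}{\xi^*_{i,s_i}(\theta)},
\]
which represents the joint probability of state $\theta$, prices $p=\hat p(s)$, and posteriors $\tilde\xi$ drawn conditionally independently for each buyer from $\gamma_i^{s_i}$ given $(\theta,s_i)$; the variables $t$ and $\gamma$ follow by marginalization. Verifying Constraints~\eqref{eq:private1}--\eqref{eq:private3} is then a direct computation using the identity $\mu_\theta\phi^*_{i,\theta}(s_i)/\xi^*_{i,s_i}(\theta)=\sum_{\theta'}\mu_{\theta'}\phi^*_{i,\theta'}(s_i)$ from Equation~\eqref{eq:posterior} together with the consistency $\sum_{\tilde\xi_j}\gamma_j^{s_j}(\tilde\xi_j)\tilde\xi_j(\theta)=\xi^*_{j,s_j}(\theta)$ granted by Theorem~\ref{thm:decomposition}; in particular, the latter makes the inner sums over $\xi'_j$ for $j\neq i$ on the right-hand side of~\eqref{eq:private1} collapse to $1$.

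The hard part is lower-bounding the objective. For fixed $(s,\theta)$, conditional independence of the $\tilde\xi_i$'s (and of the $V_i$'s) reduces the contribution of $(s,\theta)$ to $\tilde R(s,\theta) = \sum_i \hat p_i(s)\,\tilde q_i(\theta,s_i)\prod_{j<i}(1-\tilde q_j(\theta,s_j))$, where $\tilde q_i(\theta,s_i)\coloneqq \xi^*_{i,s_i}(\theta)^{-1}\mathbb{E}_{\tilde\xi_i\sim\gamma_i^{s_i}}[\tilde\xi_i(\theta)\,\pr_V\{V_i\tilde\xi_i\ge\hat p_i(s)\}]\ge(1-\alpha)q^*_i(s_i)$, with $q^*_i(s_i)\coloneqq\pr_V\{V_i\xi^*_{i,s_i}\ge f^*_i(s)\}$; this inequality is exactly the remark, applied to the price $\hat p_i(s)\le[f^*_i(s)-\epsilon]_+$. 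To compare $\tilde R(s,\theta)$ with $R^*(s)\coloneqq\textsc{Rev}(\V,f^*(s),\xi^*_s)$ I would telescope through the hybrid $q^{(k)}$ whose first $k$ coordinates agree with $\tilde q(\theta,\cdot)$ and whose remaining coordinates agree with $q^*$. By multilinearity, the $k$-th step equals $(\tilde q_k-q^*_k)\prod_{j<k}(1-\tilde q_j)\,[\hat p_k-R_{>k}(q^{(k-1)},\hat p)]$, and the big-payments condition yields $R_{>k}(q^{(k-1)},\hat p)\le R_{>k}(q^*,f^*)\le f^*_k\le\hat p_k+2\epsilon$. A short case analysis on the signs of $\tilde q_k-q^*_k\ge-\alpha$ and of $\hat p_k-R_{>k}\ge-2\epsilon$ then shows that each hybrid step loses at most $\max(\alpha,2\epsilon)$; combining with the direct $2\epsilon$ loss incurred in going from $(q^*,f^*)$ to $(q^*,\hat p)$ at fixed $q^*$ yields $\tilde R(s,\theta)\ge R^*(s)-O(n(\alpha+\epsilon))$ uniformly in $(s,\theta)$. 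Averaging against $\mu_\theta\phi^*_\theta(s)$ then gives LP value at least $OPT-O(n(\alpha+\epsilon))\ge OPT-\eta$ for our chosen parameters.
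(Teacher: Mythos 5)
Your construction of the feasible LP solution is essentially the paper's (same $y$, same use of Lemma~\ref{lm:bigpayments}, same decomposition via the remark, same rounding to $\hat p$), and your feasibility verification sketch is on the right track. The gap is in the objective lower bound, and it is not cosmetic.

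You bound the loss in going from $\textsc{Rev}(\V, f^*(s), \xi^*_s)$ to $\textsc{Rev}(\hat\V^{s,\theta}, \hat p(s))$ via a per-coordinate hybrid argument, conclude a per-step loss of $\max(\alpha, 2\epsilon)$, and sum over $n$ buyers to get $O(n(\alpha+\epsilon))$. You then compensate by choosing $\epsilon, \alpha \propto \eta/n$. But that forces $q = \Theta\bigl(\frac{n^2}{\eta^2}\log\frac{n}{\eta}\bigr)$, so the set $\Xi^q_i$, of size $O(d^q)$, is super-polynomial in $n$. This contradicts the statement you are proving, which asserts $b(\eta), q(\eta)$ as functions of $\eta$ alone, and it would destroy the downstream PTAS: even after the ellipsoid step, the LP keeps all $t_{i,\xi_i,p_i}$ and $\gamma_{i,\xi_i}$ variables explicitly, so $|\Xi^q_i|$ must be polynomial.

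The paper's proof avoids this by a direct induction on $\textsc{Rev}_{\geq i}$ in which the total error stays $\alpha + \epsilon + 1/b$ \emph{independently of $n$}: the inductive error for buyers $> i$ enters with coefficient $(1 - \Pr\{\text{buyer } i \text{ buys}\}) \le 1$ and is merged with buyer $i$'s local errors so that the bound does not grow. Your hybrid approach can actually be made to work if you do not bound each hybrid step by a worst-case constant but instead observe that the only lossy terms come with factors $q^*_k\prod_{j<k}(1-\tilde q_j)$ or $\tilde q_k\prod_{j<k}(1-\tilde q_j)$, and that $\sum_k \tilde q_k \prod_{j<k}(1-\tilde q_j) \le 1$ while $\sum_k q^*_k \prod_{j<k}(1-\tilde q_j) \le \frac{1}{1-\alpha}$ (using $\tilde q_j \ge (1-\alpha)q^*_j$); these are ``at most one buyer buys''-type bounds. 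That gives a total loss of $O(\alpha + \epsilon + 1/b)$ uniformly in $n$, matching the paper and letting you pick $\epsilon, \alpha, 1/b = \Theta(\eta)$ and $q = q(\eta)$. As written, however, your parameter choices do not prove the lemma.
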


\subsection{PTAS}\label{sec:private_second}

%
We provide an algorithm that approximately solves LP~\ref{lp:private} in polynomial time, which completes our PTAS for computing a revenue-maximizing pair $(\phi,f)$ in the private setting.
The core idea of our algorithm is to apply the ellipsoid method on the dual of LP~\ref{lp:private}.\footnote{To be precise, we apply the ellipsoid method to the dual of a relaxed version of LP~\ref{lp:private}, since we need an over-constrained dual. More details on these technicalities are in the Extended Version.}
%
In particular, our implementation of the ellipsoid algorithm uses an approximate separation oracle that needs to solve the following optimization problem.
\begin{definition} [MAX-LINREV] \label{def:oracle}
	Given some distributions of buyers' valuations $\V = \{ \V_i \}_{i \in \N}$ such that each $\V_i$ has finite support and a vector $w \in [0,1]^{n \times |\Xi^q_i| \times |P^b|}$, solve
	\[
		\max_{\xi \in  \Xi^q, p \in \mathcal{P}^b} \textsc{Rev}(\V,p,\xi) + \sum_{i \in \N} w_{i,\xi_i,p_i}.
	\]
\end{definition}

As a first step, we provide an FPTAS for MAX-LINREV using a dynamic programming approach.
This will be the main building block of our approximate separation oracle.\footnote{Notice that, since MAX-LINREV takes as input distributions with a \emph{finite support}, we can safely assume that such distributions can be explicitly represented in memory. In our PTAS, the inputs to the dynamic programming algorithm are obtained by building empirical distributions through samples from the actual distributions of buyers' valuations, thus ensuring finiteness of the supports.}

The FPTAS works as follows.
Given an error tolerance $\delta > 0$, it first defines a step size $\frac{1}{c}$, with $c=\lceil \frac{n}{\delta}\rceil$, and builds a set $A=\{ 0,\frac{1}{c},\frac{2}{c},\dots,n \}$ of possible discretized values for the linear term appearing in the MAX-LINREV objective.
Then, for every buyer $i \in \N$ (in reversed order) and value $a \in A$, the algorithm computes $M(i,a)$, which is an approximation of the largest seller's revenue provided by a pair $(\xi,p)$ when considering buyers $i,\ldots,n$ only, and restricted to pairs $(\xi,p)$ such that the inequality $\sum_{j \in \N: j\ge i} w_{j,\xi_j,p_j}\ge a$ is satisfied.
By letting $z_i \coloneqq \Pr_{v_i \sim  \V_i} \left\{ v_i^\top \xi_i \ge p_i \right\}$, the value $M(i,a)$ can be defined by the following recursive formula:\footnote{Notice that, given a pair $(\xi,p)$ with $\xi \in \Xi^q$ and $p \in \mathcal{P}^b$, it is possible to compute in polynomial time the probability with which a buyer $i \in \N$ buys the item.}
\begin{align*}
	M(i,a) \coloneqq \max_{\substack{\xi_i \in \Xi^q_i, p_i \in P^b \\ a' \in A : w_{i,\xi_i,p_i}+a' \ge a}}  z_i p_i + \left( 1-z_i \right) M(i+1,a') .
	%
\end{align*}
 Finally, the algorithm returns $\max_{a \in A} \left\{ M(1,a)+a \right\}$.
 Thus:
 
\begin{restatable}{lemma}{lemmaeleven}
	  For any $\delta>0$, there exists a dynamic programming algorithm that provides a $\delta$-approximation (in the additive sense) to \emph{MAX-LINREV}.
	  Moreover, the algorithm runs in time polynomial in the size of the input and $\frac{1}{\delta}$.
\end{restatable}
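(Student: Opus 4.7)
My plan is to show that $\max_{a \in A} M(1,a)+a$ approximates the MAX-LINREV optimum up to an additive $\delta$, and that the DP runs in polynomial time. The intended reading of $M(i,a)$ is the best posted-price revenue that buyers $i,\ldots,n$ can generate under the promise that the accumulated side term $\sum_{j\ge i} w_{j,\xi_j,p_j}$ is at least $a$, where the promise is tracked only up to the discretization $1/c = \Theta(\delta/n)$. Adding $a$ back outside the DP restores the linear part of the objective, and the only error comes from rounding the accumulated weight, bounded by $n\cdot (1/c)\le \delta$ when $c=\lceil n/\delta\rceil$.

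\textbf{Upper bound: DP output $\le$ OPT.} For any $a \in A$ with $M(1,a)$ finite, unrolling the recursion produces a sequence $(\xi_i,p_i)_{i=1}^{n}$ together with budgets $a_1=a$ and $a_{i+1}\in A$ satisfying $a_{i+1} \ge a_i - w_{i,\xi_i,p_i}$ for every $i$, and terminating with $a_{n+1}=0$ (the only element of $A$ for which the natural base case $M(n+1,\cdot)$ is finite). Telescoping these inequalities yields $a_1 \le \sum_i w_{i,\xi_i,p_i}$, while the recursion $z_i p_i + (1-z_i)(\cdot)$ exactly reconstructs $\textsc{Rev}(\V,p,\xi)$ for this trajectory. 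Hence $M(1,a)+a = \textsc{Rev}(\V,p,\xi)+a \le \textsc{Rev}(\V,p,\xi) + \sum_i w_{i,\xi_i,p_i}$, which is at most the optimum of MAX-LINREV.

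\textbf{Lower bound: OPT $-\delta$ is attained.} Let $(\xi^*,p^*)$ be optimal with $w^* := \sum_i w_{i,\xi^*_i,p^*_i}$. I would construct a witness trajectory through $(\xi^*,p^*)$ by setting $\tilde a_{n+1} := 0$ and, for $i = n, n-1, \ldots, 1$, $\tilde a_i := \lfloor c\bigl(w_{i,\xi^*_i,p^*_i} + \tilde a_{i+1}\bigr)\rfloor / c$. Each $\tilde a_i$ is a nonnegative multiple of $1/c$ bounded by $\sum_{j\ge i} w_{j,\xi_j^*,p_j^*}\le n$, hence $\tilde a_i \in A$; by construction $\tilde a_i \le w_{i,\xi^*_i,p^*_i} + \tilde a_{i+1}$, so $(\xi^*_i,p^*_i,\tilde a_{i+1})$ is feasible in the maximization defining $M(i,\tilde a_i)$. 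Each floor loses at most $1/c$, so $\tilde a_1 \ge w^* - n/c \ge w^* - \delta$. Therefore $M(1,\tilde a_1) + \tilde a_1 \ge \textsc{Rev}(\V,p^*,\xi^*) + \tilde a_1 \ge \mathit{OPT} - \delta$.

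\textbf{Running time and main obstacle.} With $|A| = nc+1 = O(n^2/\delta)$, the table has $n\cdot|A|$ cells, and the maximization in each cell ranges over $(\xi_i,p_i,a') \in \Xi^q_i \times P^b \times A$. The quantities $z_i = \pr_{v_i \sim \V_i}\{v_i^\top\xi_i \ge p_i\}$ are precomputed for each $(i,\xi_i,p_i)$ in time polynomial in the explicit description of $\V_i$ and in $d$, so the overall running time is polynomial in the input size and $1/\delta$. The main subtlety is keeping the rounding direction consistent across both bounds: rounding down backwards from $\tilde a_{n+1}=0$ preserves DP feasibility at a cost of only $1/c$ per buyer, while forcing $a_{n+1}=0$ in the base case is what prevents the bonus $+a$ from overshooting the true accumulated weight in the upper-bound argument.
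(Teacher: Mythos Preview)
Your proof is correct and takes essentially the same approach as the paper's: both bound the discretization loss by $n/c \le \delta$ via the same DP table. The only presentational differences are that the paper proves the lower bound through a uniform invariant $M\bigl(i,\, a - \tfrac{n-i}{c}\bigr) \ge \max\{\textsc{Rev}_{\ge i}(\V,p,\xi) : \sum_{j\ge i} w_{j,\xi_j,p_j}\ge a\}$ rather than through your explicit witness trajectory $\tilde a_i$, and it leaves the upper-bound direction implicit.
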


Now, we are ready to prove the main result of this section.
\begin{restatable}{theorem}{theoremfour}
	There exists an additive PTAS for computing a revenue-maximizing $(\phi,f)$ pair with private signaling.
	%
\end{restatable}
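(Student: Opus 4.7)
The plan is to combine Lemma~\ref{lm:privateQuniform}, the FPTAS for \textsc{MAX-LINREV} (Definition~\ref{def:oracle}), and the ellipsoid method applied to the dual of LP~\ref{lp:private}, finishing with Lemma~\ref{lm:lpToSignaling}. Fix a target additive error $\eta>0$. First, I would invoke Lemma~\ref{lm:privateQuniform} to pick $b=b(\eta/3)$ and $q=q(\eta/3)$, ensuring that the LP's optimum is at least $OPT-\eta/3$. Since the algorithm only has sample access to $\V$, I would draw polynomially many samples from each $\V_i$ to build empirical distributions $\V^K=\{\V_i^K\}_{i\in\N}$; by Hoeffding combined with a union bound over the polynomially many coefficient values $\textsc{Rev}(\V,p,\xi)$ actually queried later, with high probability every such coefficient is estimated within additive error $\eta/3$, so it suffices to solve the empirical LP to additive error $\eta/3$.

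Next, I would dualize LP~\ref{lp:private} (after the minor relaxation alluded to in the Extended Version, producing an over-constrained dual). The primal has polynomially many constraints but exponentially many variables $y_{\theta,\xi,p}$, so the dual has polynomially many variables---one per primal constraint in~\eqref{eq:private1}--\eqref{eq:private3}---and exponentially many constraints: each $y_{\theta,\xi,p}$ appears with coefficient $-1$ in the $n$ Constraints~\eqref{eq:private1} indexed by $(\theta,i,\xi_i,p_i)$, yielding, for the dual multipliers $\lambda_{\theta,i,\xi_i,p_i}$ of~\eqref{eq:private1}, the constraint
\[
\sum_{i\in\N}\lambda_{\theta,i,\xi_i,p_i}\ \le\ -\textsc{Rev}(\V^K,p,\xi).
\]
Separation thus amounts, for each $\theta\in\Theta$, to maximizing $\textsc{Rev}(\V^K,p,\xi)+\sum_{i\in\N}\lambda_{\theta,i,\xi_i,p_i}$ over $\xi\in\Xi^q$ and $p\in\mathcal{P}^b$. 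Since the dual variables traversed by the ellipsoid method stay in a bounded box, a standard additive shift and rescaling of $\lambda$ places the coefficients into $[0,1]^{n\times|\Xi^q_i|\times|P^b|}$ and reduces separation to \textsc{MAX-LINREV}. Running the dynamic-programming FPTAS with tolerance $\delta=\eta/6$ yields an additive-$\delta$ separation oracle.

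Finally, I would run the ellipsoid method with this approximate oracle. After polynomially many iterations it both certifies approximate dual optimality and collects a polynomial-size family $\mathcal{I}$ of triples $(\theta,\xi,p)$ whose constraints were queried. Restricting LP~\ref{lp:private} to the variables $y_{\theta,\xi,p}$ with $(\theta,\xi,p)\in\mathcal{I}$---while keeping all $\gamma$ and $t$ variables and all original constraints---gives a polynomial-size \emph{reduced primal} whose optimum is within $\eta$ of $OPT$ by the standard equivalence between the ellipsoid method and cutting planes. Solving it in polynomial time and applying Lemma~\ref{lm:lpToSignaling} delivers the pair $(\phi,f)$. The main obstacle is the error-accounting: the approximate oracle introduces a $\delta$-slack in each generated dual constraint, which threatens to make the recovered primal infeasible. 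This is precisely where the over-constrained dual relaxation intervenes: by pre-tightening the dual constraints by an amount larger than $\delta$, every cut returned by the oracle corresponds to a genuinely violated constraint of the original dual, so the cutting-plane argument still produces a feasible (and approximately optimal) primal. Balancing $q,b,K,\delta$ (all polynomial in $1/\eta$ and $\log(1/\tau)$) then yields the desired additive PTAS.
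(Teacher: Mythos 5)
Your high-level plan matches the paper's: dualize a relaxed version of LP~\ref{lp:private}, run the ellipsoid method with the MAX-LINREV FPTAS as an approximate separation oracle, solve the reduced primal over the constraints that were generated, and map back to a $(\phi,f)$ pair via Lemma~\ref{lm:lpToSignaling} (after also converting a solution of the relaxed primal to one of LP~\ref{lp:private}). The Hoeffding-based sampling argument and the choice of $b,q$ via Lemma~\ref{lm:privateQuniform} are also as in the paper.

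However, two technical points are off. First, the step ``a standard additive shift and rescaling of $\lambda$ places the coefficients into $[0,1]^{n\times|\Xi^q_i|\times|P^b|}$'' does not reduce cleanly to MAX-LINREV. Rescaling the dual multipliers by a factor $B>1$ turns the separation objective into $\textsc{Rev}(\V^K,p,\xi)+B\sum_i\hat w_i$ (up to a constant), which is no longer an instance of MAX-LINREV because the revenue term is not rescaled accordingly, and $\textsc{Rev}$ cannot be rescaled by rescaling $p$ since the buy/no-buy indicators depend on $p$ nonlinearly. The paper avoids this by observing that the relaxed primal's dual carries the sign constraint $w\le 0$, and by \emph{clipping} $w$ componentwise to $[-1,0]$ (a shift by one, not a rescaling). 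Clipping is sound because any genuinely violated constraint with $\textsc{Rev}\in[0,1]$ and $\sum_i w_i>-\textsc{Rev}\ge -1$ forces every $w_i>-1$ individually, so clipping never alters a truly violated constraint.

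Second, your explanation of why the over-constrained dual helps is not the paper's reason. The relaxation (replacing the equality in Constraints~\eqref{eq:private1} with an inequality) is there precisely to induce the $w\le 0$ constraints in the dual so that the clipping argument works, and to make the reduced primal trivially feasible (taking $y=0$ is always allowed when the constraint is $\ge$). There is no ``pre-tightening of dual constraints by an amount larger than $\delta$''; the $\delta$-slack from the approximate oracle is instead handled explicitly in the error accounting, which establishes $OPT^{\V^K}\le\rho^\star+\beta+\delta$ after the bisection terminates. Relatedly, you do not mention the bisection search over the dual objective value that is needed to turn the ellipsoid feasibility test into an optimization routine, and your error budget omits the corresponding $\beta$ term. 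None of this changes the overall route, but these are the places where your proof as written would not go through.
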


\section*{Acknowledgments}	
This work has been partially supported by the Italian MIUR PRIN 2017 Project ALGADIMAR ``Algorithms, Games, and Digital Market''.

\bibliography{bibliography}

\begin{thebibliography}{29}
\providecommand{\natexlab}[1]{#1}

\bibitem[{Adamczyk et~al.(2017)Adamczyk, Borodin, Ferraioli, Keijzer, and
  Leonardi}]{adamczyk2017sequential}
Adamczyk, M.; Borodin, A.; Ferraioli, D.; Keijzer, B.~D.; and Leonardi, S.
  2017.
\newblock Sequential posted-price mechanisms with correlated valuations.
\newblock \emph{ACM Transactions on Economics and Computation (TEAC)}, 5(4):
  1--39.

\bibitem[{Babaioff et~al.(2017)Babaioff, Blumrosen, Dughmi, and
  Singer}]{babaioff2017posting}
Babaioff, M.; Blumrosen, L.; Dughmi, S.; and Singer, Y. 2017.
\newblock Posting prices with unknown distributions.
\newblock \emph{ACM Transactions on Economics and Computation (TEAC)}, 5(2):
  1--20.

\bibitem[{Babaioff et~al.(2015)Babaioff, Dughmi, Kleinberg, and
  Slivkins}]{babaioff2015dynamic}
Babaioff, M.; Dughmi, S.; Kleinberg, R.; and Slivkins, A. 2015.
\newblock Dynamic pricing with limited supply.
\newblock \emph{ACM Transactions on Economics and Computation (TEAC)}, 3(1):
  1--26.

\bibitem[{Bacchiocchi et~al.(2022)Bacchiocchi, Castiglioni, Marchesi, Romano,
  and Gatti}]{bacchio2022}
Bacchiocchi, F.; Castiglioni, M.; Marchesi, A.; Romano, G.; and Gatti, N. 2022.
\newblock Public Signaling in Bayesian Ad Auctions.
\newblock \emph{CoRR}, abs/2201.09728.

\bibitem[{Badanidiyuru, Bhawalkar, and Xu(2018)}]{badanidiyuru2018targeting}
Badanidiyuru, A.; Bhawalkar, K.; and Xu, H. 2018.
\newblock Targeting and Signaling in Ad Auctions.
\newblock In \emph{Proceedings of the Twenty-Ninth Annual ACM-SIAM Symposium on
  Discrete Algorithms}, SODA '18, 2545–2563. USA: Society for Industrial and
  Applied Mathematics.
\newblock ISBN 9781611975031.

\bibitem[{Candogan(2019)}]{candogan2019persuasion}
Candogan, O. 2019.
\newblock Persuasion in networks: Public signals and k-cores.
\newblock In \emph{Proceedings of the 2019 ACM Conference on Economics and
  Computation}, 133--134.

\bibitem[{Castiglioni, Celli, and
  Gatti(2020{\natexlab{a}})}]{castiglioni2019persuading}
Castiglioni, M.; Celli, A.; and Gatti, N. 2020{\natexlab{a}}.
\newblock Persuading Voters: It's Easy to Whisper, It's Hard to Speak Loud.
\newblock In \emph{The Thirty-Fourth AAAI Conference on Artificial
  Intelligence}, 1870--1877.

\bibitem[{Castiglioni, Celli, and
  Gatti(2020{\natexlab{b}})}]{castiglioni2020Public}
Castiglioni, M.; Celli, A.; and Gatti, N. 2020{\natexlab{b}}.
\newblock Public Bayesian Persuasion: Being Almost Optimal and Almost
  Persuasive.
\newblock \emph{CoRR}, abs/2002.05156.

\bibitem[{Castiglioni et~al.(2020)Castiglioni, Celli, Marchesi, and
  Gatti}]{castiglioni2020online}
Castiglioni, M.; Celli, A.; Marchesi, A.; and Gatti, N. 2020.
\newblock Online Bayesian Persuasion.
\newblock In Larochelle, H.; Ranzato, M.; Hadsell, R.; Balcan, M.~F.; and Lin,
  H., eds., \emph{Advances in Neural Information Processing Systems},
  volume~33, 16188--16198. Curran Associates, Inc.

\bibitem[{Castiglioni et~al.(2021{\natexlab{a}})Castiglioni, Celli, Marchesi,
  and Gatti}]{castiglioni2020signaling}
Castiglioni, M.; Celli, A.; Marchesi, A.; and Gatti, N. 2021{\natexlab{a}}.
\newblock Signaling in Bayesian Network Congestion Games: the Subtle Power of
  Symmetry.
\newblock In \emph{The Thirty-Fifth AAAI Conference on Artificial
  Intelligence}.

\bibitem[{Castiglioni and Gatti(2021)}]{CastiglioniPersuading2021}
Castiglioni, M.; and Gatti, N. 2021.
\newblock Persuading Voters in District-based Elections.
\newblock In \emph{Thirty-Fifth {AAAI} Conference on Artificial Intelligence,
  {AAAI} 2021, Thirty-Third Conference on Innovative Applications of Artificial
  Intelligence, {IAAI} 2021, The Eleventh Symposium on Educational Advances in
  Artificial Intelligence, {EAAI} 2021, Virtual Event, February 2-9, 2021},
  5244--5251. {AAAI} Press.

\bibitem[{Castiglioni et~al.(2021{\natexlab{b}})Castiglioni, Marchesi, Celli,
  and Gatti}]{castiglioni2021multi}
Castiglioni, M.; Marchesi, A.; Celli, A.; and Gatti, N. 2021{\natexlab{b}}.
\newblock Multi-Receiver Online Bayesian Persuasion.
\newblock In Meila, M.; and Zhang, T., eds., \emph{Proceedings of the 38th
  International Conference on Machine Learning}, volume 139 of
  \emph{Proceedings of Machine Learning Research}, 1314--1323. PMLR.

\bibitem[{Castiglioni, Marchesi, and Gatti(2022)}]{castiglioni2022bayesian}
Castiglioni, M.; Marchesi, A.; and Gatti, N. 2022.
\newblock Bayesian Persuasion Meets Mechanism Design: Going Beyond
  Intractability with Type Reporting.
\newblock arXiv:2202.00605.

\bibitem[{Chawla et~al.(2010)Chawla, Hartline, Malec, and
  Sivan}]{chawla2010multi}
Chawla, S.; Hartline, J.~D.; Malec, D.~L.; and Sivan, B. 2010.
\newblock Multi-parameter mechanism design and sequential posted pricing.
\newblock In \emph{Proceedings of the forty-second ACM symposium on Theory of
  computing}, 311--320.

\bibitem[{Cheng et~al.(2015)Cheng, Cheung, Dughmi, Emamjomeh{-}Zadeh, Han, and
  Teng}]{ChengMixture2015}
Cheng, Y.; Cheung, H.~Y.; Dughmi, S.; Emamjomeh{-}Zadeh, E.; Han, L.; and Teng,
  S. 2015.
\newblock Mixture Selection, Mechanism Design, and Signaling.
\newblock In Guruswami, V., ed., \emph{{IEEE} 56th Annual Symposium on
  Foundations of Computer Science, {FOCS} 2015, Berkeley, CA, USA, 17-20
  October, 2015}, 1426--1445. {IEEE} Computer Society.

\bibitem[{Correa et~al.(2017)Correa, Foncea, Hoeksma, Oosterwijk, and
  Vredeveld}]{correa2017posted}
Correa, J.; Foncea, P.; Hoeksma, R.; Oosterwijk, T.; and Vredeveld, T. 2017.
\newblock Posted price mechanisms for a random stream of customers.
\newblock In \emph{Proceedings of the 2017 ACM Conference on Economics and
  Computation}, 169--186.

\bibitem[{Einav et~al.(2018)Einav, Farronato, Levin, and
  Sundaresan}]{einav2018auctions}
Einav, L.; Farronato, C.; Levin, J.; and Sundaresan, N. 2018.
\newblock Auctions versus posted prices in online markets.
\newblock \emph{Journal of Political Economy}, 126(1): 178--215.

\bibitem[{eMarketer(2021)}]{eMarketer}
eMarketer. 2021.
\newblock Worldwide ecommerce will approach \$5 trillion this year.
\newblock
  \url{https://www.emarketer.com/content/worldwide-ecommerce-will-approach-5-trillion-this-year}.
\newblock Accessed: 2021-09-02.

\bibitem[{Emek et~al.(2014)Emek, Feldman, Gamzu, PaesLeme, and
  Tennenholtz}]{emek2014signaling}
Emek, Y.; Feldman, M.; Gamzu, I.; PaesLeme, R.; and Tennenholtz, M. 2014.
\newblock Signaling schemes for revenue maximization.
\newblock \emph{ACM Transactions on Economics and Computation}, 2(2): 1--19.

\bibitem[{Kamenica and Gentzkow(2011)}]{kamenica2011bayesian}
Kamenica, E.; and Gentzkow, M. 2011.
\newblock Bayesian persuasion.
\newblock \emph{American Economic Review}, 101(6): 2590--2615.

\bibitem[{Khot and Saket(2012)}]{KhotHarness2013}
Khot, S.; and Saket, R. 2012.
\newblock Hardness of Finding Independent Sets in Almost q-Colorable Graphs.
\newblock In \emph{2013 IEEE 54th Annual Symposium on Foundations of Computer
  Science}, 380--389. Los Alamitos, CA, USA: IEEE Computer Society.

\bibitem[{Kleinberg and Leighton(2003)}]{kleinberg2003value}
Kleinberg, R.; and Leighton, T. 2003.
\newblock The value of knowing a demand curve: Bounds on regret for online
  posted-price auctions.
\newblock In \emph{44th Annual IEEE Symposium on Foundations of Computer
  Science, 2003. Proceedings.}, 594--605. IEEE.

\bibitem[{Rabinovich et~al.(2015)Rabinovich, Jiang, Jain, and
  Xu}]{rabinovich2015information}
Rabinovich, Z.; Jiang, A.~X.; Jain, M.; and Xu, H. 2015.
\newblock Information disclosure as a means to security.
\newblock In \emph{Proceedings of the 2015 International Conference on
  Autonomous Agents and Multiagent Systems}, 645--653.

\bibitem[{Romano et~al.(2021)Romano, Tartaglia, Marchesi, and
  Gatti}]{romano2021online}
Romano, G.; Tartaglia, G.; Marchesi, A.; and Gatti, N. 2021.
\newblock Online Posted Pricing with Unknown Time-Discounted Valuations.
\newblock In \emph{Thirty-Fifth {AAAI} Conference on Artificial Intelligence,
  {AAAI} 2021, Thirty-Third Conference on Innovative Applications of Artificial
  Intelligence, {IAAI} 2021, The Eleventh Symposium on Educational Advances in
  Artificial Intelligence, {EAAI} 2021, Virtual Event, February 2-9, 2021},
  5682--5689. {AAAI} Press.

\bibitem[{Seifert(2006)}]{seifert2006posted}
Seifert, S. 2006.
\newblock \emph{Posted price offers in internet auction markets}, volume 580.
\newblock Springer Science \& Business Media.

\bibitem[{Shah, Johari, and Blanchet(2019)}]{Shah2019Semi}
Shah, V.; Johari, R.; and Blanchet, J. 2019.
\newblock Semi-Parametric Dynamic Contextual Pricing.
\newblock In \emph{Advances in Neural Information Processing Systems},
  2363--2373.

\bibitem[{Vasserman, Feldman, and Hassidim(2015)}]{vasserman2015implementing}
Vasserman, S.; Feldman, M.; and Hassidim, A. 2015.
\newblock Implementing the wisdom of waze.
\newblock In \emph{Twenty-Fourth International Joint Conference on Artificial
  Intelligence}, 660--666.

\bibitem[{Xiao, Liu, and Huang(2020)}]{TaoComplexity2020}
Xiao, T.; Liu, Z.; and Huang, W. 2020.
\newblock On the Complexity of Sequential Posted Pricing.
\newblock In \emph{Proceedings of the 19th International Conference on
  Autonomous Agents and MultiAgent Systems}, AAMAS '20, 1521–1529. Richland,
  SC: International Foundation for Autonomous Agents and Multiagent Systems.
\newblock ISBN 9781450375184.

\bibitem[{Xu(2020)}]{XuTractability2020}
Xu, H. 2020.
\newblock On the Tractability of Public Persuasion with No Externalities.
\newblock In Chawla, S., ed., \emph{Proceedings of the 2020 {ACM-SIAM}
  Symposium on Discrete Algorithms, {SODA} 2020, Salt Lake City, UT, USA,
  January 5-8, 2020}, 2708--2727. {SIAM}.

\end{thebibliography}

\clearpage
\onecolumn
\appendix

\section{Additional Discussion on Lemma~\ref{lm:smallDecrease}}\label{app:lemma_discussion}

%
Our main contribution in Section~\ref{sec:non_bayesian} (Lemma~\ref{lm:smallDecrease}) is to show that: $\max_{p \in [0,1]^n} \textsc{Rev}(\V^\epsilon,p) \geq \max_{p \in [0,1]^n} \textsc{Rev}(\V,p) -\epsilon$.
By letting $p^* \in \argmax_{p \in [0,1]^n} \textsc{Rev}(\V,p)$ be a revenue-maximizing price vector for the seller under distributions $\V$, one may na\"ively think of proving the result by simply showing that, given the price vector $p^{*,\epsilon} \in [0,1]^n$ with $p^{*,\epsilon}_i \coloneqq [p_i^*-\epsilon]_{+}$ and distributions $\V^\epsilon$, each buyer would buy the item at least with the same probability as for price vector $p^*$ and distributions $\V$, but paying a price that is only $\epsilon$ less.
This would imply that $\textsc{Rev}(\V^\epsilon,p^{*,\epsilon}) \geq  \textsc{Rev}(\V,p^*)-\epsilon$, proving the desired result.
However, this line of reasoning does \emph{not} work, since, as shown by the following example, it has a major flaw. 

\begin{example} \label{ex:fail}
	Consider a posted price auction with two buyers.
	In the first case (distributions $\V$), buyer 1 has valuation $v_1=\frac{1}{2}$ and buyer 2 has valuation $v_2=1$.
	In such setting, an optimal price vector $p^*$ is such that $p_1^*= \frac{1}{2}+\epsilon$ and $p_2^* = 1$, so that the revenue of the seller, namely $\textsc{Rev}(\V,p^{*})$, is $1$.
	In the second case (distributions $\V^\epsilon$), buyer 1 has valuations $v_1=\frac{1}{2}$ and buyer 2 has valuation $v_2=1-\epsilon$.
	Thus, the revenue of the seller for the price vector $p^{*,\epsilon}$ (with $p^{*,\epsilon}_1 = \frac{1}{2}$ and $p^{*,\epsilon}_2 = 1 -\epsilon$), namely $\textsc{Rev}(\V^\epsilon,p^{*,\epsilon})$, is $\frac{1}{2}$, since buyer 1 will buy the item.
\end{example}

The crucial feature of the setting described in Example~\ref{ex:fail} is that there is an optimal price vector in which one buyer (buyer 1) is offered a price that is too low, and, thus, the seller prefers not to sell the item to them, but rather to another buyer (buyer 2).
This prevents a direct application of the line of reasoning outlined above.
However, one could circumvent this issue by considering an optimal price vector such that the seller is never upset if some buyer buys.
In other words, prices must be such that each buyer is proposed a price that is at least as large as the seller's expected revenue in the posted price auction restricted to buyers following them.
In Example~\ref{ex:fail}, the optimal price vector $p^*$ such that $p_1^* =p_2^* = 1$ would be fine.

\section{Proofs omitted from Section~\ref{sec:hardness}}\label{app:proof}

\hardness*

\begin{proof}
	We employ a reduction from an $\mathsf{NP}$-hard problem originally introduced by~\citet{KhotHarness2013}, which we formally state in the following.
	For any positive integer $k \in \mathbb{N}_{>0}$, integer $l \in \mathbb{N}$ such that $l \ge 2^k + 1$, and arbitrarily small constant $\epsilon > 0$, the problem reads as follows. Given an undirected graph $G \coloneqq (U,E)$, distinguish between:
	\begin{itemize}
		\item \emph{Case 1}. There exists a $l$-colorable induced subgraph of $G$ containing a $1 - \epsilon$ fraction of all vertices, where each color class contains a $\frac{1-\epsilon}{l}$ fraction of all vertices.\footnote{A \emph{$l$-colorable induced subgraph} is identified by a subset of vertices such that it is possible to assign one among $l$ different colors to each vertex, in such a way that there are \emph{no} two adjacent vertices having the same color. Given some color, its associated color class is the subset of all vertices in the subgraph having that color.}
		\item \emph{Case 2}. Every independent set of $G$ contains less than a $\frac{1}{l^{k+1}}$ fraction of all vertices.\footnote{An \emph{independent set} of $G$ is a subset of vertices such that there are \emph{no} two adjacent vertices.}
	\end{itemize}
	We reduce from such problem for $k=2, l=5$, and $\epsilon=\frac{1}{2}$.
	Our reduction works as follows:
	\begin{itemize}
		\item \emph{Completeness}. If \emph{Case 1} holds, then there exists a signaling scheme, price function pair $(\phi,f)$ that provides the seller with an expected revenue at least as large as some threshold $\eta$ (see Equation~\eqref{eq:eta_hardness} below for its definition).
		\item \emph{Soundness}. If \emph{Case 2} holds, then the seller's expected revenue for any signaling scheme, price function pair $(\phi,f)$ is smaller than $\eta-\delta$ with $\delta \coloneqq \frac{1}{m^2}$, where $m$ denotes the number or vertices of the graph $G$.
	\end{itemize}
	This shows that it is $\mathsf{NP}$-hard to approximate the optimal seller's expected revenue up to within an additive error $\delta$.
	Thus, since $\delta$ depends polynomially on the size of the problem instance, this also shows that there is no additive FPTAS for the problem of computing a revenue-maximizing $(\phi,f)$ pair, unless $\mathsf{P}=\mathsf{NP}$.
	%
	%
	%
	
	\paragraph{Construction}
	Given an undirected graph $G \coloneqq (U,E)$, with vertices $U \coloneqq \{u_1,\dots,u_m\}$, we build a single-buyer Bayesian posted price auction as follows.\footnote{In a single-buyer setting, we always omit the subscript $i$ from symbols, as it is clear that they refer to the unique buyer. Moreover, with an overload of notation, we use buyer's signals as if they were signal profiles.}
	There is one state of nature $\theta_u \in \Theta$ for each vertex $u \in U$, and the prior belief over states $\mu \in \Delta_{\Theta}$ is such that $\mu_{\theta_u}=\frac{1}{m}$ for all $u \in U$.
	There is a finite set of possible buyer's valuations.
	For every vertex $u \in U$, there is a valuation vector $v_u \in [0,1]^m$ such that:
	\begin{itemize}
		\item $v_u(\theta_u) = 1$;
		\item $v_u(\theta_{u'}) = \frac{1}{2}$ for all $u' \in U : (u,u') \notin E$; and 
		\item $v_u(\theta_{u'}) = 0$ for all $u' \in U : (u,u') \in E$.
	\end{itemize}
	Each valuation $v_u$ has probability $\frac{1}{m^2}$ of occurring according to the distribution $\V$.
	Moreover, there is an additional valuation vector $v_o \in [0,1]^m$ such that $v_o(\theta) = \frac{1}{2}+\frac{l}{(1-\epsilon) 2m} $ for all $\theta \in \Theta$, having probability $1 - \frac{1}{m}$.
	%
	%

	\paragraph{Completeness}
	%
	%
	Assume that a $l$-colored induced subgraph of $G$ is given, and that it contains a fraction $1-\epsilon $ of vertices, while each color class is made up of a fraction $\frac{1-\epsilon}{l} $ of all vertices.
	We let $L \coloneqq \{1,\dots,l\}$ be the set of possible colors, with $j \in L$ denoting a generic color.
	In the following, we show how to build a signaling scheme, price function pair $(\phi,f)$ that provides the seller with an expected revenue greater than or equal to a suitably-defined threshold $\eta$ (Equation~\eqref{eq:eta_hardness}).
	The seller has $l+1$ signals available, namely $\sset \coloneqq \{ s_j \}_{j \in L} \cup \{ s_o \}$.
	For every vertex $u \in U$, if $u$ has been assigned some color $j \in L$ (that is, $u$ belongs to the induced subgraph), then we set $\phi_{\theta_u}(s_j) = 1$ and $\phi_{\theta_u}(s) = 0$ for all $s \in \sset : s \neq s_j$; otherwise, if $u$ has no color (that is, $u$ does not belong to the given subgraph), then we set $\phi_{\theta_u}(s_o) = 1 $ and $\phi_{\theta_u}(s) = 0$ for all $s\in \sset: s \neq s_o$.
	Moreover, the price function is such that
	\[
	f(s) = p_o \coloneqq \frac{1}{2}+\frac{l}{(1-\epsilon) 2m}  \,\,\, \text{for every signal} \,\,\, s \in \sset.
	\]
	%
	%
	%
	Next, we prove that, after receiving a signal $s_j \in \sset$ associated with some color $j \in L$, if the buyer has valuation $v_u$ for a node $u \in U$ colored of color $j$, then they will buy the item.
	In particular, the buyer's posterior $\xi_{s_j} \in \Delta_{\Theta}$ induced by signal $s_j$ is such that only state $\theta_u$ and states $\theta_{u'}$ for $u' \in U: (u,u') \notin E$ have positive probability (since, when the seller sends signal $s_j$, it must be the case that the vertex corresponding to the actual state of nature is colored of color $j$).
	Moreover, such probabilities are equal to $\xi_{s_j}(\theta_u) = \xi_{s_j}(\theta_{u'}) = \frac{l}{(1-\epsilon)m} $ (by applying Equation~\eqref{eq:posterior} and using the fact that each color class has a fraction $\frac{1-\epsilon}{l}$ of vertices).
	Thus, since $v_u(\theta_u) = 1$ and $v_u(\theta_{u'}) = \frac{1}{2}$ for all $u' \in U: (u,u') \notin E$, the expected valuation of the buyer given the posterior $\xi_{s_j}$ is
	\[
	\sum_{\theta \in \Theta} v_u(\theta) \, \xi_{s_j}(\theta) = \frac{1}{2} \left[ 1-\frac{l}{(1-\epsilon) m} \right] + \frac{l}{(1-\epsilon) m}=\frac{1}{2}+\frac{l}{(1-\epsilon) 2m}= p_o,
	\]
	and the buyer will buy the item.
	%
	%
	%
	Furthermore, when the seller sends signal $s_o$, their expected revenue is at least $ (1-\frac{1}{m}) p_o$, as it is always the case that the buyer buys the item when they have valuation $v_o$.
	Since the total probability of sending signals $s_j \in \sset$ for $j \in L$ is $1-\epsilon $ (given that the subgraph contains a fraction $1-\epsilon$ of vertices) and the probability of sending signal $s_o$ is $\epsilon $, we have that the seller's expected revenue is at least
	\begin{equation}\label{eq:eta_hardness}
		\eta \coloneqq (1-\epsilon) \left[ \frac{1-\epsilon}{m \, l}+1-\frac{1}{m} \right] p_o+ \epsilon \left(1-\frac{1}{m} \right)  p_o = \left[ \frac{(1-\epsilon)^2}{m \, l}+ \left( 1-\frac{1}{m} \right) \right]  p_o.
	\end{equation}
	where the factor $\frac{1-\epsilon}{m \, l}+1-\frac{1}{m} $ represents the probability that the buyer buys when sending a signal $s_j$ (this happens when either the buyer has valuation $v_u$ for a vertex $u \in U$ colored of color $j$ or the buyer has valuation $v_o$).

	%

	\paragraph{Soundness}
	%
	%
	By contradiction, we show that, if there exists a signaling scheme, price function pair $(\phi,f)$ with seller's expected revenue exceeding $\eta - \delta$, then the graph $G$ admits an independent set of size $\frac{1}{2l} (1-\epsilon)^2m>\frac{m}{l^{k+1}}$ (recall the choice of values for $k$, $l$, and $\epsilon$).
	If the seller's revenue is greater than $\eta-\delta$, by an averaging argument there must be at least one signal $s^* \in \sset$ whose contribution to the revenue $\sum_{\theta \in \Theta} {\mu}_\theta {\phi}_\theta(s^*) \textsc{Rev} \left( \mathcal{V},f(s^*), {\xi}_{s^*} \right)$ is more than $\eta-\delta$, where $\xi_{s^*} \in \Delta_{\Theta}$ is the buyer's posterior induced by signal $s^*$.
	%
	%
	Since the expected revenue cannot exceed the expected payment, the price $p^*\coloneqq f(s^*)$ that the seller proposes to the buyer when signal $s^*$ is sent must be greater than
	\begin{align*}
		\eta-\delta &=\left[\frac{(1-\epsilon)^2}{m \, l}+ \left(1-\frac{1}{m} \right)\right] p_o -\delta\\
		& = \left[\frac{(1-\epsilon)^2}{m \, l}+ \left(1-\frac{1}{m} \right)\right]  \left[\frac{1}{2}+\frac{l}{(1-\epsilon) 2m}\right]-\delta \\
		& \geq \left(1-\frac{1}{m} \right)\left[\frac{1}{2}+\frac{l}{(1-\epsilon) 2m}\right]-\delta  \\
		&\geq \frac{1}{2}+\frac{l}{(1-\epsilon)2m}-\frac{1}{m}-\frac{l}{(1-\epsilon) 2m^2}-\delta > \frac{1}{2},
	\end{align*} 
	where the last inequality holds for $m \geq 2$ since we set $\epsilon = \frac{1}{2}$, $l =5$, and $\delta = \frac{1}{m^2}$.
	Additionally, the price $p^*$ must be smaller than $p_o$ (see the completeness proof), otherwise, when the buyer has valuation $v_o$, they would never buy the item, resulting in a contribution to the seller's revenue at most of $\frac{1}{m}$ (recall that $v_o$ happens with probability $1-\frac{1}{m}$ and all other buyer's valuations do not exceed $1$).
	As a result, it must be the case that $p^* \in \left( \frac{1}{2}, p_o \right]$.
	Next, we prove that, after receiving signal $s^*$, the buyer will buy the item in all the cases in which their valuation belongs to a subset of valuations $v_u$ containing at least a fraction $\frac{1}{2l}(1-\epsilon)^2$ of all the valuations $v_u$.
	Indeed, if this is not the case, then the contribution to the seller's expected revenue due to signal $s^*$ would be less than
	\begin{align*}
		p^*\left[1-\frac{1}{m}+\frac{(1-\epsilon)^2}{2m \, l}\right]&\le  p_o \left[1-\frac{1}{m}+\frac{(1-\epsilon)^2}{2m \, l}\right] =  \eta - p_o \frac{(1-\epsilon)^2}{2m \, l} \le \eta-\delta,
	\end{align*}
	where the last inequality holds since $\delta=\frac{1}{m^2}\leq  p_o\frac{(1-\epsilon)^2}{2m \, l}$ for $m$ large enough. 
	%
	%
	%
	Let $U^* \subseteq U$ be the subset of vertices $u \in U$ such that the buyer will buy the item for their corresponding valuations $v_u$.
	We have that $|U^*| \geq \frac{m}{2l}(1-\epsilon)^2$.
	Next, we show that $U^*$ constitutes an independent set of $G$.
	First, since $p^* > \frac{1}{2}$, when the buyer's valuation is $v_u$ such that $u \in U^*$, then the buyer must value the item more than $\frac{1}{2}$, otherwise they would not buy.
	By contradiction, suppose that there is a couple of vertices $u,u' \in U^*$ such that $(u,u') \in E$.
	%
	W.l.o.g., let us assume that $\xi_{s^*}(\theta_u) \leq \xi_{s^*}(\theta_{u'})$.
	Then, the buyer's expected valuation induced by posterior $\xi_{s^*}$ is
	\[
	\xi_{s^*}(\theta_u)+\frac{1}{2} \left( 1 -\xi_{s^*}(\theta_u)-\xi_{s^*}(\theta_{u'}) \right)= \frac{1+\xi_{s^*}(\theta_u)-\xi_{s^*}(\theta_{u'})}{2}\le\frac{1}{2},
	\]
	which is a contradiction.
	Given that, by our initial assumption, the size of every independent set must be smaller than $\frac{m}{l^{k+1}}<\frac{1}{2l}(1-\epsilon)^2m$, we reach the final contradiction proving the result.
	%
	%
	%
\end{proof}

%
%
%
%

%
%
%
%
%


\section{Proofs omitted from Section~\ref{sec:general}}

\decomposition*

\begin{proof} 
	The probability distribution $\gamma \in \Delta_{\Xi^q}$ over $q$-uniform posteriors in the statement is defined as follows.
	Let $\xi^q \in \Xi^q$ be a buyer's posterior defined as the empirical mean of $q$ vectors built form $q$ {i.i.d.} samples drawn from the given posterior $\xi$.
	In particular, each sample is obtained by randomly drawing a state of nature, with each state $\theta \in \Theta$ having probability $\xi(\theta)$ of being selected, and, then, a $d$-dimensional vector is built by letting all its components equal to $0$, except for that one corresponding to $\theta$, which is set to $1$.
	Notice that $\xi^q$ is a random vector supported on $q$-uniform posteriors, whose expected value is posterior $\xi$.
	Then, $\gamma$ is such that, for every $\tilde \xi \in \Xi^q$, it holds $\gamma(\tilde \xi) = \textnormal{Pr} \left\{  \xi^q = \tilde \xi \right\}$.
	
	It is easy to check that $\xi(\theta) =\sum_{\tilde\xi \in \Xi^q} \gamma(\tilde\xi) \, \tilde\xi(\theta) = \mathbb{E}_{\tilde\xi \sim \gamma} \left[ \tilde\xi(\theta) \right]$ for all $\theta \in \Theta$, proving the first condition needed.

	Next, we prove that $\gamma$ satisfies Equation~\eqref{eq:general_bound}.
	To do so, we first introduce some useful definitions.
	For every $q$-uniform posterior $\tilde\xi \in \Xi^q$, with an overload of notation we let $\gamma(\tilde\xi,\theta,j)$ be the conditional probability of having drawn $\tilde\xi$ from $\gamma$ given that the drawn posterior assigns probability $\frac{j}{q}$ to state $\theta \in \Theta$ with $j \in \{0, \ldots, q\}$.
	Formally, for every $\tilde \xi \in \Xi^q$:
	\begin{align*}
		\gamma(\tilde\xi,\theta,j) \coloneqq
		\begin{cases}
			\displaystyle\frac{ \gamma(\tilde\xi)}{\sum_{\xi'\in \Xi^q:\xi'(\theta)=j/q}\gamma(\xi')} & \text{if $\tilde\xi(\theta)=\frac{j}{q}$}\\
			& \\
			\hspace{1cm}0 & \text{otherwise}
		\end{cases}.
	\end{align*}
	Then, for every $\theta \in \Theta$ and $j \in \{0,\ldots, q\}$, we let $\gamma^{\theta,j} $ be a probability distribution over $\Delta_{\Theta}$ supported on $q$-uniform posteriors such that $\gamma^{\theta,j}(\tilde\xi) \coloneqq \gamma(\tilde\xi,\theta,j)$ for all $\tilde\xi \in \Xi^q$.
	Moreover, for every buyer $i \in \N$ and matrix $V \in [0,1]^{n \times d}$ of buyers' valuations, we let $\Xi^{i,V} \subseteq \Xi^q$ be the set of  $q$-uniform posteriors that do \emph{not} change buyer $i$'s expected valuation by more than an additive factor $\epsilon$ with respect to their valuation in posterior $\xi$.
	Formally, 
	\[
	\Xi^{i,V} \coloneqq \left\{ \tilde\xi \in \Xi^q \mid  \left| V_i\xi-V_i\tilde\xi \right| \le \epsilon \right\}.
	\]

	In order to complete the proof, we introduce the following three lemmas (with Lemmas~\ref{lm:small}~and~\ref{lm:small_2} being adapted from~\citep{CastiglioniPersuading2021}).
	The first lemma shows that, for every state of nature $\theta \in \Theta$, it is possible to bound the cumulative probability mass that the distribution $\gamma$ assigns to $q$-uniform posteriors $\tilde\xi \in \Xi^q$ such that $\tilde\xi(\theta)$ differs from $\xi(\theta)$ by at least $\frac{\epsilon}{4}$ (in absolute terms).
	Formally:
	\begin{lemma}[Essentially Lemma~5 by~\citet{CastiglioniPersuading2021}]\label{lm:small}
		Given $\xi \in \Delta_{\Theta}$, for every $\theta \in \Theta$ it holds:
		\[
		\sum_{j  : \left| \frac{j}{q}-\xi(\theta) \right| \ge \frac{\epsilon}{4}} \; \sum_{\tilde\xi \in \Xi^q: \hat\xi(\theta)=\frac{j}{q}} \gamma(\tilde\xi) \le \frac{\alpha}{2} \, \xi(\theta),
		\]
		where $\gamma \in \Delta_{\Xi^q}$ is the probability distribution over $q$-uniform posteriors introduced at the beginning of the proof.
	\end{lemma}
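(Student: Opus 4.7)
The plan is to read the left-hand side as a Binomial tail probability and then match it against the prescribed choice of $q$ by a careful case analysis based on the magnitude of $\xi(\theta)$.

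First, I would translate the combinatorial sum into probabilistic language. By construction of $\gamma$, the $q$-uniform posterior $\tilde\xi$ is obtained as the empirical distribution of $q$ i.i.d.\ draws $X_1,\dots,X_q$ from $\xi$. Hence the random variable $N_\theta := q\,\tilde\xi(\theta) = \sum_{k=1}^q \mathbb{1}[X_k = \theta]$ is distributed as $\mathrm{Bin}(q,\xi(\theta))$, and the nested sum in the statement is exactly
\[
\Pr_{\tilde\xi \sim \gamma}\!\left\{\left|\tfrac{N_\theta}{q}-\xi(\theta)\right|\ge \tfrac{\epsilon}{4}\right\}.
\]
The goal therefore reduces to showing that this deviation probability is at most $(\alpha/2)\,\xi(\theta)$ whenever $q \ge (32/\epsilon^2)\log(4/\alpha)$.

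Next, I would decompose the tail as the sum of an upper tail $\Pr\{N_\theta \ge q(\xi(\theta)+\epsilon/4)\}$ and a lower tail $\Pr\{N_\theta \le q(\xi(\theta)-\epsilon/4)\}$, and bound each by a case split on $\xi(\theta)$. In the regime where $\xi(\theta)$ is bounded away from $0$ (say $\xi(\theta)\ge \alpha^3/64$, the threshold below which an additive bound ceases to dominate $(\alpha/2)\xi(\theta)$), I would apply Hoeffding's inequality to obtain $2\exp(-q\epsilon^2/8)\le 2(\alpha/4)^4$, and verify directly that this quantity is at most $(\alpha/2)\,\xi(\theta)$. In the complementary regime where $\xi(\theta)$ is small, the lower tail is vacuous because $\xi(\theta)-\epsilon/4<0$, while the upper tail can be controlled by a multiplicative (Chernoff) bound of the form $\Pr\{N_\theta\ge q\epsilon/4\}\le (4e\,\xi(\theta)/\epsilon)^{q\epsilon/4}$; the factor $\xi(\theta)^{q\epsilon/4}$ on the right-hand side then easily absorbs the required extra factor of $\xi(\theta)$ once $q$ is as large as prescribed.

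Finally, I would plug in $q=(32/\epsilon^2)\log(4/\alpha)$ to check that both sub-bounds are at most $(\alpha/2)\,\xi(\theta)$, conclude by summing the two tail contributions, and observe that the case $\xi(\theta)=0$ is trivial since the left-hand side vanishes. The main obstacle I anticipate is precisely the presence of the multiplicative factor $\xi(\theta)$ on the right-hand side: a naive additive concentration inequality like Hoeffding gives a uniform bound that is too crude when $\xi(\theta)$ is tiny, so the threshold defining the two regimes must be picked carefully and the Chernoff-type estimate in the small-$\xi(\theta)$ regime has to retain enough dependence on $\xi(\theta)$ to match the target.
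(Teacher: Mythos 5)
The paper does not actually prove this lemma in-text: it is introduced as ``Essentially Lemma~5 by~\citet{CastiglioniPersuading2021}'' and is imported (with adaptation) from that reference, so there is no internal proof against which to compare. That said, your strategy is exactly the standard one used for this class of decomposition lemmas (it also appears in \citet{ChengMixture2015}): read the left-hand side as the Binomial tail probability $\textnormal{Pr}\{|\textnormal{Bin}(q,\xi(\theta))/q-\xi(\theta)|\ge\epsilon/4\}$, then case-split on $\xi(\theta)$, using Hoeffding in the ``large'' regime and a binomial/multiplicative Chernoff estimate in the ``small'' regime to recover the multiplicative factor $\xi(\theta)$ on the right. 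Your Hoeffding threshold $\xi(\theta)\ge\alpha^3/64$ is the right one: with $q=\frac{32}{\epsilon^2}\log\frac{4}{\alpha}$ you get $2e^{-q\epsilon^2/8}=2(\alpha/4)^4=\alpha^4/128$, which is $\le(\alpha/2)\xi(\theta)$ precisely when $\xi(\theta)\ge\alpha^3/64$.

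One small gap in your case analysis: in the regime $\xi(\theta)<\alpha^3/64$ you dismiss the lower tail as vacuous because $\xi(\theta)-\epsilon/4<0$. That reasoning is only valid when $\xi(\theta)<\epsilon/4$. The two conditions are not the same: if $\alpha^3/64>\epsilon/4$ (i.e.\ $\alpha^3>16\epsilon$), there is a nonempty band $\epsilon/4\le\xi(\theta)<\alpha^3/64$ in which the lower tail is nontrivial but Hoeffding does not yet give the required $(\alpha/2)\xi(\theta)$ bound. To close it you would use a multiplicative Chernoff bound on both tails in that band, which works because the relative deviation $(\epsilon/4)/\xi(\theta)$ is bounded below by $(\epsilon/4)\cdot(64/\alpha^3)$, making $e^{-q\xi(\theta)\delta^2/3}$ extremely small. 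In the paper's actual applications (Lemma~\ref{lm:optQUniform} and Lemma~\ref{lm:privateQuniform}) the parameters are chosen with $\alpha=\epsilon$, so $\alpha^3\le 16\epsilon$ holds trivially and the band is empty, but as the lemma is stated for arbitrary $\alpha,\epsilon>0$ your sketch should either address the band or record the (harmless) restriction $\alpha^3\le 16\epsilon$.
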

	The second lemma, which is useful to prove Lemma~\ref{lm:stabilty}, shows that, for $q$-uniform posteriors $\tilde \xi \in \Xi^q$ such that $\tilde\xi (\theta)$ is sufficiently close to $\xi(\theta)$ for a state of nature $\theta \in \Theta$, the expected utility of each buyer is close to their utility in the given posterior $\xi$ with high probability.
	Formally:
	\begin{lemma}[Essentially Lemma~6 by~\citet{CastiglioniPersuading2021}]\label{lm:small_2}
		Given $\xi  \in \Delta_{\Theta}$, matrix $V \in [0,1]^{n \times d}$ of buyers' valuations, state of nature $\theta \in \Theta$, and $j : \left| \frac{j}{q}-\xi(\theta) \right| \le \frac{\epsilon}{4}$, the following holds for every buyer $i \in \N$:
		\[
		\sum_{\tilde\xi \in \Xi^{i,V} : \tilde\xi(\theta)=\frac{j}{q} } \gamma(\tilde\xi) \ge \left(1-\frac{\alpha}{2}\right) \sum_{\tilde\xi \in \Xi^q : \tilde\xi(\theta)=\frac{j}{q} } \gamma(\tilde\xi),
		\]
		where $\gamma \in \Delta_{\Xi^q}$ is the probability distribution over $q$-uniform posteriors introduced at the beginning of the proof.
	\end{lemma}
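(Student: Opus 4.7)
The plan is to exploit the explicit construction of $\gamma$ as the law of the empirical mean $\xi^q$ of $q$ i.i.d.\ basis-vector samples with underlying state distribution $\xi$. Conditioning on $\tilde\xi(\theta) = j/q$ is equivalent to fixing that exactly $j$ of the $q$ i.i.d.\ draws hit state $\theta$; by exchangeability, the remaining $q-j$ draws are i.i.d.\ with law $\xi$ restricted to $\Theta \setminus \{\theta\}$ and renormalized. The case $j = q$ is trivial (no randomness remains), so I treat $j < q$. Writing $\tilde\xi$ as the sum of a deterministic part (the $j$ samples at $\theta$) plus the empirical average of the $q-j$ remaining draws yields a natural decomposition of $V_i \tilde\xi$ to work with.

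First I would compute the conditional expectation $\bar v \coloneqq \mathbb{E}[V_i \tilde\xi \mid \tilde\xi(\theta) = j/q]$. Letting $m \coloneqq \sum_{\theta'\neq\theta} \xi(\theta') v_i(\theta') / (1-\xi(\theta))$ denote the conditional expected valuation given a state different from $\theta$, I get $\bar v = (j/q)\, v_i(\theta) + (1 - j/q)\, m$ and $V_i \xi = \xi(\theta)\, v_i(\theta) + (1-\xi(\theta))\, m$. Subtracting gives
\[
|V_i \xi - \bar v| \;=\; |\xi(\theta) - j/q| \cdot |v_i(\theta) - m| \;\le\; \epsilon/4,
\]
using the hypothesis $|\xi(\theta) - j/q| \le \epsilon/4$ and that valuations are in $[0,1]$.

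The key step is controlling the fluctuation $V_i \tilde\xi - \bar v = \tfrac{1}{q} \sum_{k=1}^{q-j}(v_i(\theta_k) - m)$, where the $\theta_k$ are i.i.d.\ draws from the conditional law. Since the summands are independent centered random variables bounded by $1$ in absolute value, Hoeffding's inequality gives
\[
\Pr\bigl\{|V_i \tilde\xi - \bar v| \ge 3\epsilon/4\bigr\} \;\le\; 2\exp\!\bigl(-9 \epsilon^2 q^2 /(8(q-j))\bigr) \;\le\; 2\exp(-9 \epsilon^2 q / 8),
\]
which for $q = \tfrac{32}{\epsilon^2}\log\tfrac{4}{\alpha}$ is at most $\alpha/2$ (a direct substitution). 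Combining this with $|V_i \xi - \bar v| \le \epsilon/4$ via the triangle inequality yields $|V_i \xi - V_i \tilde\xi| \le \epsilon$ with conditional probability at least $1 - \alpha/2$, which, by the definition of $\Xi^{i,V}$, is exactly the claimed bound $\Pr\{\tilde\xi \in \Xi^{i,V} \mid \tilde\xi(\theta) = j/q\} \ge 1 - \alpha/2$.

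The main obstacle I foresee is purely bookkeeping: cleanly exposing the remaining $q - j$ samples as i.i.d.\ so that Hoeffding applies, and splitting the tolerance $\epsilon = \epsilon/4 + 3\epsilon/4$ in the right way so that the absolute-error budget is respected and the Hoeffding constant matches the preset value of $q$. Everything else is a routine concentration argument and mirrors the adaptation of Lemma~6 in~\citet{CastiglioniPersuading2021} to the present one-sided setting.
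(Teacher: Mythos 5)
The paper does not supply its own proof of this lemma---it is stated as ``essentially Lemma~6'' of the cited reference and the surrounding Theorem~\ref{thm:decomposition} proof simply invokes it---so there is no in-paper argument to compare against. Your proof is correct and is the natural one: condition on the number $j$ of samples landing on $\theta$, observe by exchangeability that the remaining $q-j$ draws are i.i.d.\ from $\xi$ restricted to $\Theta\setminus\{\theta\}$, split the error budget $\epsilon = \epsilon/4 + 3\epsilon/4$ between the deterministic shift of the conditional mean and the stochastic fluctuation, and close with Hoeffding. The computations (triangle inequality giving $|V_i\xi - \bar v|\le\epsilon/4$, the fluctuation bound, and the substitution $q = 32\epsilon^{-2}\log(4/\alpha)$) all check out, and the trivial cases $j=q$ and $\sum_{\tilde\xi:\tilde\xi(\theta)=j/q}\gamma(\tilde\xi)=0$ pose no difficulty.

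Two small remarks on the write-up. First, you describe the summands $v_i(\theta_k)-m$ as ``bounded by $1$ in absolute value,'' which corresponds to a Hoeffding range of $2$, yet the exponent you then write, $-9\epsilon^2 q^2 /(8(q-j))$, uses the sharper range of $1$ (these random variables lie in the interval $[-m,\,1-m]$ of length $1$). Either constant comfortably clears the target $\alpha/2$ with the given $q$, so the conclusion is unaffected; just align the justification with the formula. Second, you characterize the setting as ``one-sided,'' referring to Definition~\ref{def:perturbation}, but the set $\Xi^{i,V}$ used in this lemma is defined with a two-sided bound $|V_i\xi - V_i\tilde\xi|\le\epsilon$, and your proof (correctly) establishes the two-sided concentration. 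So no adaptation to a one-sided statement is actually taking place here; the one-sided relaxation only matters in the downstream Lemma~\ref{lm:stabilty}.
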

	Finally, the third lemma that we need reads as follows:
	\begin{lemma}\label{lm:stabilty}
		Given $\xi \in \Delta_{\Theta}$, for every state of nature $\theta \in \Theta$ and $j:\left| \frac{j}{q}-\xi(\theta) \right| \le \frac{\epsilon}{4}$, the probability distribution $\gamma^{\theta, j}$ defined at the beginning of the proof is $\left( \frac{\alpha}{2},\epsilon \right)$-decreasing around posterior $\xi$.
	\end{lemma}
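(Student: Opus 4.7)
The plan is to reduce the claim directly to Lemma~\ref{lm:small_2} by unpacking the two definitions involved. Unpacking the notion of $(\alpha/2,\epsilon)$-decreasing, I must show that for every matrix $V \in [0,1]^{n\times d}$ and every buyer $i \in \N$,
\[
\Pr_{\tilde\xi \sim \gamma^{\theta,j}}\bigl\{ V_i \tilde\xi \ge V_i \xi - \epsilon \bigr\} \ge 1-\frac{\alpha}{2}.
\]
Since $\Xi^{i,V}$ is defined as the set of $\tilde\xi \in \Xi^q$ satisfying $|V_i\xi - V_i\tilde\xi| \le \epsilon$, membership in $\Xi^{i,V}$ already implies $V_i\tilde\xi \ge V_i\xi - \epsilon$. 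So it suffices to lower bound the $\gamma^{\theta,j}$-mass of $\Xi^{i,V}$.

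Next I would use the explicit form of the conditional distribution $\gamma^{\theta,j}$. By construction, $\gamma^{\theta,j}(\tilde\xi) = 0$ unless $\tilde\xi(\theta) = j/q$, and for such $\tilde\xi$,
\[
\gamma^{\theta,j}(\tilde\xi) = \frac{\gamma(\tilde\xi)}{Z_{\theta,j}}, \qquad Z_{\theta,j} \coloneqq \sum_{\xi' \in \Xi^q:\, \xi'(\theta) = j/q} \gamma(\xi').
\]
Therefore
\[
\Pr_{\tilde\xi \sim \gamma^{\theta,j}}\bigl\{\tilde\xi \in \Xi^{i,V}\bigr\} = \frac{1}{Z_{\theta,j}} \sum_{\tilde\xi \in \Xi^{i,V}:\, \tilde\xi(\theta) = j/q} \gamma(\tilde\xi).
\]

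Finally, the hypothesis $|j/q - \xi(\theta)| \le \epsilon/4$ places us exactly in the regime of Lemma~\ref{lm:small_2}, which states that $\sum_{\tilde\xi \in \Xi^{i,V}:\, \tilde\xi(\theta) = j/q} \gamma(\tilde\xi) \ge (1 - \alpha/2)\, Z_{\theta,j}$. Dividing by $Z_{\theta,j}$ yields the required inequality, and applying the inclusion $\Xi^{i,V} \subseteq \{\tilde\xi : V_i\tilde\xi \ge V_i\xi - \epsilon\}$ completes the argument. There is no real obstacle here: the entire content of the lemma is a translation of Lemma~\ref{lm:small_2} from an unconditional statement about $\gamma$ into a conditional one about $\gamma^{\theta,j}$, and the only subtlety is to keep careful track of the normalization factor $Z_{\theta,j}$ introduced by the conditioning.
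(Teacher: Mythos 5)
Your proof is correct and follows essentially the same route as the paper: unpack the definition of $(\alpha/2,\epsilon)$-decreasing, rewrite the $\gamma^{\theta,j}$-probability as a normalized sum of $\gamma$-mass on the slice $\{\tilde\xi : \tilde\xi(\theta)=j/q\}$, and invoke Lemma~\ref{lm:small_2}. You are in fact slightly more careful than the paper, which writes the first step as an equality where the inclusion $\Xi^{i,V} \subseteq \{\tilde\xi : V_i\tilde\xi \ge V_i\xi - \epsilon\}$ only gives an inequality in the desired direction.
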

	\begin{proof}
		According to Definition~\ref{def:perturbation} and by reversing the inequalities, we need to prove that, for every matrix $V \in [0,1]^{n \times d}$ of buyers' valuations and buyer $i \in \N$, it holds $\textnormal{Pr}_{\tilde\xi \sim \gamma^{\theta, j}} \left\{ V_i\tilde\xi \geq V_i\xi -\epsilon \right\} \geq 1 -\frac{\alpha}{2}$.
		By using the definition of the set $\Xi^{i, V}$, Lemma~\ref{lm:small_2}, and the definition of $\gamma^{\theta,j}$, we can write the following:
		\begin{align*}
			\textnormal{Pr}_{\tilde\xi \sim \gamma^{\theta, j}} \left\{ V_i\tilde\xi \geq V_i\xi-\epsilon \right\} & = \sum_{\tilde\xi \in \Xi^{i,V}} \gamma^{\theta,j}(\tilde\xi)=  \sum_{\tilde\xi \in \Xi^{i,V}: \tilde\xi(\theta) = \frac{j}{q}} \frac{\gamma(\tilde\xi)}{\sum_{\xi' \in \Xi^q: \xi'(\theta)=\frac{j}{q}} \gamma(\xi') } \\
			& \ge \left(  1- \frac{\alpha}{2} \right) \sum_{\tilde\xi \in \Xi^q: \tilde\xi(\theta) = \frac{j}{q}} \frac{\gamma(\tilde\xi)}{\sum_{\xi' \in \Xi^q: \xi'(\theta) = \frac{j}{q}} \gamma(\xi') } = 1 - \frac{\alpha}{2},
		\end{align*}
		which proves the lemma.
	\end{proof}

	Now, we are ready to prove the theorem, by means of the following inequalities:
	%
	\begin{align*}
		\mathbb{E}_{\tilde\xi \sim \gamma, V \sim \V} \Big[ \tilde\xi(\theta) \, g( V\tilde\xi ) \Big] & = \sum_{\tilde\xi\in \Xi^q} \gamma(\tilde\xi) \, \tilde\xi(\theta) \mathbb{E}_{V \sim \V} \Big[ g( V\tilde\xi  ) \Big]\\
		& \hspace{-2cm}\geq  \sum_{j: \left| \frac{j}{q}-\xi(\theta) \right| \leq \frac{\epsilon}{4}} \frac{j}{q} \sum_{\tilde\xi \in \Xi^q: \tilde\xi(\theta)=\frac{j}{q}} \gamma(\tilde\xi)  \, \mathbb{E}_{V \sim \V} \Big[ g( V\tilde\xi  ) \Big] \quad \textnormal{(By dropping terms from the sum)} \\
		& \hspace{-2cm} = \sum_{j: \left| \frac{j}{q}-\xi(\theta) \right| \leq \frac{\epsilon}{4}} \frac{j}{q}  \left( \sum_{\xi' \in \Xi^q:\xi'(\theta)=\frac{j}{q}} \gamma(\xi')  \right) \sum_{\tilde\xi \in \Xi^q:\tilde\xi(\theta)=\frac{j}{q}} \frac{\gamma(\tilde\xi)  }{\sum_{\xi' \in\Xi^q:	\xi'(\theta)=\frac{j}{q}} \gamma(\xi')} \mathbb{E}_{V \sim \V} \Big[ g( V\tilde\xi ) \Big]  \\
		& \hspace{-2cm} = \sum_{j: \left| \frac{j}{q}-\xi(\theta) \right| \leq \frac{\epsilon}{4}} \frac{j}{q}  \left( \sum_{\xi' \in \Xi^q:\xi'(\theta)=\frac{j}{q}} \gamma(\xi')  \right) \mathbb{E}_{\tilde\xi \sim \gamma^{\theta,j}, V \sim \V} \left[ g( V\tilde\xi  ) \right] \\
		& \hspace{-2cm} = \sum_{j: \left| \frac{j}{q}-\xi(\theta) \right| \leq \frac{\epsilon}{4}} \frac{j}{q}  \left( \sum_{\xi' \in \Xi^q:\xi'(\theta)=\frac{j}{q}} \gamma(\xi')  \right) \left[ \left(1-\frac{\alpha}{2} \right) \mathbb{E}_{V \sim \V} \Big[ h(V\tilde\xi) \Big]-\delta\epsilon \right]\quad \textnormal{(By Def.~\ref{def:stability} -- Lem.~\ref{lm:stabilty})}\\
		&\hspace{-2cm} =  \left[ \left(1-\frac{\alpha}{2} \right) \mathbb{E}_{V \sim \V} \Big[ h(V\tilde\xi) \Big]-\delta\epsilon \right] \sum_{j: \left| \frac{j}{q}-\xi(\theta) \right| \leq \frac{\epsilon}{4}} \frac{j}{q}  \sum_{\xi' \in \Xi^q:\xi'(\theta)=\frac{j}{q}} \gamma(\xi')   \\	
		&\hspace{-2cm} =  \left[ \left(1-\frac{\alpha}{2} \right) \mathbb{E}_{V \sim \V} \Big[ h(V\tilde\xi) \Big]-\delta\epsilon \right]  \left(\xi(\theta) -  \sum_{j: \left| \frac{j}{q}-\xi(\theta) \right| \geq \frac{\epsilon}{4}} \sum_{\xi' \in \Xi^q:\xi'(\theta)=\frac{j}{q}} \gamma(\xi')  \right) \\
		& \hspace{-2cm} \geq \xi(\theta) \left[ \left(1-\alpha \right) \mathbb{E}_{V \sim \V} \Big[ h(V\tilde\xi) \Big]-\delta\epsilon \right]. \quad \textnormal{(By Lemma~\ref{lm:small}), $( 1-{\alpha}/{2} )^2 \geq 1-\alpha$, and $\alpha<1$)}
	\end{align*}
	%
	This concludes the proof.
\end{proof}

\section{Proofs omitted from Section~\ref{sec:non_bayesian}}

\lemmaone*

\begin{proof}
	In order to prove the lemma, we show an even stronger result: for every price vector $p \in [0,1]^n$, it is always possible to recover another price vector $p' \in [0,1]^n$ that provides the seller with an expected revenue at least as large as that provided by $p$, and such that $p'_i \geq  \textsc{Rev}_{>i}(\V,p')$ for every $i\in \N$.
	Let us assume that $p$ does \emph{not} satisfy the required condition for some buyer $i \in \N$.
	Then, let $p'$ be such that $p'_i = \textsc{Rev}_{>i}(\V,p)> p_i$ and $p'_j=p_j$ for all $j \in \N: j \neq i$.
	Since by construction $\textsc{Rev}_{>i}(\V,p')=\textsc{Rev}_{>i}(\V,p)$, the condition $p'_i \ge \textsc{Rev}_{>i}(\V,p')$ holds.
	%
	%
	Moreover, the seller's expected revenue for $p'$ in the auction restricted to all buyers $j \in \N: j \geq i$, namely $\textsc{Rev}_{\geq i}(\V,p')$, is such that:
	\begin{align*}
	\textsc{Rev}_{\geq i}(\V,p') & = \text{Pr}_{v_i \sim \V_i} \left\{ v_i \geq p'_i \right\} p'_i + \left( 1-\text{Pr}_{v_i \sim \V_i} \left\{  v_i \geq p'_i \right\} \right) \textsc{Rev}_{>i}(\V,p') \\
	& = \text{Pr}_{v_i \sim \V_i} \left\{ v_i \geq p'_i \right\} p'_i + \left( 1-\text{Pr}_{v_i \sim \V_i} \left\{  v_i \geq p'_i \right\} \right) \textsc{Rev}_{>i}(\V,p) \\
	& = \text{Pr}_{v_i \sim \V_i} \left\{ v_i \geq p_i \right\} p'_i + \left( 1-\text{Pr}_{v_i \sim \V_i} \left\{  v_i \geq p_i \right\} \right) \textsc{Rev}_{>i}(\V,p) \\
	& \geq \text{Pr}_{v_i \sim \V_i} \left\{ v_i \geq p_i \right\} p_i + \left( 1-\text{Pr}_{v_i \sim \V_i} \left\{  v_i \geq p_i \right\} \right) \textsc{Rev}_{>i}(\V,p) \\
	& = \textsc{Rev}_{\geq i}(\V,p), 
	\end{align*}
	where the first equality and the last one holds by definition of $\textsc{Rev}_{\geq i}(\V,p')$, the second one follows from $\textsc{Rev}_{>i}(\V,p')=\textsc{Rev}_{>i}(\V,p)$, the third one holds since $p'_i = \textsc{Rev}_{>i}(\V,p)$, while the inequality follows from $p'_i \geq p_i$.
	As a result, we can conclude that $\textsc{Rev}(\V,p') \geq \textsc{Rev}(\V,p)$.
	The lemma is readily proved by iteratively applying the procedure described above until we get a price vector $p' \in [0,1]^n$ such that $p'_i \geq  \textsc{Rev}_{>i}(\V,p')$ for every buyer $i\in \N$, starting from an optimal price vector $p^*\in \argmax_{p \in [0,1]^n} \textsc{Rev}(\V,p)$.
	%
	%
\end{proof}

\lemmatwo*

\begin{proof}
	Let $p^* \in [0,1]^n$ be a price vector such that $p^* \in \argmax_{p \in [0,1]^n} \textsc{Rev}(\V,p)$ and $p^*_i \geq  \textsc{Rev}_{>i}(\V,p^*)$ for every $i\in \N$.
	Such price vector is guaranteed to exist by Lemma~\ref{lm:bigpayments}.
	We show by induction that $\textsc{Rev}_{\geq i}(\V^\epsilon,p^{*,\epsilon})\geq \textsc{Rev}_{\geq i}(\V,p)-\epsilon$.
	As a base case, it is easy to check that
	\begin{align*}
	\textsc{Rev}_{\geq n}(\V^\epsilon,p^{*,\epsilon}) & =  [p^*_n-\epsilon]_+  \, \text{Pr}_{v_n \sim \V_n^\epsilon} \left\{ v_n \geq [p^*_n-\epsilon]_+  \right\}  \\
	& \geq (p^*_n-\epsilon) \text{Pr}_{v_n \sim \V_n} \left\{ v_n \geq p^*_n \right\} \\
	& \geq p^*_n \, \text{Pr}_{v_n \sim \V_n} \left\{ v_n \geq p^*_n \right\} - \epsilon \\
	& = \textsc{Rev}_{\geq n}( \V,p^*) - \epsilon.
	\end{align*}
	By induction, assume that the condition holds for $i+1$ (notice that $\textsc{Rev}_{>i}(\cdot,\cdot) = \textsc{Rev}_{\geq i+1}(\cdot,\cdot)$), then
	\begin{align*}
	\textsc{Rev}_{\ge i}(\V^\epsilon,p^{*,\epsilon}) & = [p^*_i-\epsilon]_+ \text{Pr}_{v_i \sim \V^\epsilon_i} \left\{ v_i \ge [p^*_i-\epsilon]_+ \right\} + \left( 1-\text{Pr}_{v_i \sim \V^\epsilon_i} \left\{ v_i \ge [p^*_i-\epsilon]_+ \right\} \right) \textsc{Rev}_{>i}(\V^\epsilon,p^{*,\epsilon}) \\
	& \geq (p^*_i-\epsilon) \text{Pr}_{v_i \sim \V^\epsilon_i} \left\{ v_i \ge [p^*_i-\epsilon]_+ \right\} + \left(1-\text{Pr}_{v_i \sim \V^\epsilon_i} \left\{ v_i \ge [p^*_i-\epsilon]_+ \right\} \right) \, \left( \textsc{Rev}_{>i}(\V,p^*)-\epsilon \right)  \\
	& = p^*_i \text{Pr}_{v_i \sim \V^\epsilon_i} \left\{  v_i \ge [p^*_i-\epsilon]_+ \right\} + \left(1-\text{Pr}_{v_i \sim \V^\epsilon_i} \left\{ v_i \ge [p^*_i-\epsilon]_+ \right\} \right)  \textsc{Rev}_{>i}(\V,p^*)-\epsilon \\
	& \geq p^*_i \text{Pr}_{v_i \sim  \V_i} \left\{ v_i \ge p^*_i \right\} + \left(1-\text{Pr}_{v_i \sim  \V_i} \left\{ v_i \ge p^*_i \right\} \right) \, \textsc{Rev}_{>i}(\V,p^*) - \epsilon \\
	&  = \textsc{Rev}_{\ge i}( \V,p^*)-\epsilon,
	\end{align*}
	where the last inequality follows from $p^*_i \geq \textsc{Rev}_{>i}(\V,p^*)$ and $\pr_{v_i \sim \V^\epsilon_i} \left\{ v_i \geq [p^*_i-\epsilon]_+ \right\} \geq \pr_{v_i \sim \V_i} \left\{ v_i \ge p^*_i \right\}$.
\end{proof}

\lemmathree*

\begin{proof}
	Letting $b \coloneqq \lceil \frac{2}{\epsilon}\rceil$ and $K \coloneqq \frac{8}{\epsilon^2} \log \frac{2b^n}{\tau} \in \text{poly}\left( n, \frac{1}{\epsilon}, \log \frac{1}{\tau} \right)$, the proof unfolds in two steps.
	
	The first step is to show that restricting price vectors to those in the discretized set $\mathcal{P}^b$ results in a small reduction of the seller's expected revenue.
	Formally, we prove that:
	\[ 
	\max_{p \in \mathcal{P}^b} \textsc{Rev}(\V,p) \geq \max_{p \in [0,1]^n} \textsc{Rev}(\V,p)-\frac{\epsilon}{2}.
	\]
	To do so, we define some modified distributions of buyers' valuations, namely $ \V^b = \{ \V_i^b \}_{i \in \N}$, which are supported on the discretized set $\mathcal{P}^b$ and are obtained by mapping each valuation $v_i \in [0,1]$ in the support of $\V_i$ (for any $i \in \N$) to a discretized valuation $\frac{x}{b}$, where $x$ is the greatest integer such that $\frac{x}{b}\le  v_i $.
	It is easy to see that, since an optimal price vector for distributions $\V^b$ must specify prices that are multiples of $\frac{1}{b}$, then
	\[
	\max_{p \in [0,1]^n} \textsc{Rev}(\V^b,p) =\max_{p \in \mathcal{P}^b}\textsc{Rev}( \V^b,p).
	\]
	Moreover, by definition of $b$, distributions $\V^b$ are such that $\pr_{v_i \sim  \V^b_i}  \left\{ v_i \ge p_i-\frac{\epsilon}{2} \right\} \geq \pr_{v_i \sim \V_i} \left\{ v_i \geq p_i \right\}$ for every $i \in \N$ and possible price $p_i \in [0,1]$.
	Thus, by Lemma~\ref{lm:smallDecrease}, $\max_{p \in [0,1]^n} \textsc{Rev}( \V^b,p) \ge \max_{p \in [0,1]^n} \textsc{Rev}( \V,p)-\frac{\epsilon}{2}$, which implies that $\max_{p \in \mathcal{P}^b} \textsc{Rev}(\V,p) \ge \max_{p \in [0,1]^n} \textsc{Rev}(\V,p)-\frac{\epsilon}{2}$.
	This proves that we can restrict the attention to price vectors in $\mathcal{P}^b\subset [0,1]^n$, loosing only an additive factor $\frac{\epsilon}{2}$ of the seller's optimal expected revenue.
	
	The second step of the proof is to show that replacing distributions $\V$ with the empirical distributions $\V^K$ built by Algorithm~\ref{alg:computePrices} only reduces the seller's optimal expected revenue by a small amount, with high probability.
	%
	%
	%
	For any price vector $p \in [0,1]^n$, by using Hoeffding's bound we obtain that
	\[
	\pr \left\{ \left| \textsc{Rev}(\V,p)- \textsc{Rev}(\V^K,p) \right| \geq \frac{\epsilon}{4} \right\} \le 2e^{-K \epsilon^2 / 8},
	\]
	where the probability is with respect to the stochasticity of the algorithm (as a result of the sampling steps).
	Since the number of elements in the discretized set $\mathcal{P}^b$ is $b^n$, by a union bound we get
	\[
	\pr \left\{  \left| \textsc{Rev}(\V,p)- \textsc{Rev}(\V^K,p) \right| <  \frac{\epsilon}{4} \quad \forall p \in \mathcal{P}^b \right\} \ge 1- 2 b^{n} e^{-K\epsilon^2/8}=1-\tau.
	\]
	Letting $p \in \mathcal{P}^b$ be the price vector returned by Algorithm~\ref{alg:computePrices}, it is the case that $p \in \argmax_{p' \in \mathcal{P}^b} \textsc{Rev}(\V^K,p')$, given the correctness and optimality of the backward induction procedure with which the vector $p$ is built~\citep{TaoComplexity2020}.
	Moreover, letting $p^* \in \argmax_{p' \in \mathcal{P}^b}\textsc{Rev}(\V,p')$ be an optimal price vector over the discretized set $\mathcal{P}^b$ for the actual distributions of buyers' valuations $\V$, with probability at least $1-\tau$ it holds that
	\[
	\textsc{Rev}(\V,p)\ge  \textsc{Rev}(\V^K,p)-\frac{\epsilon}{4} \ge \textsc{Rev}(\V^K,p^*) -\frac{\epsilon}{4} \ge \textsc{Rev}(\V,p^*) -\frac{\epsilon}{2}.
	\]
	Hence, with probability at least $1- \tau$, it holds $\textsc{Rev}(\V,p)\ge \textsc{Rev}(\V,p^*) -\frac{\epsilon}{2}\ge \max_{p' \in [0,1]^n}\textsc{Rev}(\V,p') -\epsilon$, where the last step has been proved in the first part of the proof.
	
	In order to conclude the proof, it is sufficient to notice that, with probability at least $1-\tau$, it also holds that 
	$r  = \textsc{Rev}(\V^K,p) \in \left[ \textsc{Rev}(\V,p)-\frac{\epsilon}{4} ,\textsc{Rev}(\V,p)+ \frac{\epsilon}{4}  \right]$.
\end{proof}

\section{Proofs omitted from Section~\ref{sec:public}}

\lemmafour*

\begin{proof}
	As a first step, we prove the following: given any matrix $V\in [0,1]^{n\times d}$ of buyers' valuations and any price vector $p' \in [0,1]^n$, for every distribution $\gamma$ over $\Delta_{\Theta}$ that is $(\alpha,\epsilon)$-decreasing around $\xi$ (see Definition~\ref{def:perturbation}) it holds that
	\begin{equation}\label{eq:boundRevenue}
	\mathbb{E}_{\tilde\xi \sim  \gamma} \left[ g_{p'}(V \tilde\xi) \right] \ge \mathbb{E}_{\tilde \xi\sim \gamma} \left[ g_{p'} \left(\max \left\{V \tilde \xi, V \xi -\epsilon \mathbbm{1} \right\} \right) \right]- \alpha \, g_{p'+\epsilon \mathbbm{1}} (V \xi).
	\end{equation}
	W.l.o.g., let $i \in \N$ be the buyer that buys the item when buyers' valuations are specified by the vector $V \xi - \epsilon \mathbbm{1}$ and the proposed prices are those specified by $p'$, that is, it must be the case that $p'_i \leq V_i \xi -\epsilon$ and $p'_j > V_j \xi - \epsilon$ for all $j \in \N: j<i$.
	Since $\gamma$ is $(\alpha,\epsilon)$-decreasing around $\xi$, by sampling a posterior $\tilde\xi \in \Delta_{\Theta}$ according to $\gamma$, with probability at least $1-\alpha$ it holds that $V_{i} \tilde\xi \ge V_{i}\xi-\epsilon$ (see Definition~\ref{def:perturbation}).
	Moreover, let $\tilde\Xi \coloneqq \{ \tilde\xi \in \Delta_{\Theta} \mid V_{i}\tilde\xi \geq V_{i}\xi - \epsilon  \}$ be the set of posteriors which result in a buyer $i$'s valuation that is at most $\epsilon$ less than that for $\xi$ (notice that $\sum_{\tilde\xi \in \tilde\Xi} \gamma(\tilde\xi) \geq 1-\alpha$).
	Then, we split the posteriors in $\Delta_{\Theta}$ into three groups, as follows:
	\begin{itemize}
		\item $\Xi^1 \subseteq \tilde\Xi$ is composed of all the posteriors $\tilde\xi \in \tilde\Xi$ such that, for every $j \in \N: j <i$, it holds $V_j \tilde\xi < p'_j$;
		\item $\Xi^2 \subseteq \tilde\Xi$ is composed of all the posteriors $\tilde\xi \notin \tilde\Xi$ such that, for every $j \in \N: j <i$, it holds $V_j \tilde\xi < p'_j$;
		\item $\Xi^3 \subseteq \Delta_{\Theta}$ is composed of all the posteriors $\tilde\xi \in \Delta_{\Theta}$ for which there exists a buyer $j(\tilde\xi) \in \N : j < i$ (notice the dependence on $\tilde\xi$) such that $j(\tilde\xi) = \min \{ j \in \N \mid  V_j \tilde\xi \ge p'_j \}$
	\end{itemize}
	Next, we show that, for every posterior $\tilde\xi \in \Xi^1 \cup \Xi^3$, it holds $g_{p}(V \tilde\xi) = g_{p'} (\max\{V\tilde\xi,V \xi -\epsilon \mathbbm{1} \})$, while, for every $\tilde\xi \in \Xi^2$, it holds $g_{p'} (\max\{ V\tilde\xi,V\xi -\epsilon \mathbbm{1} \}) \le g_{p' + \epsilon \mathbbm{1} }(V\xi)$.
	First, let us consider a posterior $\tilde\xi \in \Xi^1$.
	For each $j \in \N : j<i$, it holds $V_j \tilde\xi \leq \max \{V_j \tilde\xi, V_j \xi-\epsilon\} <p'_j$ (by definition of $\Xi^1$, and since buyer $j$ does not buy the item for price $p'_j$).
	Moreover, since $V_{i}\tilde\xi \geq V_{i} \xi-\epsilon $, it holds $V_{i}\tilde\xi=\max\{V_{i}\tilde\xi,V_{i}\xi-\epsilon\}\ge p'_{i}$.
	Hence, both when buyers' valuations are specified by the vector $V \tilde\xi$ and when they are given by $\max\{V \tilde\xi, V \xi-\epsilon \mathbbm{1} \}$ (with $\max$ applied component-wise), it is the case that buyer $i$ buys the item at price $p'_{i}$, resulting in
	\[
	g_{p'}(V \tilde\xi) = g_{p'}(\max\{V \tilde\xi, V \xi-\epsilon \mathbbm{1}\}).
	\]
	
	Now, let us consider a posterior $\tilde\xi \in \Xi^2$.
	In this case, $\max \{ V_{i} \tilde\xi, V_{i} \xi -\epsilon \}=V_{i} \xi - \epsilon \ge p'_{i}$, while $ \max \{ V_{j} \tilde\xi,V_{j}\xi-\epsilon  \}<p'_j$ for every $j \in \N: j<i$.
	Thus, both when buyers' valuations are specified by $\max \{ V\tilde\xi,V\xi-\epsilon\mathbbm{1} \}$ and when they are given by $V\xi-\epsilon\mathbbm{1}$, it is the case that buyer $i$ buys the item at price $p'_i$, resulting in
	\[
	g_{p'}(\max\{V \tilde\xi,V\xi-\mathbbm{1}\})=g_{p'}(V\xi-\epsilon \mathbbm{1})\le g_{p'+\epsilon \mathbbm{1}}(V\xi),
	\] 
	where the inequality holds since buyer $i$ buys the item at price $p'_{i}$ for valuations $V\xi-\epsilon \mathbbm{1}$ and price vector $p'$, while the buyer would still buy the item, though at price $p'_{i}+\epsilon \geq p'_i$, for valuations $V\xi$ and price vector $p' + \epsilon \mathbbm{1}$.
	Finally, let us consider a posterior $\tilde\xi \in \Xi^3$.
	We have that, for every $j \in \N : j<j(\tilde\xi)$, it holds $V_j \tilde\xi \le \max \{V_j\tilde\xi, V_{j} \xi-\epsilon \} < p'_j$, while $\max\{V_{j(\tilde\xi)}\tilde\xi, V_{j(\tilde\xi)}\xi-\epsilon \}  \ge V_{j(\tilde\xi)}\tilde\xi\ge p'_{j(\tilde\xi)}$.
	As a result, both when buyers' valuations are specified by $V \tilde\xi$ and when they are given by $\max\{ V \tilde\xi,V\xi-\epsilon\mathbbm{1} \}$, it is the case that buyer $j(\tilde\xi)$ buys the item at price $p'_{j(\tilde\xi)}$, resulting in
	\[
	g_{p'}(V \tilde\xi) =  g_{p'} (\max\{V\tilde\xi,V\xi-\epsilon\mathbbm{1} \}).
	\]
	%
	%
	%
	This allows us to prove Equation~\eqref{eq:boundRevenue}, as follows:
	\[
	\mathbb{E}_{\tilde\xi \sim \gamma} \Big[ g_{p'}(\max\{V \tilde\xi,V\xi-\epsilon \mathbbm{1} \}) \Big]-\mathbb{E}_{\tilde \xi\sim  \gamma} \Big[ g_{p'}(V \tilde\xi) \Big]\leq
	\sum_{\tilde\xi \in \Xi^2} \gamma(\tilde\xi) g_{p'+\epsilon\mathbbm{1} }(V\xi) \leq
	\alpha g_{p'+\epsilon\mathbbm{1} }(V\xi),
	\]
	where the first inequality comes from the fact that, as previously proved, $g_{p'}(\max\{V \tilde\xi,V\xi-\epsilon\mathbbm{1}\})=g_{p'}(V \tilde\xi)$ for every posterior $\tilde\xi \in \Xi^1 \cup \Xi^3$ and $g_{p'}(\max\{V  \tilde\xi,V\xi-\epsilon \mathbbm{1}\})\leq g_{p' + \epsilon\mathbbm{1}}(V\xi)$ for every posterior $\tilde\xi \in \Xi^2$, while the second inequality is readily obtained by noticing that $\sum_{\tilde\xi \in  \Xi^2} \gamma(\tilde\xi) \leq \sum_{\tilde\xi \notin \tilde\Xi} \gamma(\tilde\xi) \leq \alpha$.
	
	Given any posterior $\tilde\xi \in \Delta_{\Theta}$, the expression $\max_{p' \in [0,1]^n} \mathbb{E}_{V \sim \V} \Big[ \textsc{Rev}(\max \{ V \tilde\xi,V\xi-\epsilon\mathbbm{1} \},p') \Big]$ can be interpreted as the optimal seller's expected revenue when buyers' valuations are determined by distributions $\V^\epsilon = \{ \V_i^\epsilon \}$ such that, for every buyer $i \in \N$, their valuation is sampled by first drawing a valuation $v_i \in [0,1]^d$ according to $\V_i$ and, then, taking $\max\{ v_i^\top\tilde\xi, v_i^\top \xi -\epsilon \}$.
	Moreover, $\max_{p' \in [0,1]^n} \mathbb{E}_{V \sim \V}  \textsc{Rev}(V \xi, p')$ can be interpreted as the optimal seller's expected revenue when buyers' valuations are determined by distributions $\V = \{ \V_i \}$ such that valuations are determined by first sampling a $v_i \in [0,1]^d$ from $\V_i$ and, then, taking $v_i ^\top \xi$.
	It is easy to see that $\text{Pr}_{v_i \sim \V_i^\epsilon} \left\{ v_i\ge p_i'-\epsilon \right\}\ge \text{Pr}_{v_i\sim \V_i} \left\{ v_i\ge p_i' \right\}$ for every price $p_i'$, so that distributions $\V^\epsilon$ and $\V$ satisfy Definition~\ref{def:distribs}.
	%
	%
	Then, by applying Lemma~\ref{lm:smallDecrease}, we can conclude that there exists a price vector $p \in [0,1]^n$ such that $\textsc{Rev}(\V^\epsilon,p) \geq \max_{p' \in [0,1]^n} \textsc{Rev}(\V,p')-\epsilon$.
	%
	%
	Thus, for every distribution $\gamma$ over $\Delta_{\Theta}$ that is $(\alpha,\epsilon)$-decreasing around $\xi$, we get
	\begin{align*}
	\mathbb{E}_{\tilde\xi \sim \gamma,  V \sim \V} \Big[  g_{p}(V \tilde\xi ) \Big] & \ge \mathbb{E}_{\tilde\xi \sim \gamma,  V \sim \V} \Big[  g_{p}( \max\{ V \tilde\xi,V\xi-\epsilon\mathbbm{1} \} ) \Big] - \mathbb{E}_{ V \sim \V} \Big[ \alpha \, g_{p +\epsilon\mathbbm{1} }( V\xi) \Big]  \\
	& \ge \max_{p' \in [0,1]^n} \mathbb{E}_{V \sim \V} \Big[ \textsc{Rev}(  V\xi,p') \Big] -\epsilon-  \mathbb{E}_{V \sim \V} \Big[ \alpha \, g_{p +\epsilon\mathbbm{1} }( V\xi) \Big] \\
	& \ge \max_{p' \in [0,1]^n} \mathbb{E}_{ V \sim \V } \Big[  \textsc{Rev} ( V \xi, p') \Big]-\epsilon- \max_{p' \in [0,1]^n} \mathbb{E}_{V \sim \V} \Big[  \alpha \, \textsc{Rev}( V \xi, p') \Big] \\
	& \ge (1-\alpha) \max_{p' \in [0,1]^n} \mathbb{E}_{V \sim \V} \Big[ \textsc{Rev}(V\xi, p') \Big]-\epsilon,
	\end{align*}
	where the first inequality holds by Equation~\eqref{eq:boundRevenue}, while the second one by Lemma~\ref{lm:smallDecrease}.
\end{proof}

\lemmafive*

\begin{proof}
	The idea of the proof is to build a signaling scheme $\phi$ such that there is one-to-one correspondence between the buyers' posteriors induced by signal profiles $s \in \sset$ under $\phi$ and the posteriors in the support of the distribution $\gamma$.
	Thus, in the following we can safely use the notation $\xi_s$ to the denote the posterior corresponding to signal profile $s \in \sset$.
	We define the signaling scheme $\phi : \Theta \to \Delta_{\sset}$ so that, for every state $\theta \in \Theta$, it holds $\phi_\theta(s)=\frac{\gamma(\xi_s) \xi_s(\theta)}{\mu_\theta}$ for all $s \in \sset$.
	Moreover, we define $f: \sset \to [0,1]^n$ so that $f(s) = f^\circ(\xi_s)$ for all $s\in \sset$.
	First, notice that the signaling scheme $\phi$ is consistent, since, for every $\theta \in \Theta$, it holds $\sum_{s \in \sset} \phi_\theta(s) =\sum_{s \in \sset} \frac{\gamma(\xi_s) \xi_s(\theta) }{\mu_\theta}= \sum_{\xi \in \text{supp}(\gamma)} \frac{\gamma(\xi) \xi(\theta) }{\mu_\theta}=1$, where the last two equalities follow from the correspondence between signal profiles and posteriors in $\text{supp}(\gamma)$ and the fact that $\sum_{\xi \in \text{supp}(\gamma)} \gamma(\xi) \xi(\theta)= \mu_\theta$.
	It is also easy to check that each signal profile $s \in \sset$ indeed induces its corresponding posterior $\xi_s$ under the signaling scheme $\phi$.
	Finally, we have
	\[
	\sum_{\theta \in \Theta} \mu_\theta \sum_{s \in \sset}  \phi_\theta(s) \textsc{Rev}(\V,f(s),\xi_s) = \sum_{\theta \in \Theta} \mu_\theta \sum_{\xi \in \text{supp}(\gamma)} \frac{\gamma(\xi) \xi(\theta) }{\mu_\theta} \textsc{Rev}(\V,f^\circ(\xi),\xi) =\sum_{\xi \in \text{supp}(\gamma)} \gamma(\xi) \textsc{Rev}(\V,f^\circ(\xi),\xi),
	\]
	which concludes the proof.
	%
	%
\end{proof}

\lemmasix*

\begin{proof}
	Given a Bayesian posted price auction with prior $\mu \in \Delta_{\Theta}$ and distributions of buyers' valuations $\V$, let $(\phi^*,f^*)$ be a revenue-maximizing signaling scheme, price function pair.
	Then, we define $\gamma^*$ as the probability distribution over posteriors $\Delta_{\Theta}$ induced by $\phi^*$. 
	Moreover, we define $f^{\circ,*} : \text{supp}(\gamma^*) \to [0,1]^n$ in such a way that, for every posterior $\xi \in \text{supp}(\gamma^*)$, it holds $f^{\circ,*}(\xi) = f^*(s)$, where $s \in \sset$ is the signal inducing $\xi$, namely $\xi = \xi_s$.\footnote{W.l.o.g., we can safely assume that there is a unique signal inducing $\xi$. Indeed, if two signals $s \in \sset$ and $s' \in \sset$ induce the same posterior, then it is possible to build another signaling scheme, price function pair $(\phi^*,f^*)$ that joins the two signals in a new single signal $s^* \in \sset$, by setting $\phi^*_\theta(s^*) \gets \phi^*_\theta(s)+\phi^*_\theta(s')$ and $f^*(s^*)=f(s)$ if $\textsc{Rev}(\V,f^*(s),\xi)\geq \textsc{Rev}(\V,f^*(s'),\xi)$, while $f^*(s)=f^*(s')$ otherwise. It is easy to check that the new signaling scheme cannot decrease the seller's expected revenue.}
	%

	Let $\alpha=\epsilon=\frac{\eta}{2}$ and $q=\frac{32\log \frac{4}{\alpha}}{\epsilon^2}$.
	Then, we build a probability distribution $\gamma$ over posteriors in $\Delta_{\Theta}$ by decomposing each posterior $\xi \in \text{supp}(\gamma^*)$ according to Corollary~\ref{lm:decompositionweak}.
	Additionally, each time we decompose a posterior, for every newly-introduced posterior $\xi \in \Delta_{\Theta}$ we define the function $f^\circ : \Delta_{\Theta} \to [0,1]^n$ so that $f^\circ(\xi) \in \argmax_{p \in [0,1]^n} \textsc{Rev}(\V,p,\xi)$.
	Letting $\gamma^{\xi} \in \Delta_{\Xi^q}$ be the probability distribution over $q$-uniform posteriors which is obtained by decomposing posterior $\xi \in \text{supp}(\gamma^*)$ according to Corollary~\ref{lm:decompositionweak}, we define $\gamma$ so that $\gamma(\xi) = \sum_{\xi' \in \text{supp}(\gamma^*)} \gamma^*(\xi') \gamma^{\xi'}(\xi)$ for every $\xi \in \Xi^q$.
	%
	
	First, let us notice that, for every $\theta \in \Theta$, it holds
	\[
	\sum_{\xi \in \Xi^q} \gamma(\xi) \xi(\theta)= \sum_{\xi' \in \text{supp}(\gamma^*)} \gamma^*(\xi') \sum_{\xi \in \Xi^q}  \gamma^{\xi'}(\xi) \xi(\theta) = \sum_{\xi' \in \text{supp}(\gamma^*)} \gamma^*(\xi') \xi'(\theta)= \mu_\theta,
	\]
	where the second equality follows from the property of the decomposition in Theorem~\ref{thm:decomposition}, while the last one from the fact that $\gamma^*$ is induced by a signaling scheme.
	%
	%
	Moreover, given any posterior $\xi \in \text{supp}(\gamma^*)$, let $p^\xi \in [0,1]^n$ be a price vector such that, for every $p \in [0,1]^n$, the function $g_{p^\xi}$ is $(1,\alpha,\epsilon)$-stable compared with the function $g_p$ in $(\V,\xi)$.
	such price vectors are guaranteed to exist by Lemma~\ref{lm:stabAuction}.
	Then, the pair $(\gamma, f^\circ)$ provides the seller with an expected revenue of 
	\begin{align*}
	\sum_{\xi \in \Xi^q} \gamma(\xi) \textsc{Rev}(\V,f^\circ(\xi), \xi) & =\sum_{\xi \in \text{supp}(\gamma^*)} \gamma^*(\xi) \sum_{\xi' \in \Xi^q} \gamma^{\xi}(\xi') \textsc{Rev}(\V,f^\circ(\xi),\xi)  \\
	& = \sum_{\xi \in \text{supp}(\gamma^*)} \gamma^*(\xi) \sum_{\xi' \in \Xi^q} \gamma^{\xi}(\xi') \max_{p \in [0,1]^n} \textsc{Rev}(\V,p,\xi) \\
	& \ge \sum_{\xi \in \text{supp}(\gamma^*)} \gamma^*(\xi)  \sum_{\xi' \in \Xi^q}  \gamma^{\xi}(\xi') \textsc{Rev}(\V,p^\xi,\xi)  \\
	&  \ge \sum_{\xi \in \text{supp}(\gamma^*)} \gamma^*(\xi)  \left[ (1-\alpha)    \textsc{Rev}(\V, f^{\circ,*}(\xi), \xi)- \epsilon \right] \\
	& = (1-\alpha) \sum_{\xi \in \text{supp}(\gamma^*)} \gamma^*(\xi) \textsc{Rev}(\V, f^{\circ,*}(\xi),\xi)- \epsilon \\
	& = \left( 1-\frac{\eta}{2} \right) \sum_{\xi \in \text{supp}(\gamma^*)} \gamma^*(\xi) \textsc{Rev}(\V, f^{\circ,*}(\xi),\xi)- \frac{\eta}{2} \\
	& \ge \sum_{\xi \in \text{supp}(\gamma^*)} \gamma^*(\xi) \textsc{Rev}(\V,f^{\circ,*}(\xi),\xi) - \eta \\
	&  = \sum_{\theta \in \Theta} \mu_\theta \sum_{s \in \sset} \phi^*_\theta(s)  \textsc{Rev}(\V, f^*(s), \xi_s) -   \eta,
	\end{align*}
	%
	which allows us to conclude that there exists a pair $(\phi,f)$ that only uses $q$-uniform posteriors and provides the seller with an expected revenue arbitrary close to that of an optimal pair.
\end{proof}

\theoremthree*

\begin{proof}
	By Lemma~\ref{lm:optQUniform}, given any constant $\eta > 0$ and letting $q = \frac{128\log \frac{6}{\eta}}{\eta^2}$, LP~\ref{eq:lppublic} has optimal value at least $OPT-\eta$.
	%
	The polynomial-time algorithm that proves the theorem solves an approximated version of LP~\ref{eq:lppublic}, which is obtained by replacing the terms $\max_{p \in [0,1]^n} \textsc{Rev}(\V,p,\xi)$ with suitable values $U(\xi)$.
	The latter are obtained by running Algorithm~\ref{alg:computePrices} (the values of $\epsilon$ and $\tau$ are defined in the following) for the (non-Bayesian) auctions in which the buyers' valuations are those resulting by multiplying samples drawn from distributions $\V_i$ by the posterior $\xi$.
	We let $(p^\xi,U(\xi))$ be the pair returned by Algorithm~\ref{alg:computePrices}.
	%
	By Lemma~\ref{lm:samplingPublic}, for every $q$-uniform posterior $\xi \in \Xi^q$, Algorithm~\ref{alg:computePrices} runs in polynomial time and the price vector $p^\xi$ is such that, with probability at least $1-\tau$, it holds
	\[
	\mathbb{E}_{V \sim \V} \Big[ g_{p^\xi}( V \xi) \Big] \ge \max_{p \in [0,1]^n} \mathbb{E}_{ V \sim \V} \Big[ g_{p}( V\xi) \Big]-\epsilon \,\, \text{and} \,\, U(\xi)\in \left[ \mathbb{E}_{ V \sim \V} \Big[ g_{p^\xi}( V\xi) \Big] - \epsilon, \mathbb{E}_{ V \sim \V} \Big[ g_{p^\xi}( V\xi) \Big] + \epsilon \right].
	\]
	As a result, with probability at least $1-\tau|\Xi^q|$, the previous conditions hold for every $q$-uniform posterior.

	Next, we show that, with probability at least $1-\tau|\Xi^q|$, an optimal solution to LP~\ref{eq:lppublic} is close to an optimal solution of the following LP obtained by replacing the $\max$ terms in the objective of LP~\ref{eq:lppublic} with the values $U(\xi)$:
	\begin{subequations}\label{eq:lp2}
		\begin{align}
		\max_{\gamma \in \Delta_{\Xi^q}} & \sum_{\xi \in \Xi^q}   \gamma(\xi)U(\xi) \quad \textnormal{s.t.} \\
		&\sum_{\xi \in \Xi^q} \gamma(\xi) \, \xi(\theta) =\mu_\theta & \forall \theta \in \Theta.
		\end{align}
	\end{subequations}
	Notice that, for a constant $q \in \mathbb{N}_{>0}$, the number of $q$-uniform posteriors is at most $d^q$, so that LP~\ref{eq:lp2} can be solved in polynomial time, as it involves ${O}(|\Xi^q|)$ variables and constraints.
	
	Let $(\gamma,f^\circ )$ be such that $\gamma \in \Delta_{\Xi^q}$ is an optimal solution to LP~\ref{eq:lp2} and $f^\circ  : \Delta_{\Theta} \to [0,1]^n$ is such that, for every $\xi \in \Xi^q$, it holds $f^\circ (\xi) = p^\xi$, which is the price vector obtained by running Algorithm~\ref{alg:computePrices}.
	Moreover, let $(\gamma^*,f^{\circ, *})$ be an optimal solution to LP~\ref{eq:lppublic}.
	Then, with probability at least $1-\tau|\Xi^q|$,
	\[
	\sum_{\xi \in \Xi^q}   \gamma(\xi) \mathbb{E}_{ V \sim \V} \Big[ g_{f^\circ (\xi)}( V\xi) \Big] \ge \sum_{\xi \in \Xi^q}   \gamma(\xi) U(\xi)-\epsilon \ge \sum_{\xi \in \Xi^q}   \gamma^*(\xi) U(\xi) - \epsilon \ge \sum_{\xi \in \Xi^q}  \gamma^*(\xi) \mathbb{E}_{ V \sim \V} \Big[ g_{f^{\circ ,*}(\xi)}( V\xi,f^{\circ,*}(\xi)) \Big]-2\epsilon.
	\]
	In conclusion, since $\sum_{\xi \in \Xi^q}\gamma^*(\xi) \mathbb{E}_{V \sim \V} \mleft[g_{f^{\circ,*}(\xi)}(V\xi,f^{\circ,*}(\xi))\mright]\ge OPT-\eta$ by Lemma~\ref{lm:optQUniform}, we have:
	\[
	\sum_{\xi \in \Xi^q}   \gamma(\xi) \mathbb{E}_{V \sim \V} \Big[ g_p(V\xi) \Big] \ge OPT-2\epsilon-\eta
	\]
	with probability at least $1-\tau|\Xi^q|$.
	Hence,
	\[
	\mathbb{E} \left[ \sum_{\xi \in \Xi^q}   \gamma(\xi) \textsc{Rev}(\V,f^\circ(\xi),\xi) \right] \ge \left( 1-\tau d^{q} \right) OPT-2\epsilon-\eta,
	\]
	where the expectation is over the randomness of the algorithm.
	Finally, Lemma~\ref{lm:posteriorToSignaling} allows us to recover from $(\gamma,f^\circ)$ a signaling scheme with the same seller's expected revenue.
	For any additive approximation factor $\lambda>0$, setting $\epsilon=\frac{\lambda}{6}$, $\eta=\frac{\lambda}{3}$, and $\tau=\frac{\lambda}{3d^q}$, we obtain the desired approximation bound.
	Moreover, the algorithm runs in polynomial time since $\eta$ is constant and the running time of the algorithm is polynomial in $\epsilon,\tau$ and the size of the problem instance.
\end{proof}

\section{Proofs omitted from Section~\ref{sec:private}}

\lemmaseven*

\begin{proof}
	Let us recall that $g_{i,{p_i}}: [0,1]^n \to \{ 0,1\}$ is such that $g_{i,{p_i}}(x) = \mathbb{I}\{ x_i \geq p_i \}$ for any value $x \in [0,1]^n$.
	As a first step, we show that, for every valuation vector $v_i \in [0,1]^{d}$ and probability distribution $\gamma$ over $\Delta_{\Theta}$ that is $(\alpha,\epsilon)$-decreasing around $\xi$,
	it holds $\mathbb{E}_{\tilde \xi_i \sim \gamma}\mathbb{I}\{ v_i^\top \tilde \xi_i \ge [p_i-\epsilon]_+ \} \ge (1-\alpha) \mathbb{I}\{ v_i^\top \xi_i \ge p_i \}$.
	Two cases are possible.
	If $\mathbb{I}\{ v^\top_i \xi_i \ge p_i \}=0$, then the inequality trivially holds.
	If $\mathbb{I}\{ v^\top_i \xi_i \ge p_i \}=1$, by Definition~\ref{def:s_distribution} we have that, with probability at least $1-\alpha$, a posterior $\tilde\xi_i \in \Delta_{\Theta}$ randomly drawn according to $\gamma$ satisfies $v^\top_i \tilde \xi_i \ge [v^\top_i \xi_i -\epsilon]_+ \ge [p_i-\epsilon]_+$, which implies that $\mathbb{I}\{ v^\top_i\tilde \xi_i \ge [p_i-\epsilon]_+ \}=1$.
	Hence, $\mathbb{E}_{\tilde \xi_i \sim \gamma}\mathbb{I}\{ v^\top_i\tilde \xi_i \ge [p_i-\epsilon]_+ \}\ge (1-\alpha)=(1-\alpha)\mathbb{I}\{ v^\top_i \xi_i \ge p_i \}$, as desired.
	Since $\mathbb{E}_{\tilde \xi_i \sim \gamma}\mathbb{I}\{  v^\top_i \tilde \xi_i \ge [p_i-\epsilon]_+ \} \ge (1-\alpha) \mathbb{I} \{  v^\top_i \xi_i \ge p_i \}$  for every $v_i \in [0,1]^{d}$, by taking the expectation over vectors $v \sim \V$ we obtain
	$\mathbb{E}_{ V \sim \V} \mathbb{E}_{\tilde \xi_i \sim \gamma}\mathbb{I}\{  V_i\tilde \xi_i \ge [p_i-\epsilon]_+ \}\ge (1-\alpha) \mathbb{E}_{ V \sim \V} \mathbb{I}\{ V_i \xi_i \ge p_i \}$, which fulfills the condition in Definition~\ref{def:stability} and proves the result. 
\end{proof}

\lemmaeight*

\begin{proof}
	We define the set of signals for buyer $i \in \N$ as $\sset_i \coloneqq \Xi^q_i \times P^b$.
	Then, we set $\phi : \Theta \to \Delta_{\sset}$ so that, for every $\theta \in \Theta$ and $s \in \sset$, it holds $\phi_\theta(s)=\frac{y_{\theta,\xi,p}}{\mu_\theta}$, where the pair $(\xi,p)$ with $\xi = (\xi_1,\ldots,\xi_n) \in \Xi^q$ and $p \in \mathcal{P}^b$ is such that $(\xi_i,p_i)=s_i$ for each $i \in \N$.
	Moreover, we set $f_i(s_i)=p_i$ for every buyer $i \in \N$ and signal $s_i = (\xi_i,p_i) \in \sset_i$.
	First, we show that $\phi$ is well defined, that is, for every state of nature $\theta \in \Theta$, it holds
	\[
		\sum_{s \in \sset} \phi_\theta(s)=\sum_{\xi \in \Xi^q} \sum_{p \in \mathcal{P}^b} \frac{y_{\theta,\xi,p}}{\mu_\theta}=\sum_{\xi_1 \in \sset_1} \sum_{p_1 \in P^b} \frac{\xi_1(\theta) t_{1,\xi_1,p_1}}{\mu_\theta}=\sum_{\xi_1 \in \Xi^q_1} \frac{\xi_1(\theta) \gamma_{1,\xi_1}}{\mu_\theta}=\frac{\mu_\theta}{\mu_\theta}=1,
	\]
	where we use Constraints~\eqref{eq:private1} in the second equality, Constraints~\eqref{eq:private2} in the third one, and Constraints~\eqref{eq:private3} in the last one.
	Next, we show that, for any $\xi\in \Xi^q$ and $p \in \mathcal{P}^b$, it holds $\textsc{Rev}(\V,f(s),\xi_s)=\textsc{Rev}(\V,p,\xi)$, where the signal profile $s \in \sset$ is such that $s_i = (\xi_i,p_i)$ for every $i \in \N$.
	Clearly, the prices coincide, namely $f(s) = p$.
	Thus, it is sufficient to prove that each signal $s_i = (\xi_i,p_i)$ induces posterior $\xi_i$ for buyer $i \in \N$.
	%
	For every $\theta \in \Theta$, it holds
	\begin{align*} 
		\mu_{\theta } \sum_{s' \in \sset:s'_i=s_i}\phi_\theta(s') =\sum_{\xi' \in \Xi^q,p' \in \mathcal{P}^b:(\xi'_i,p'_i)=s_i}y_{\theta,\xi',p'}=\xi_i(\theta) t_{i,\xi_i,p_i}.
	\end{align*}
	Hence, for every $\theta \in \Theta$, 
	\begin{align*}
		\xi_{i,s_i}(\theta) = \frac{{\mu}_\theta {\phi}_{i, \theta}(s_i) }{\sum_{\theta'\in\Theta}{\mu}_{\theta'} {\phi}_{i, \theta'}(s_i) } = \frac{\mu_\theta \sum_{s' \in \sset:s'_i=s_i} \phi_\theta(s')}{\sum_{\theta' \in \Theta} \mu_{\theta'}\sum_{s' \in \sset:s'_i=s_i} \phi_{\theta'}(s')}=\frac{\xi_i(\theta) t_{i,\xi_i,p_i}}{t_{i,\xi_i,p_i}}=\xi_i(\theta).
	\end{align*}
	%
	Thus, the seller's expected revenue for the pair $(\phi,f)$ is
	\begin{align*}
		\sum_{s \in \sset} \sum_{\theta \in \Theta} \mu_\theta \phi_\theta(s) \textsc{Rev}(\V,f(s),\xi_s)=
		\sum_{\xi \in \Xi^q} \sum_{p \in \mathcal{P}^b} \sum_{\theta \in \Theta} \mu_\theta \frac{y_{\theta,\xi,p}}{\mu_\theta} \textsc{Rev}(\V,p,\xi)=
		\sum_{\theta \in \Theta} \sum_{\xi \in \Xi^q} \sum_{p \in \mathcal{P}^b}  y_{\theta,\xi,p} \textsc{Rev}(\V,p,\xi),
	\end{align*}
	which proves the lemma.
\end{proof}

\lemmanine*

\begin{proof}
	We show that, given a revenue-maximizing pair $(\phi,f)$ (with seller's revenue $OPT$), we can recover an optimal solution to LP~\ref{lp:private} whose value is at least  $OPT-\eta$ when the LP is instantiated with suitable constants $b(\eta) \in \mathbb{N}_{>0}$ and $q(\eta) \in \mathbb{N}_{>0}$ (depending on the approximation level $\eta$).
	%
	%
	%
	Let $\alpha=\epsilon=\frac{\eta}{3}$, $b = \lceil \frac{3}{\eta}\rceil$, and $q=\frac{32 \log \frac{4}{\alpha}}{\epsilon^2}$.
	Recalling that $\xi_{i,s_i} \in \Delta_{\Theta}$ denotes buyer $i$'s posterior induced by signal $s_i \in \sset_i$, we let $\gamma^{s_i} \in \Delta_{\Xi_i^q}$ be the probability distribution over $q$-uniform posteriors obtained by decomposing $\xi_{i,s_i}$ according to Theorem~\ref{thm:decomposition}.
	By Lemma~\ref{lm:singleBuyer} and Theorem~\ref{thm:decomposition}, it follows that, for every $p_i\in [0,1]$ and $\theta \in \Theta$,
	\begin{align}\label{eq:probBound}
		\sum_{\xi_i \in \Xi^q_i} \gamma^{s_i} (\xi_i) \xi_i(\theta) \pr_{ v_i \sim \V_i} \left\{   v_i^\top \xi_i  \ge [p_i-\epsilon]_+ \right\} \ge \xi_{i,s_i}(\theta) (1-\alpha)  \pr_{ v_i \sim \V_i} \left\{  v_i^\top \xi_i\ge p_i \right\}.
	\end{align}
	%

	For every signal profile $s \in \sset$, we define a non-Bayesian posted price auction in which the distributions of buyers' valuations are $\V^s = \{ \V_i^s \}_{i \in \N}$, where each $\V^s_i$ is such that a valuation $v_i \sim \V^s_i$ is obtained by first sampling $\tilde v_i \sim \V_i$ and then letting $v_i=\tilde v_i^{\top} \xi_{i,s_i}$.
	Moreover, we let $p^s \in [0,1]^n$ be a price vector for the seller in such non-Bayesian auction, with $ p^s_i\ge \textsc{Rev}_{> i}(\V^s,p^s)$ for every $i \in \N$.
	By Lemma~\ref{lm:bigpayments}, such a vector always exists.
	Finally, given $p^s$, we let $ \hat p^s \in [0,1]^n$ be such that each price $ \hat p^s_i$ is the greatest price $p_i \in P^b$ (among discretized prices) satisfying the inequality $p_i \le [ p^s_i-\epsilon]_+$; formally,
	\[
		p_i^s = \max \left\{ p_i\in P^b \mid p_i \le [ p^s_i-\epsilon]_+ \right\}.
	\]
	
	Next, we define the optimal solution to LP~\ref{lp:private} that we need to prove the result:
	\begin{itemize}
		\item $\gamma_{i,\xi_i}=\sum_{s_i \in \sset_i} \sum_{\theta \in \Theta} \mu_\theta \phi_{i,\theta}(s_i) \gamma^{s_i}(\xi_i)$ for every $i \in \N$ and $\xi_i \in \Xi^q_i$.
		\item $t_{i,\xi_i,p_i}= \sum_{s_i \in \sset_i} \sum_{\theta \in \Theta} \mu_\theta \phi_{i,\theta}(s_i)\gamma^{s_i}(\xi_i)  \mathbb{I} \left\{ p_i=\hat p^s_i \right\} $ for every $i \in \N$, $\xi_i \in \Xi^q_i$, and $p_i \in P^b$.
		\item $y_{\theta,\xi,p}=\sum_{s \in \sset}  \mu_\theta \phi_\theta(s)  \prod_{i \in \N} \frac{\xi_i(\theta) \gamma^{s_i}(\xi_i)}{\xi_{i,s_i}(\theta)}  \mathbb{I}\{ p_i=\hat p^s_i \}$ for every $\theta \in \Theta$, $\xi \in \Xi^q$, and $p \in \mathcal{P}^b$.
	\end{itemize} 
	
	The next step is to show that, for every signal profile $s \in \sset$, the seller's expected revenue obtained by decomposing each signal $s_i$ according to Theorem~\ref{thm:decomposition} is ``close'' to the one for $s$.
	Formally, we show that, for every $s \in \sset$ and $\theta \in \Theta$, 
	\begin{align} \label{eq:revenueDecomposed}
		\sum_{\xi \in \Xi^q} \prod_{i \in \N} \frac{\xi_i(\theta) \gamma^{s_i}(\xi_i)}{\xi_{i,s_i}(\theta)}\textsc{Rev}(\V,\hat p^s,\xi) \ge \textsc{Rev}(\V,f(s),\xi_s)-\left( \alpha+\epsilon+\frac{1}{b} \right).
	\end{align}
	In order to do so, we relate the LHS of Equation~\eqref{eq:revenueDecomposed} to the seller' revenue in a non-Bayesian posted price auction.
	In particular, we show that it is equivalent to the seller's revenue when employing price vector $\hat p^s$ in the auction defined by the distributions of buyers' valuations $\hat \V^{s,\theta} = \{ \hat\V^{s,\theta}_i \}_{i \in \N}$, where each $\hat \V^{s,\theta}_i$ is such that a valuation $ v_i \sim \hat \V^{s,\theta}_i$ is defined as
	\[
		v_i = \tilde  v_i^\top \tilde \xi_i  \text{, with } \tilde v_i \sim \V_i \text{ and } \tilde \xi_i \in \Delta_{\Theta}\text{ sampled from a distribution such that }\pr \left\{ \tilde\xi_i=\xi_i \right\} = \frac{\xi_i(\theta)}{\xi_{i, s_i}(\theta)} \gamma^{s_i}(\xi_i).
	\]
	%
	%
	Notice that each $\hat \V_i^{s,\theta}$ is well defined, since $\V_i$ is by definition a probability distribution and $\sum_{\xi_i \in \Xi^q_i}\frac{\xi_i(\theta)}{\xi_{i,s_i}(\theta)}\gamma^{s_i}(\xi_i)=1$ by Theorem~\ref{thm:decomposition}, defining a probability distribution over the posteriors.
	Moreover, it is easy to check that valuations sampled from distributions $\hat \V_i^{s,\theta}$ are independent among each other.
	Finally, $\Rev(\hat \V^{s,\theta}, \hat p^s) $ is equal to to the LHS of Equation~\eqref{eq:revenueDecomposed}, since, by an inductive argument, for every $i \in \N$, it holds
	\begin{align*}
		\sum_{\xi' \in \Xi^q:\xi'_i=\xi_i} \prod_{j \in \N} \frac{\xi'_j(\theta) \gamma^{s_j}(\xi'_j)}{\xi_{j,s_j}(\theta)}  =\frac{\xi_i(\theta)}{\xi_{i,s_i}(\theta)}\gamma^{s_i}(\xi_i),
	\end{align*}
	where the equality comes from the fact that, for every $j \in \N$, it is the case that
	\begin{align}  \label{eq:sumToOne}
		\sum_{\xi_j \in \Xi^q_j} \frac{\xi_j(\theta) \gamma^{s_j}(\xi_j)}{\xi_{j,s_j}(\theta)} =1.
	\end{align}
	Let also notice that, in the auction defined above,
	the probability with which a buyer $i \in \N$ has a valuation greater than or equal to $p^s_i$ is $\sum_{\xi_i \in \Xi^q_i} \frac{\xi_i(\theta) \gamma^{s_i}(\xi_i)}{\xi_{i,s_i}(\theta)} \pr_{ \tilde v_i \sim \V_i}( \tilde v_i^{\top}  \xi_i\ge p^s_i)  \ge (1-\alpha) \pr_{\tilde  v_i \sim \V_i}( \tilde v_i^{\top} \xi_{i,s_i} \ge p^s_i)$, where the inequality holds by Equation~\eqref{eq:probBound}.
	First, we compare the seller's revenue in the two non-Bayesian, namely $\textsc{Rev}(\V^s,p^s)$ and $\textsc{Rev}( \hat \V^{s,\theta}, \hat p^s)$.
	In particular, we show by induction that
	$\textsc{Rev}( \hat \V^{s,\theta}, \hat p^s) \ge  \textsc{Rev}( \V^{s}, p^s)-\alpha- \epsilon-\frac{1}{b}$.
	Let $\textsc{Rev}_{\geq i}(\V,p)$ be the seller's expected revenue for $p$ in the auction restricted to all buyers $j \in \N: j \geq i$.
	The base case is
	\begin{align*}
		\textsc{Rev}_{\geq n}( \hat \V^{s,\theta}, \hat p^s)  & =  \hat p^s_n  \, \text{Pr}_{v_n \sim \hat \V^{s,\theta}_n} \left\{ v_n \geq \hat p^s_n   \right\}  \\
		& \geq \hat p^s_n (1-\alpha)\text{Pr}_{v_n \sim  \V^s_n} \left\{ v_n \geq p^s_n \right\} \\
		& \geq  p^s_n \, \text{Pr}_{v_n \sim  \V^s_n} \left\{ v_n \geq p^s_n \right\} - \epsilon -\alpha-\frac{1}{b}\\
		& =  \textsc{Rev}_{\geq n}( \V^s, p^s) - \epsilon-\alpha-\frac{1}{b}.
	\end{align*}
	
	By induction, let us assume that the condition holds for $i+1$, then
	\begin{align*}
		\textsc{Rev}_{\ge i}( \hat \V^{s,\theta}, \hat p^s) & = \hat p^s_i \text{Pr}_{v_i \sim \hat \V^{s,\theta}_i} \left\{ v_i \ge \hat p^s_i \right\} + \left( 1-\text{Pr}_{v_i \sim \hat \V^{s,\theta}_i} \left\{ v_i \ge \hat p^s_i \right\} \right) \textsc{Rev}_{>i}( \hat \V^{s,\theta}, \hat p^s) \\
		& \geq \left( p^s_i-\epsilon-\frac{1}{b} \right) \text{Pr}_{v_i \sim \hat \V^{s,\theta}_i} \left\{ v_i \ge \hat p^s_i \right\} + \left(1-\text{Pr}_{v_i \sim \hat \V^{s,\theta}_i} \left\{ v_i \ge \hat p^s_i \right\} \right) \, \left( \textsc{Rev}_{>i}(  \V^s,   p^s)-\epsilon -\alpha-\frac{1}{b} \right)  \\
		& =  p^s_i \text{Pr}_{v_i \sim \hat \V^{s,\theta}_i} \left\{ v_i \ge \hat p^s_i \right\} + \left(1-\text{Pr}_{v_i \sim \hat \V^{s,\theta}_i} \left\{ v_i \ge \hat p^s_i \right\} \right) \, (\textsc{Rev}_{>i}(  \V^s,  p^s)-\alpha)-\epsilon-\frac{1}{b} \\
		& \geq  p^s_i (1-\alpha) \text{Pr}_{v_i \sim   \V^s_i} \left\{ v_i \ge  p^s_i \right\} + \left[ 1-(1-\alpha)\text{Pr}_{v_i \sim   \V^s_i} \left\{ v_i \ge p^s_i \right\} \right] \, \left(\textsc{Rev}_{>i}(\V^s, p^s)-\alpha\right) - \epsilon-\frac{1}{b} \\
		&  \ge \textsc{Rev}_{\ge i}(  \V^s, p^s)- \epsilon-\alpha-\frac{1}{b},
	\end{align*}
	where the second to last inequality follows from $ p^s_i \geq \textsc{Rev}_{>i}( \V^s, p^s)$ and $\pr_{v_i \sim \hat \V^{s,\theta}_i} \left\{ v_i \ge \hat p^s_i \right\} \geq (1-\alpha)\pr_{v_i \sim \V^s_i} \left\{ v_i \ge  p^s_i \right\}$.
	Hence, Equation~\eqref{eq:revenueDecomposed} is readily proved, as follows
	\begin{align*} 
		\sum_{\xi \in \Xi^q} \prod_{i \in \N} \frac{\xi_i(\theta) \gamma^{s_i}(\xi_i)}{\xi_{i,s_i}(\theta)}\textsc{Rev}(\V,\hat p^s,\xi) \ge \textsc{Rev}( \V^s,p^s)-\left( \alpha+\epsilon+\frac{1}{b} \right)\ge \textsc{Rev}( \V,f(s),\xi_s)-\left(\alpha+\epsilon+\frac{1}{b} \right), 
	\end{align*}

	Now, we are ready to bound the objective of LP~\ref{lp:private}, as follows:
	\begin{align*}
		\sum_{\theta\in \Theta} \sum_{\xi\in \Xi^q} \sum_{p \in \mathcal{P}^b} y_{\theta,\xi,p} \textsc{Rev}(\V,p,\xi) & = \sum_{\theta\in \Theta} \sum_{\xi\in \Xi^q} \sum_{p \in \mathcal{P}^b} \sum_{s \in \sset}  \mu_\theta \phi_\theta(s)  \prod_{i \in \N} \frac{\xi_i(\theta) \gamma^{s_i}(\xi_i)}{\xi_{s_i}(\theta)}  \mathbb{I}\{ \hat p^s_i=p_i\}\textsc{Rev}(\V,p,\xi) \\ 
		&  = \sum_{s \in \sset} \sum_{\theta \in \Theta}   \mu_\theta \phi_\theta(s) \sum_{\xi\in \Xi^q}\sum_{p \in \mathcal{P}^bs}  \prod_{i \in \N} \frac{\xi_i(\theta) \gamma^{s_i}(\xi_i)}{\xi^{s_i}(\theta)}  \mathbb{I}\{ \hat p^s_i=p_i \} \textsc{Rev}(\V,p,\xi)  \\ 
		& \ge  \sum_{s \in \sset} \sum_{\theta \in \Theta} \mu_\theta \phi_\theta(s) \left[ \textsc{Rev}(\V,f(s),\xi_s)-\left(\alpha+\epsilon+\frac{1}{b}\right)\right]\\
		& \ge OPT-\left(\alpha+\epsilon+\frac{1}{b}\right)\ge OPT-\eta.
	\end{align*}
	
	
	We conclude the proof showing that the defined solution is feasible for LP~\ref{lp:private}.
	First, it prove that, for every $i \in \N$ and $\theta \in \Theta$,
	\begin{align*}
		\sum_{\xi_i \in \Xi^q_i} \xi_i(\theta) \gamma_{i,\xi_i}&
		=\sum_{\xi_i \in \Xi_i^q} \xi_i(\theta) \sum_{s_i \in \sset_i} \sum_{\theta'\in \Theta} \mu_{\theta'} \phi_{i,\theta'}(s_i) \gamma^{s_i}(\xi_i)\\
		& = \sum_{s_i \in \sset_i} \sum_{\theta' \in \Theta} \mu_{\theta'} \phi_{i,\theta'}(s_i) \sum_{\xi_i \in \Xi^q_i} \xi_i(\theta) \gamma^{s_i}(\xi_i)\\
		&= \sum_{s_i \in \sset_i} \sum_{\theta'\in \Theta} \mu_{\theta'} \phi_{i,\theta'}(s_i) \xi_{i,s_i}(\theta)\\
		&= \sum_{s_i \in \sset_i} \sum_{\theta' \in \Theta} \mu_{\theta'} \phi_{i,\theta'}(s_i) \frac{\mu_\theta \phi_{i,\theta}(s_i)}{\sum_{\theta' \in \Theta} \mu_{\theta'} \phi_{i,\theta'}(s_i)}\\
		& = \sum_{s_i \in \sset_i} \mu_\theta \phi_\theta(s_i)=\mu_\theta.
	\end{align*}
	Moreover, for every $i \in \N$ and $\xi_i \in \Xi^q_i$, it holds
	\begin{align*}
		\sum_{p_i \in P^b} t_{i,\xi_i,p_i}=&
		\sum_{p_i \in P^b} \sum_{s_i \in \sset_i} \sum_{\theta \in \Theta} \mu_\theta \phi_{i,\theta}(s_i)\gamma^{s_i}(\xi_i) \mathbb{I}\{ p_i=\hat p^s_i \}=\\
		&\sum_{s_i \in \sset_i} \sum_{\theta \in \Theta} \mu_\theta \phi_{i,\theta}(s_i)\gamma^{s_i}(\xi_i) \sum_{p_i \in P^b} \mathbb{I}\{ p_i=\hat p^s_i \}=\\
		&\sum_{s_i \in \sset_i} \sum_{\theta \in \Theta} \mu_\theta \phi_{i,\theta}(s_i)\gamma^{s_i}(\xi_i)=\gamma_{i,\xi_i}.
	\end{align*}
	Finally, for every $\theta \in \Theta$, $i \in \N$, $\xi_i \in \Xi^q_i$, and $p_i \in P^b$, it holds
		\begin{align*}
		\sum_{\xi' \in \Xi^q:\xi'_i=\xi_i} \sum_{p' \in \mathcal{P}^b:p'_i=p_i} y_{\theta,\xi',p'} & = \sum_{\xi' \in \Xi^q:\xi'_i=\xi_i} \sum_{p' \in \mathcal{P}^b:p'_i=p_i} \sum_{s \in \sset} \mu_\theta \phi_\theta(s) \prod_{j \in \N} \frac{\xi'_j(\theta)}{\xi_{j,s_j}(\theta)} \gamma^{s_j}(\xi'_j) \mathbb{I}\{ p_j=\hat p^s_j \}\\
		& = \sum_{s \in \sset} \mu_\theta \phi_\theta(s) \sum_{\xi' \in \Xi^q:\xi'_i=\xi_i} \sum_{p' \in \mathcal{P}^b:p'_i=p_i}  \prod_{j \in \N} \frac{\xi'_j(\theta)}{\xi_{j,s_j}(\theta)} \gamma^{s_j}(\xi'_j) \mathbb{I}\{p_j=\hat p^s_j\} \\
		& = \sum_{s \in \sset} \mu_\theta \phi_\theta(s) \frac{\xi_i(\theta)}{\xi_{s_i}(\theta)} \gamma^{s_i}(\xi_i)\mathbb{I}\{ p_i=\hat p^s_i \} \quad \quad\quad\quad \text{(From Equation~\eqref{eq:sumToOne})}\\
		& = \sum_{s_i \in \sset_i} \mu_\theta \phi_{i,\theta}(s_i) \frac{\xi_i(\theta)}{\xi_{s_i}(\theta)} \gamma^{s_i}(\xi_i) \mathbb{I}\{ p_i=\hat p^s_i \} \\
		& = \xi_i(\theta) \sum_{s_i \in \sset} \mu_\theta \phi_{i,\theta}(s_i) \frac{\sum_{\theta'\in \Theta} \phi_{i, \theta'}(s_i)}{\mu_{\theta} \phi_{i,\theta}(s_i)} \gamma^{s_i}(\xi_i)  \mathbb{I}\{ p_i=\hat p^s_i \}\\
		& = \xi_i(\theta) \sum_{s_i \in \sset} \sum_{\theta' \in \Theta} \mu_{\theta'} \phi_{i,\theta'}(s_i) \gamma^{s_i}(\xi_i) \mathbb{I}\{ p_i=\hat p^s_i \}= \xi_i(\theta) t_{i,\xi_i,p_i}.
	\end{align*}
	This concludes the proof.
	%
	%
\end{proof}

\begin{algorithm}[!htp]
	\caption{Approximate Dynamic Programming algorithm for MAX-LINREV}\label{alg:bbo3}
	\textbf{Inputs:} Discretization error tolerance $\delta > 0$; vector of linear components $w \in [0,1]^{n \times |\Xi^q_i| \times |P^b|}$; finite-support distributions of buyers' valuations $\V = \{ \V_i \}_{i \in \N}$
	\begin{algorithmic}[1]
		\State $c \gets \lceil \frac{n}{\delta} \rceil$
		\State $A \gets \{0,\frac{1}{c},\frac{2}{c},\ldots,\frac{n\delta}{c}\}$
		\State Initialize an empty matrix $M$ with dimension $n \times |A|$ \label{line:init1}
		\For{$a \in A$}
		\State $M(n, a) \gets \displaystyle\max_{\xi_n \in \Xi^q_n, p_n\in P^b :w_{n,\xi_n,p_n}\ge a}  \left\{ \Pr_{v_n \sim  \V_n} \left\{ v_n^\top \xi_n \ge p_n \right\} p_n \right\}$\label{line:inizialM}
		\EndFor
		\For{$i = n-1, \ldots, 1$ (in reversed order)}\label{line:forj3}
		\For{ $ a \in A$ }\label{line:fory3}
		\State $M(i,a) \gets \displaystyle\max_{\substack{\xi_i \in \Xi^q_i, p_i \in P^b, a' \in A : \\ w_{i,\xi_i,p_i}+a' \ge a}} \left\{ \Pr_{v_i \sim  \V_i} \left\{ v_i^\top \xi_i \ge p_i \right\} p_i + \left( 1-\Pr_{v_i \sim  \V_i} \left\{ v_i^\top \xi_i \ge p_i \right\} \right) M(i+1,a') \right\}$ \label{line:maxim}
		\EndFor
		\EndFor
		\State\Return $\max_{a \in A} \left\{ M(1,a)+a \right\}$
	\end{algorithmic}
\end{algorithm}

\lemmaeleven*

\begin{proof}
	The algorithm is described in Algorithm~\ref{alg:bbo3}.
	It works in polynomial time since the matrix $M$ has $n |A|= O(\frac{1}{\delta}n^3)$ entries and each entry is computed in polynomial time. This proves the second part of the statement.
	
	In the following, we denote with $\textsc{Rev}_{\geq i}(\V,p,\xi)$ the seller's expected revenue in the Bayesian posted price auction when they select price vector $p \in \mathcal{P}^b$ and the buyers' posteriors are specified by the tuple $\xi = (\xi_1, \ldots, \xi_n) \in \Xi^q$.
	
	Let $S(i,a) \coloneqq\{(\xi,p)\in\Xi^q\times \mathcal{P}^b \mid \sum_{j\ge i} w_{j,\xi_j,p_j}\ge a\}$ for every $i \in \N$ and $a \in A$.
	Moreover, for every $a' \in A$, let $\bar S(i,a,a')=\{(\xi,p)\in\Xi^q\times \mathcal{P}^b \mid w_{i,\xi_i,p_i}\ge a' \land \sum_{j> i} w_{j,\xi_j,p_j}\ge a-a'\}$.
	First, we prove by induction that $M(i,a-\frac{n-i}{c}) \ge \max_{(\xi,p) \in S(i,a)} \textsc{Rev}(\V,p,\xi)$ for every $i \in \N$ and $a \in A$.
	For $i=n$, the condition trivially holds by Line~\ref{line:inizialM}.
	For $i<n$, 
	\begin{align*}
		\max_{(\xi,p)\in S(i,a)} \textsc{Rev}_{\geq i}(\V,p,\xi) & =\max_{a' \in [0,1]} \max_{(\xi,p)\in \bar S(i,a,a')} \textsc{Rev}_{\geq i}(\V,p,\xi) \\
		&\hspace{-3cm}= \max_{a' \in [0,1]} \max_{(\xi,p)\in  \bar S(i,a,a')} p_i \Pr_{ v_i \sim  \V_i} \{  v_i^\top \xi_i\ge p_i \} + \left( 1-\Pr_{ v_i \sim  \V_i} \{  v_i^\top \xi_i\ge p_i \} \right) \textsc{Rev}_{\geq i+1}(\V,p,\xi)\\
		&\hspace{-3cm}\le \max_{a' \in A} \max_{(\xi,p) \in \bar S(i,a-\frac{1}{c},a')} p_i \Pr_{ v_i \sim  \V_i} \{  v_i^\top \xi_i\ge p_i \} + \left( 1-\Pr_{ v_i \sim  \V_i} \{  v_i^\top \xi_i\ge p_i \} \right) \textsc{Rev}_{\geq i+1}(\V,p,\xi)\\
		&\hspace{-3cm}= \max_{a' \in A} \,\, \max_{\xi_i \in \Xi^q_i,p_i \in P^b: w_{i,\xi_i,p_i}\ge a'} p_i \Pr_{ v_i \sim  \V_i} \{ v_i^\top \xi_i \ge p_i \} + \left( 1-\Pr_{ v_i \sim  \V_i} \{   v_i^\top \xi_i \ge p_i \} \right) \max_{(\xi,p) \in S(i+1,a-a'-\frac{1}{c})}\textsc{Rev}_{\geq i+1}(\V,p,\xi)\\
		&\hspace{-3cm} \le\max_{a' \in A} \max_{\xi_i \in \Xi^q_i,p_i \in P^b: w_{i,\xi_i,p_i}\ge a'} p_i \Pr_{ v_i \sim  \V_i} \{  v_i^\top \xi_i \ge p_i \} + \left( 1-\Pr_{ v_i \sim  \V_i} \{ v_i^\top \xi_i\ge p_i \} \right) M\left( i+1,a-a'-\frac{1}{c}-\frac{n-i-1}{c} \right)\\
		&\hspace{-3cm} = \max_{a' \in A,\xi\in \Xi^q,p \in \mathcal{P}^b:w_{i,\xi_i,p_i}+a'\ge a -\frac{n-i}{c}} p_i \Pr_{ v_i \sim  \V_i} \{  v_i^\top \xi_i\ge p_i \} + \left( 1-\Pr_{ v_i \sim  \V_i} \{ v_i^\top \xi_i\ge p_i \} \right) M(i+1,a') \\
		& \hspace{-3cm} = M \left( i,a-\frac{n-i}{c} \right).
	\end{align*}
	In conclusion, let $OPT_{\textsc{Rev}}$ the revenue term in the value of an optimal solution to MAX-LINREV, while $a\in [0,n]$ is the sum of the linear components in such optimal solution (the second term in the value of the solution).
	Let $a^*$ be the greatest element in $A$ such that $a^*\le a-\frac{n-1}{c}$. 
	Notice that $a^*\ge a-\frac{n}{c}$.
	Moreover, we have that $M(1,a^*)\ge \max_{(\xi,p) \in S(i,a)} \textsc{Rev}(\V,p,\xi)= OPT_{\textsc{Rev}}$.\footnote{It is easy to see that, if $a<\frac{n-1}{c}$, then the equality holds for $a=0$.}
	Hence, there exists a solution with value $M(1,a^*)+a^* \ge OPT_{\textsc{Rev}} +a- \frac{n}{c} =OPT -\frac{n}{c}\ge OPT-\delta$, concluding the proof.
\end{proof}

\theoremfour*

\begin{proof}
	We start providing the following relaxation of LP~\ref{lp:private}:
	\begin{subequations} \label{lp:privategreater}
		\begin{align}
		\max_{\substack{\gamma, x, y\ge 0}} & \,\,
		\sum_{\theta \in \Theta} \sum_{\xi \in \Xi^q} \sum_{p \in \mathcal{P}^b} y_{\theta,\xi,p} \textsc{Rev}(\V,\xi,p) \quad \textnormal{s.t.} \\
		& \xi_i(\theta) t_{i,\xi_i,p_i} \ge \sum_{\xi' \in \Xi : \xi'_i = \xi_i} \sum_{p' \in \mathcal{P}^b : p'_i = p_i} y_{\theta, \xi',p'} &  \forall \theta \in \Theta, \forall i \in \N, \forall \xi_i \in \Xi_i^q, \forall p_i \in P^b \\
		&\sum_{p_i \in P^b} t_{i,\xi_i,p_i} = \gamma_{i,\xi_i} & \forall i \in \N, \forall \xi_i \in \Xi_i^q   \\
		& \sum_{\xi_i \in \Xi^q} \gamma_{i , \xi_i} \, \xi_i(\theta) =\mu_\theta & \forall i \in \N, \forall \theta \in \Theta. 
		\end{align}
	\end{subequations}
	The PTAS that we build in the rest of the proof works with the dual of LP~\ref{lp:privategreater} so as to take advantage of the fact that it is more constrained than that of the original LP~\ref{lp:private}.
	As a first step, the following lemma shows that LP~\ref{lp:private} and LP~\ref{lp:privategreater} are equivalent.
	\begin{restatable}{lemma}{lemmaten} \label{lm:relaxEqual}
		LP~\ref{lp:private} and LP~\ref{lp:privategreater} have the same optimal value.
		Moreover, given a feasible solution to LP~\ref{lp:privategreater}, it is possible compute in polynomial time a feasible solution to LP~\ref{lp:private} with a greater or equal value.
	\end{restatable}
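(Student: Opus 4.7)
The plan is to prove both directions of the equality simultaneously, by exhibiting a constructive transformation: given any feasible $(\gamma, t, y)$ of LP~\ref{lp:privategreater}, I will produce in polynomial time a feasible $(\gamma, t, y')$ of LP~\ref{lp:private} with $y' \geq y$ componentwise. Since LP~\ref{lp:privategreater} is already a pure relaxation of LP~\ref{lp:private}---only Constraints~\eqref{eq:private1} are weakened from $=$ to $\geq$, while the objectives coincide---and since $\textsc{Rev}(\V,p,\xi) \geq 0$ makes the LP objective monotone in $y$, such a construction both shows the equality of the optima and delivers the constructive part of the statement.

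For each state $\theta \in \Theta$ and each tuple $(i, \xi_i, p_i)$, I would first introduce the nonnegative slack
\[
\epsilon_{i,\xi_i,p_i,\theta} \coloneqq \xi_i(\theta)\, t_{i,\xi_i,p_i} - \sum_{\xi' \in \Xi^q : \xi'_i = \xi_i} \sum_{p' \in \mathcal{P}^b : p'_i = p_i} y_{\theta,\xi',p'}.
\]
A direct calculation using Constraints~\eqref{eq:private2} and~\eqref{eq:private3} shows that its sum along any axis, $E_\theta \coloneqq \sum_{\xi_i,p_i} \epsilon_{i,\xi_i,p_i,\theta} = \mu_\theta - \sum_{\xi,p} y_{\theta,\xi,p}$, is the same for every buyer $i$. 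The crux of the proof is then to fill each axis-marginal slack with a product-form correction: for $\theta$ with $E_\theta > 0$, set
\[
\Delta y_{\theta,\xi,p} \coloneqq \frac{1}{E_\theta^{\,n-1}} \prod_{i \in \N} \epsilon_{i,\xi_i,p_i,\theta},
\]
and $\Delta y_{\theta,\xi,p} \coloneqq 0$ otherwise. Factoring the product and using $\sum_{\xi_j,p_j} \epsilon_{j,\xi_j,p_j,\theta} = E_\theta$ for every $j \neq i$ yields the identity $\sum_{(\xi',p') : \xi'_i = \xi_i, p'_i = p_i} \Delta y_{\theta,\xi',p'} = \epsilon_{i,\xi_i,p_i,\theta}$. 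Hence $y' \coloneqq y + \Delta y$ is nonnegative and makes Constraints~\eqref{eq:private1} tight, while Constraints~\eqref{eq:private2} and~\eqref{eq:private3} involve only $(\gamma, t)$ and are therefore preserved. Thus $(\gamma, t, y')$ is feasible for LP~\ref{lp:private} with objective value at least that of the original triple.

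The main obstacle I anticipate is representational rather than conceptual: $\Delta y$ may have support of size $\prod_i |\mathrm{supp}(t_i)|$, which is potentially exponential in $n$. I would deal with this by storing $y'$ implicitly---through the tuple $(\gamma, t, y, \{\epsilon_{i,\xi_i,p_i,\theta}\})$, all of which are polynomially many numbers when the input $(\gamma, t, y)$ has polynomial support, as is the case for vertex solutions produced by the ellipsoid method. Any single entry $y'_{\theta,\xi,p}$ is then recoverable in polynomial time via the closed-form expression for $\Delta y$, which is exactly the pointwise access pattern that Lemma~\ref{lm:lpToSignaling} relies on to extract the final signaling scheme, price function pair $(\phi,f)$. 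In this sense the transformation from LP~\ref{lp:privategreater} to LP~\ref{lp:private} runs in time polynomial in the size of the input solution, establishing the lemma.
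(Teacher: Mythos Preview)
Your proposal is correct and follows essentially the same approach as the paper: define the per-buyer slacks, observe they all sum to the same total deficit $E_\theta$, and fill the gap with the product-form correction $\Delta y_{\theta,\xi,p}=\prod_i \epsilon_{i,\xi_i,p_i,\theta}/E_\theta^{\,n-1}$, which has the right marginals and is nonnegative. You are in fact slightly more careful than the paper, since you explicitly handle the degenerate case $E_\theta=0$ and flag the representational issue that $\Delta y$ may have exponential support, proposing the natural implicit encoding.
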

	\begin{proof}
		To show the equivalence between the two LPs, it is sufficient to show that, given a feasible solution to LP~\ref{lp:privategreater}, we can construct a solution to LP \ref{lp:private} with a greater or equal value.
		Let $(y,t,\gamma)$ be a solution to LP~\ref{lp:privategreater}.
		For every $i \in \N$, $\xi_i \in \Xi^q_i$, $p_i \in P^b$, let $\delta_{i,\xi_i,p_i} \coloneqq \xi_i(\theta) t_{i,\xi_i,p_i}-\sum_{\xi' \in \Xi^q:\xi'_i=\xi_i} \sum_{p' \in \mathcal{P}^b:p'_i=p_i}  y_{\theta,\xi',p'}$.
		Moreover, let $\iota=\mu_\theta-  \sum_{\xi \in \Xi^q, p \in \mathcal{P}^b}  y_{\theta,\xi,p}$.
		First, we show that $\sum_{\xi_i \in \Xi^q_i,p_i \in P^b}  \delta_{i,\xi_i,p_i} =\iota$ for every $i \in \N$.
		For each $i \in \N$, it holds
		\begin{align*}
		\sum_{\xi_i \in \Xi^q_i} \sum_{p_i \in P^b}  \delta_{i,\xi_i,p_i}& = \sum_{\xi_i \in \Xi^q_i}\sum_{p_i \in P^b} \left[\xi_i(\theta) t_{i,\xi_i,p_i}- \sum_{\xi' \in \Xi^q:\xi'_i=\xi_i} \sum_{p' \in \mathcal{P}^b:p'_i=p_i} y_{\theta,\xi',p'}\right] \\
		& = \sum_{\xi_i \in \Xi^q_i } \xi_i(\theta) \gamma_{i,\xi_i}-   \sum_{\xi' \in \Xi^q} \sum_{p' \in \mathcal{P}^b}  y_{\theta,\xi',p'} \\
		&  = \mu_\theta- \sum_{\xi \in \Xi^q} \sum_{p \in \mathcal{P}^b}  y_{\theta,\xi,p} = \iota.
		\end{align*}
		Next, we build a feasible solution $(\bar y,t,\gamma)$ to LP~\ref{lp:private} with $\bar y_{\theta,\xi,p}\ge y_{\theta,\xi,p}$ for all $\theta \in \Theta$, $\xi \in \Xi^q$, and $p \in \mathcal{P}^b$.
		In particular, we set $\bar y_{\theta,\xi,p} = y_{\theta,\xi,p}+ \frac{\prod_{i \in \N}  \delta_{i,\xi_i,p_i}}{\iota^{n-1}}$.
		Since $\delta_{i,\xi_i,p_i}\ge 0$ and $\iota \ge 0$ by the feasibility of $(y,t,\gamma)$, it holds that $\bar y_{\theta,\xi,p}\ge y_{\theta,\xi,p} $.
		Moreover, for each $i \in \N$, $\theta \in \Theta$, $\xi_i \in \Xi_i^q$, and $p_i \in P^b$, we have that
		\begin{align*}
				\sum_{\xi' \in \Xi^q:\xi'_i=\xi_i} \sum_{p' \in \mathcal{P}^b:p'_i=p_i} \bar y_{\theta,\xi',p'} & =\sum_{\xi' \in \Xi^q:\xi'_i=\xi_i} \sum_{p' \in \mathcal{P}^b:p'_i=p_i}  y_{\theta,\xi',p'}+\sum_{\xi' \in \Xi^q:\xi'_i=\xi_i} \sum_{p' \in \mathcal{P}^b:p'_i=p_i} \frac{\prod_{j \in \N}  \delta_{j,\xi_j,p_j}}{\iota^{n-1}} \\
				& = \sum_{\xi' \in \Xi^q:\xi'_i=\xi_i} \sum_{p' \in \mathcal{P}^b:p'_i=p_i}  y_{\theta,\xi,p'}+ \delta_{i,\xi_i,p_i}=t_{i,\xi_i,p_i},
		\end{align*}
		%
		where the second equality follows from $ \sum_{\xi_j \in \Xi^q_j,p_j \in P^b} \delta_{j,\xi_j,p_j} =\iota$ for every $j \in \N$.
		Since $\textsc{Rev}(\V,p,\xi)\ge 0$ for every $p \in \mathcal{P}^b$ and $\xi \in \Xi^q$, it follows that the value of $(\bar y,t,\gamma)$ is greater than or equal to the value of $(\bar y,t,\gamma)$.
	\end{proof}

	Our PTAS is described in Algorithm~\ref{alg:bisection}.
	
	\begin{algorithm}[H]\caption{PTAS for the private signaling setting}
		\textbf{Input:} Error $\beta$, approximation factor of the approximation oracle $\delta$, $q$ defining the set of q-uniform posteriors, \# of discretization steps $b$, number of samples $K$.
		\begin{algorithmic}[1]
			\State \textbf{Initialization}: $\rho_1\gets0$, $\rho_2\gets1$,  $H \gets\varnothing$,$H^* \gets\varnothing$.
			\State obtain an empirical distribution of valuations $\V^K$ sampling $K$ samples from $\V$.
			\While{ $\rho_2-\rho_1>\beta$}
			\State $\rho_3\gets (\rho_1+\rho_2)/2$
			\State $H \gets\{\text{\normalfont violated constraints returned by the ellipsoid method on } \circled{F} \,\, \text{\normalfont with objective } \rho_3 \,\text{\normalfont and approximation error}\, \delta \}$
			\If  {unfeasible}
			\State $\rho_1\gets \rho_3$
			\State $H^*\gets H$
			\Else{
				\State $\rho_2 \gets \rho_3$}
			\EndIf
			\State \textbf{return} the solution to LP \ref{lp:reducedPrimal} with only constraints in $H^*$
			\EndWhile
		\end{algorithmic}	\label{alg:bisection}
	\end{algorithm}
	
	Since we only have access to an oracle returning samples from distributions $\V$, our algorithm works with empirical distributions $\V^K$ built from $K$ i.i.d samples, for a suitably-defined $K \in \mathbb{N}_{>0}$.
	%
	The algorithm works with LP~\ref{lp:privategreater} for the values $b \in \mathbb{N}_{>0}$ and $q \in \mathbb{N}_{>0}$ defined in the following, finding an approximate solution to LP~\ref{lp:privategreater}.
	Since LP~\ref{lp:privategreater} has an exponential number of variables, the algorithm works by applying the ellipsoid method to its dual formulation, as described in the following.
	
	Let $a \in \mathbb{R}^{n \times |\Theta|}$, $w \in \mathbb{R_{-}}^{ |\Theta| \times  n \times |\Xi_i^q| \times |P^b|}$, and  $c \in \mathbb{R}^{  n \times |\Xi^q_i|}$. Then, the dual of LP \ref{lp:privategreater} reads as follow.
	\begin{subequations}\label{lp:privatedual}
		\begin{align}
			\min_{\substack{a,w\le0,c}} & \,\, \sum_{i \in \N}\sum_{\theta \in \Theta} \mu_\theta a_{i,\theta} \quad \textnormal{s.t.} \\
			&\sum_{i \in \N}   -w_{\theta,i,\xi_i,p_i}\ge \textsc{Rev}(\V^K,p,\xi) & \forall \theta \in \Theta,\forall \xi \in \Xi^q, \forall p \in \mathcal{P}^b \label{eq:dual1}\\
			& \sum_{\theta \in \Theta}  \xi_i(\theta) w_{\theta,i,\xi_i,p_i} + c_{i,\xi_i} \ge 0 & \forall i \in \N,\forall\xi_i \in \Xi^q_i,\forall p_i \in P^b \label{eq:dual2}\\
			& - c_{i,\xi_i}+ \sum_{\theta \in \Theta} \xi_i(\theta) a_{i,\theta} \ge 0 & \forall i \in \N,\forall \xi_i \in \Xi^q_i.  \label{eq:dual3}
		\end{align}
	\end{subequations}
	Notice that, by using the dual of LP~\ref{lp:privategreater} instead of that of LP~\ref{lp:private}, we get the additional constraint $w\le0$.
	LP~\ref{lp:privatedual} has a polynomial number of variables and a polynomial number of Constraints~\eqref{eq:dual2}~and~\eqref{eq:dual3}.
	Hence, to solve the LP using the ellipsoid method we need a separation oracle for Constraints~\eqref{eq:dual1}, which are exponentially many.
	Instead of an exact separation oracle, we use an approximate separation oracle that employs Algorithm~\ref{alg:bbo3} with a suitably-defined $\delta>0$.
	We use a binary search scheme to find a value $\rho^\star\in [0,1]$ such that the dual problem with objective $\rho^\star$ is unfeasible, while the dual with objective $\rho^\star+\beta$ is \emph{approximately} feasible, for some $\beta \geq 0$ defined in the following.
	The algorithm requires $\log(\beta)$ steps and, at each step, it works by determining, for a given value $\rho_3$, whether there exists a feasible solution for the following feasibility problem that we call \circled{F}:
	\begin{subequations}
	\begin{align} \label{lp:privatedual1}
	& \sum_{i \in \N} \sum_{\theta \in \Theta}\mu_\theta a_{i,\theta} \le \rho_3\\
	& \textsc{Rev}(\V,p,\xi)+\sum_{i \in \N}   w_{\theta,i,\xi_i,p_i}\le 0 &  \forall \theta \in \Theta, \forall\xi \in \Xi^q, \forall p \in \mathcal{P}^b \\
	& \sum_{\theta \in \Theta}  \xi_i(\theta) w_{\theta,i,\xi_i,p_i} + c_{i,\xi} \ge 0 & \forall i \in \N, \forall \xi_i \in \Xi^q_i, \forall p_i \in P^b \\
	&- c_{i,\xi_i}+ \sum_{\theta \in \Theta} \xi_i(\theta) a_{i,\theta} \ge 0 & \forall i \in \N, \forall \xi_i \in \Xi^q_i \\
	& w_{\theta,i,\xi_i,p_i}\le 0 & \forall \theta \in \Theta , \forall i \in \N, \forall \xi_i \in \Xi^q_i, \forall  p_i \in P^b.
	\end{align}
	\end{subequations}
	At each iteration of the bisection algorithm, the feasibility problem \circled{F} is solved via the ellipsoid method.
	To do so, we need a separation oracle.
	We use an approximate separation oracle that returns a violated constraint that will be defined in the following.
	The bisection procedure terminates when it determines a value $\rho^\star$ such that on \circled{F} the ellipsoid method returns unfeasible for $\rho^\star$, while returning feasible for $\rho^\star+\beta$.
	Finally, the algorithm solves a modified primal LP \ref{lp:reducedPrimal} with only the subset of variables $y$ in $H^*$, where $H^*$ is the set of violated constraints returned by the ellipsoid method applied on the unfeasible problem with objective $\rho^*$.
	From this solution, we can use Lemma \ref{lm:relaxEqual} to find a solution to LP~\ref{lp:private} with the same value and Lemma~\ref{lm:lpToSignaling} to find a signaling scheme with the same seller's revenue as the value of the solution.
	
	\paragraph{Approximate Separation Oracle.}
	Our separation oracle works as follow. Given a point $(a,w,c)$ in the dual space, we check if a constraint relative to the variables $t$ and $\gamma$ of the primal is violated. Since there are a polynomial number of these constraints, it can be done in polynomial time. If it is the case, we return that constraint.
	Otherwise, our idea is to use Algorithm \ref{alg:bbo3} with a $\delta$ defined in the following to find if a constraint relative to variable $y$ is violated.
	We apply Algorithm \ref{alg:bbo3}, once for each possible state $\theta \in \Theta$.
	In the following, we assume that $\theta$ is fixed and we denote $w_{\theta,i,\xi_i,p_i}$ as $w_{i,\xi_i,p_i}$.
	Algorithm~\ref{alg:bbo3} needs values such that $w_{i,\xi_i,p_i} \in[0,1]$ for all $i \in \N$, $\xi_i \in \Xi^q_i$, and $p_i \in P^b$.
	We show that we can restrict the inputs to $w_{i,\xi_i,p_i} \in[-1,0]$.\footnote{It is easy to see that summing 1 to all the elements of the vector $w$ does not change the problem.}
	By constraint $w\le0$, all $w_{i,\xi_i,p_i} $ are non-positive. Otherwise, this constraint is violated and would have been returned in the first step.
	Moreover, given a vector $w$, we give as input to the oracle a vector $\bar w$ such that $\bar w_{i,\xi_i,p_i}=-1$ whenever $w_{i,\xi_i,p_i}<-1$.

	If for at least one state $\theta$ a violated constraint is found by Algorithm~\ref{alg:bbo3}, we return that constraint, otherwise we return feasible.
	Our separation oracle has two properties. When it returns a violated constraint, the constraint is actually violated. In particular, if $\sum_{i \in \N} \bar w_{\theta,i,\xi_i,p_i}+ \Rev(\V^K,p,\xi)>0$, then $\bar w_{\theta,i,\xi_i,p_i}>-1$ for every $i \in \N$, implying $\bar w_{\theta,i,\xi_i,p_i}= w_{\theta,i,\xi_i,p_i} $ and  $\sum_{i \in \N}  w_{\theta,i,\xi_i,p_i}+ \Rev(\V^K,p\xi)= \sum_{i \in \N}  \bar w_{\theta,i,\xi_i,p_i}+ \Rev(\V^K,p,\xi)>0$,
	Additionally, when the separation oracle returns feasible, then all the constraints relative to the variables $y$ are violated by at most $\delta$.
	Suppose by contradiction that a constraint for a triple $(\theta, \xi,p)$ is violated by more than $\delta$. Then, the separation oracle would have found $\theta^* \in \Theta$, $\xi^*\in \Xi^q$, and $p^* \in \mathcal{P}^b$ such that:
	$\sum_{i \in \N} \bar w_{\theta^*,i,\xi^*_i,p^*_i}+ \Rev(\V^K,p^*,\xi^*)\ge  \sum_{i \in \N} \bar w_{\theta,i,\xi_i,p_i}+ \Rev(\V^K,p,\xi)-\delta\ge \sum_{i \in \N}  w_{\theta,i,\xi_i,p_i}+ \Rev(\V^K,p,\xi)-\delta>0$, and, thus, it would have returned this violated constraint.

	\paragraph{Approximation Guarantee.}
	The algorithm finds a $\rho^*$ such that the problem is unfeasible, \emph{i.e.}, the value of $\rho_1$ when the algorithm terminates, and a value smaller than or equal to $\rho^*+\beta$ such that the ellipsoid method returns feasible, \emph{i.e.}, the value of $\rho_2$ when the algorithm terminates.
	For each possible distribution of the samples $\V^K$, let $OPT^{\V^K}$ be the optimal value of LP~\ref{lp:privatedual}.
	As a first step, we bound the value of $OPT^{\V^K}$.
	In particular, we show that $OPT^{{\V^K}}\le \rho^*+\beta+\delta$.
	Since, the bisection algorithm returns that \circled{F} is feasible with objective $\rho^*+\beta$, it finds a solution $(a,w,c)$ such that all the constraints regarding variables $t$ and $\gamma$ of the primal are satisfied and the approximate separation oracle did not find a violated constraint for the constraints regarding variables $y$.
	We show that $(a,w,c)$ is a solution to the following LP.
	\begin{subequations}\label{lp:privatedual2}
		\begin{align}
			& \sum_{i \in \N}\sum_{\theta \in \Theta} \mu_\theta a_{i,\theta} \le \rho^*+\beta \\
			& \sum_{i \in \N}   -w_{\theta,i,\xi_i,p_i}\ge \textsc{Rev}(\V^k,p,\xi) -\delta &  \forall \theta \in \Theta, \forall\xi \in \Xi^q, \forall p \in \mathcal{P}^b \label{eq:modifieddual1}\\
			& \sum_{\theta\in \Theta}  \xi_i(\theta) w_{\theta,i,\xi_i,p_i} + c_{i,\xi_i} \ge 0 &  \forall i \in \N, \forall \xi_i \in \Xi^q_i, \forall p_i \in P^b \label{eq:modifieddual2} \\
			& - c_{i,\xi_i}+ \sum_{\theta \in \Theta} \xi_i(\theta) a_{i,\theta} \ge 0 & \forall i \in \N, \forall \xi_i \in \Xi^q_i\label{eq:modifieddual3} \\
			&w_{\theta,i,\xi_i,p_i}\le 0 & \forall \theta \in \Theta, \forall i \in \N, \forall \xi_i \in \Xi^q_i,\forall p_i \in P^b.
		\end{align}
	\end{subequations}
	All the Constraints \eqref{eq:modifieddual2} and \eqref{eq:modifieddual3} are satisfied since the separation oracle checks them explicitly, while we have shown that, when the separation oracle return feasible, it holds $\sum_{i \in \N} w_{\theta,i,\xi_i,p_i}+ \textsc{Rev}(\V^K,p,\xi) \le \delta $ for all $ \theta \in \Theta, \xi \in \Xi^q_i,$ and $p \in \mathcal{P}^b$, implying that all the Constraints \eqref{eq:modifieddual1} are satisfied.
	
	Then, by strong duality the value of the following LP is at most $\rho^*+\beta$.
	
	\begin{subequations} \label{lp:modifiedprimal}
		\begin{align}
			\max_{y,t,\gamma} & \,\, \sum_{\theta \in \Theta} \sum_{\xi \in \Xi^q} \sum_{p \in \mathcal{P}^b} y_{\theta,\xi,p} \left( \textsc{Rev}(\V^k,p,\xi)-\delta \right) \quad \textnormal{s.t.}\\
			& \xi_i(\theta) t_{i,\xi_i,p_i} \ge \sum_{\xi' \in \Xi^q:\xi'_i=\xi_i} \sum_{ p' \in \mathcal{P}^b:p'_i=p_i}  y_{\theta,\xi,p} & \forall	\theta \in \Theta, \forall i \in \N, \forall \xi_i \in \Xi^q_i, \forall p_i \in P^b  \\
			&\sum_{p \in P^b} t_{r,\xi_i,p} = \gamma_{i,\xi_i} & \forall i \in \N, \forall \xi_i \in \Xi^q_i  \\
			& \sum_{\xi_i \in \Xi^q_i} \gamma_{i,\xi_i} \xi_i(\theta) =\mu_\theta & \forall i \in \N, \forall \theta \in \Theta. \\ 
		\end{align}
	\end{subequations}
	
	Notice that any solution to LP \ref{lp:privategreater} is also a feasible solution to the previous modified problem.
	Since in any feasible solution $\sum_{\theta \in \Theta} \sum_{\xi \in \Xi^q, p \in \mathcal{P}^b} y_{\theta,\xi,p}=1$ and LP \ref{lp:modifiedprimal} has value at most $\rho^* +\beta$, then $OPT^{\V^K} \le \rho^* +\beta +\delta$.

	Let $H^*$ be the set of constraints regarding variables $y$ returned by the ellipsoid method run with objective $\rho^\star$.
	Since the ellipsoid method with the approximate separation oracle returns unfeasible, by strong duality LP \ref{lp:privategreater} with only the variables $y$ relative to constraints in $H^*$ has value at least $\rho^*$. Moreover, since the ellipsoid method guarantees that $H^*$ has polynomial size, the LP can be solved in polynomial time.
	Hence, solving the following LP, \emph{i.e.}, the primal LP \ref{lp:privategreater} with only the variables $y$ in $H^*$, we can find a solution with value at least $\rho^*$.
	
	\begin{subequations} \label{lp:reducedPrimal}
		\begin{align}
			\max_{\substack{\gamma, t, y\ge 0}} & \,\, 
			\sum_{\theta \in \Theta} \sum_{(\xi,p):(\theta,\xi,p)\in H^*} y_{\theta,\xi,p} \textsc{Rev}(\V^K,p,\xi) \quad \textnormal{s.t.} \\
			& \xi_i(\theta) t_{i,\xi_i,p_i} \ge\sum_{\xi',p': (\theta,\xi',p') \in H^*:  \xi'_i=\xi_i, p'_i=p_i,}  y_{\theta,\xi',p'} & \forall \theta \in \Theta, \forall i \in \N, \forall \xi_i \in \Xi_i^q, \forall p_i \in P^b \\
			&\sum_{p_i \in P^b} t_{i,\xi_i,p_i} = \gamma_{i,\xi_i} & \forall i \in \N, \forall \xi_i \in \Xi^q_i   \\
			& \sum_{\xi_i \in \Xi^q_i} \gamma_{i,\xi_i} \xi_i(\theta) =\mu_\theta &  \forall i \in \N,\forall \theta \in \Theta  .
		\end{align}
	\end{subequations}

	To conclude the proof, we show that replacing the distributions $\V$ with $ \V^K$, the expected revenue decreases by a small amount.
	Let $y^{APX,\V^K}$ be the solution returned by the algorithm with distribution $\V^K$.
	Moreover, let $y^{OPT,\V^K}$ be the optimal solution to LP \ref{lp:privategreater} with distributions $\V^K$ and $y^{OPT, \V}$ the optimal solution with distributions $\V$.
	Finally, let  $OPT$ be the value of the optimal private signaling scheme with distributions $\V$.
	
	Let $\epsilon$ be a constant defined in the following and  $K=8 \log(2|\Xi^q||\mathcal{P}^b|/\epsilon)/\epsilon^2$.
	By Hoeffding bound, for every $\xi \in \Xi^q$ and $p \in \mathcal{P}^b$, with probability at least $1-e^{-2K/(\epsilon/4)^2}=1-|\Xi^q||\mathcal{P}^b|\epsilon/4$, 
	\[|\textsc{Rev}(\V,p,\xi)|-|\textsc{Rev}(\V^K,p,\xi)|\le \epsilon/4.\] By the union bound, it implies that with probability at least $1-\epsilon/2$, $|\textsc{Rev}(\V,p,\xi)|-\textsc{Rev}(\V^K,p,\xi)|\le \epsilon/2$ for every $\xi \in \Xi^q$ and $p \in \mathcal{P}^b$.
	Then, with probability $1- \epsilon/2 $, 
	\begin{align*}
	 &\sum_{\theta \in \Theta} \sum_{\xi \in \Xi^q} \sum_{p \in \mathcal{P}^b}  y^{APX,\V^K}_{\theta,\xi,p} \textsc{Rev}(\V,p,\xi)\ge\\
	 & \sum_{\theta \in \Theta} \sum_{\xi \in \Xi^q} \sum_{p \in \mathcal{P}^b}   y^{APX,\V^K}_{\theta,p,\xi} \textsc{Rev}(\V^K,p,\xi) -\epsilon/4\ge \\
	 &\sum_{\theta \in \Theta} \sum_{\xi \in \Xi^q} \sum_{p \in \mathcal{P}^b}   y^{OPT, \V^K}_{\theta,p,\xi} \textsc{Rev}(\V^K,p,\xi) -\epsilon/4 -\delta-\beta \ge \\
	 & \sum_{\theta \in \Theta} \sum_{\xi \in \Xi^q} \sum_{p \in \mathcal{P}^b}   y^{OPT, \V}_{\theta,p,\xi} \textsc{Rev}(\V^K,p,\xi) -\epsilon/4 -\delta-\beta \ge \\
	& \sum_{\theta \in \Theta} \sum_{\xi \in \Xi^q} \sum_{p \in \mathcal{P}^b}   y^{OPT, \V}_{\theta,p,\xi} \textsc{Rev}\V,p,\xi) -\epsilon/2 -\delta-\beta  \ge \\
	& OPT-\epsilon/2 -\delta-\beta-\eta
	\end{align*}
		Hence, with probability  $1-\epsilon/2$, the solution has value at least $OPT-\epsilon/2-\delta-\beta -\eta$ and 
		\[\mathbb{E}_{\V^K} \left[\sum_{\theta \in \Theta} \sum_{\xi \in \Xi^q} \sum_{p \in \mathcal{P}^b}  y^{APX,\V^K}_{\theta,\xi,p} Rev(\V,p,\xi) \right] \ge OPT-\epsilon/2-\epsilon/2-\delta-\beta -\eta=OPT-\epsilon-\delta-\beta -\eta,\]
	 where the expectation is on the sampling procedure.
	
	To conclude the proof, to have an approximation error $\lambda$, we can set $b$ and $q$ such that the approximation error in Lemma \ref{lm:privateQuniform} is $\eta=\lambda/4$ and $\epsilon=\delta=\beta=\lambda/4$.
	Finally, given an approximate solution to LP \ref{lp:reducedPrimal}, Lemma \ref{lm:relaxEqual} provides a solution to LP \ref{lp:private} with greater or equal value and Lemma~\ref{lm:lpToSignaling} recover a signaling scheme with the same revenue.
\end{proof}

\end{document}